\newcommand{\A}{\mathbf{A}}
\renewcommand{\norm}[2]{\left\|#1\right\|_{#2}}
\newcommand{\median}{\textnormal{median}}
\newcommand{\linf}[1]{\|#1\|_{\infty}}
\newcommand{\lp}[1]{\|#1\|_{p}}
\renewcommand{\epsilon}{\varepsilon}
\newcommand{\Proj}{\mathbb{P}}
\newcommand{\levSample}{\textsf{senSample}}
\newcommand{\start}{\text{start}}
\renewcommand{\SC}{\textnormal{\textbf{\textsf{SC}}}}
\newcommand{\PL}{\textnormal{\textbf{\textsf{PL}}}}
\newcommand{\textLARGE}{\textnormal{\textsc{Large}}}
\newcommand{\textSMALL}{\textnormal{\textsc{Small}}}
\newcommand{\textGOOD}{\textnormal{\textsc{Good}}}
\newcommand{\textBAD}{\textnormal{\textsc{Bad}}}
\newcommand{\Est}{\textnormal{\textsf{Est}}}
\newcommand{\keys}{\textnormal{keys}}
\newcommand{\sk}{{\rm{sk}}}
\newcommand{\key}{{\rm{key}}}
\newcommand{\val}{{\rm{val}}}
\newcommand{\B}{{\mathcal{B}}}
\newcommand{\vals}{{\rm{vals}}}
\renewcommand{\A}{{\mathcal{A}}}
\newcommand{\Ignore}{\textnormal{\textsc{Ignore}}}
\begin{document}

\allowdisplaybreaks

\begin{titlepage}
\author{
Hossein Esfandiari \\ Google Research \\ \texttt{esfandiari@google.com}{}
\and
Praneeth Kacham \\ CMU \\ \texttt{pkacham@cs.cmu.edu}
\and
Vahab Mirrokni \\ Google Research \\ \texttt{mirrokni@google.com}
\and
David P. Woodruff \\ CMU \\ \texttt{dwoodruf@cs.cmu.edu}
\and
Peilin Zhong \\ Google Research \\ \texttt{peilinz@google.com}
}
  
\title{Optimal Communication for Classic Functions in the Coordinator Model and Beyond}
\date{}

\maketitle

\begin{abstract}
In the coordinator model of communication with $s$ servers, given an arbitrary non-negative function $f$, we study the problem of approximating the sum $\sum_{i \in [n]}f(x_i)$ up to a $1 \pm \varepsilon$ factor. Here the vector $x \in \R^n$ is defined to be $x = x(1) + \cdots + x(s)$, where $x(j) \ge 0$ denotes the non-negative vector held by the $j$-th server. A special case of the problem is when $f(x) = x^k$ which corresponds to the well-studied problem of $F_k$ moment estimation in the distributed communication model. We introduce a new parameter $c_f[s]$ which captures the communication complexity of approximating $\sum_{i\in [n]} f(x_i)$ and for a broad class of functions $f$ which includes $f(x) = x^k$ for $k \ge 2$ and other robust functions such as the Huber loss function, we give a two round protocol that uses total communication $c_f[s]/\varepsilon^2$ bits, up to polylogarithmic factors. For this broad class of functions, our result improves upon the communication bounds achieved by  Kannan, Vempala, and Woodruff (COLT 2014) and Woodruff and Zhang (STOC 2012), obtaining the optimal communication up to polylogarithmic factors in the minimum number of rounds. We show that our protocol can also be used for approximating higher-order correlations.  
Our results are part of a broad framework for optimally sampling from a joint distribution in terms of the marginal distributions held on individual servers. 

Apart from the coordinator model, algorithms for other graph topologies in which each node is a server have been extensively studied. We argue that directly lifting protocols from the coordinator model to other graph topologies will require some nodes in the graph to send a lot of communication. Hence, a natural question is the type of problems that can be efficiently solved in general graph topologies. We address this question by giving communication efficient protocols in the so-called personalized CONGEST model for solving linear regression and low rank approximation by designing \emph{composable sketches}. Our sketch construction may be of independent interest and can implement any importance sampling procedure that has a \emph{monotonicity} property.

\end{abstract}
\setcounter{page}{0}
\thispagestyle{empty}

\end{titlepage}
\tableofcontents
\setcounter{page}{0}
\thispagestyle{empty} 
\clearpage

\section{Introduction}
In modern applications data is often distributed across multiple servers and communication is a bottleneck. This motivates minimizing the communication cost for solving classical functions of interest. A standard model of distributed computation is the {\it coordinator} or {\it message-passing} model, in which there are $s$ servers, each with an input, and a coordinator with no input. All communication goes through the coordinator, who decides who speaks next. This models arbitrary point-to-point communication up to a multiplicative factor of $2$ and an additive $\lceil \log_2 (\text{\# of servers})\rceil$ bits per message, since the coordinator can forward a message from server $i$ to server $j$ provided $i$ indicates which server should receive the message. The coordinator model is also useful in distributed functional monitoring \cite{CMY11}. Numerous functions have been studied in the coordinator model, such as bitwise operations on vectors \cite{PVZ16}, set-disjointness \cite{BEOPV13}, graph problems \cite{WZ17}, statistical problems \cite{WZ17}, and many more.

In the coordinator model, we measure the efficiency of a protocol by looking at the following: (i) the overall number of bits of communication required by the protocol and (ii) the number of rounds of communication in the protocol. In each round of communication, each of the servers sends a message to the coordinator based on their input and messages from the coordinator in previous rounds. Based on the messages received from all the servers in this round and earlier rounds, the coordinator sends a possibly distinct message to each of the servers. Thus, in a protocol with one round, each of the servers sends a message to the coordinator based \emph{only} on their inputs and the coordinator has to compute the output based only on these messages. We additionally assume that all the servers and the coordinator have access to a shared source of randomness which they can use to sample shared random variables. 

We revisit classical entrywise function approximation, which includes the $F_k$ moment estimation as a special case. Surprisingly, despite the simplicity of the coordinator model and the optimal bounds known for such functions in related models such as the streaming model, the optimal communication complexity of computing or approximating such functions in the coordinator model is open.  In the $F_k$ moment estimation
problem there are $s$ players, the $j$-th of which holds a non-negative vector\footnote{If the vectors are allowed to have negative entries, then there is an $\Omega(n^{1-2/k})$ bit lower bound on the amount of communication when $k > 2$ \cite{BJKS04}.} $x(j) \in \mathbb{R}^n$, and the goal is to, with constant probability, output a $(1+\epsilon)$-multiplicative approximation to $\|x\|_k^k = \sum_{j=1}^n |x_i|^k$, where $x = \sum_{j=1}^s x(j)$.
A large body of work has studied  $F_k$-moment estimation in a  stream, originating with work of Alon, Matias, and Szegedy \cite{alon1999space}. 

In the coordinator model, Cormode et al. \cite{CMY11} initiated the study of this problem and gave a protocol achieving $\tilde{O}(n^{1-2/k} \poly(s/\epsilon))$ bits of total communication for $k \geq 2$, where we use $\tilde{O}$ to suppress polylogarithmic factors in $n$. They optimize their bound for $k = 2$ and achieve a quadratic dependence on $s$. They also achieve protocols for $k \in \{0,1\}$ with a linear dependence on $s$. We note that their algorithms hold in the more general distributed functional monitoring framework. The upper bound was improved by Woodruff and Zhang to $\tilde{O}(s^{k-1}/\epsilon^{\Theta(k)})$ bits in \cite{WZ12}, which showed that a polynomial dependence on $n$ is not needed for $k > 2$. Unfortunately the $1/\epsilon^{\Theta(k)}$ multiplicative factor is prohibitive, and Kannan, Vempala, and Woodruff \cite{kannan2014principal} claimed an improved bound of $\tilde{O}((s^{k-1} + s^3)/\epsilon^3)$ bits. However, there appears to be a gap in their analysis which is not clear how to fix \cite{pers}. We describe this gap in Appendix \ref{app:bug}. Their algorithm for general function approximation can be used to obtain an algorithm which uses $\tilde{O}(s^k/\varepsilon^2)$ bits of communication. Thus the current state of the art is a $\min(\tilde{O}(s^{k-1}/\epsilon^{\Theta(k)}), \tilde{O}(s^k/\varepsilon^2))$ upper bound from \cite{WZ12, kannan2014principal} and the $\Omega(s^{k-1}/\epsilon^2)$ lower bound of \cite{WZ12}. Even if the work of \cite{kannan2014principal} can be fixed, it would not match the existing lower bounds, and an important open question is:
\begin{center}
    {\bf Question 1:} What is the complexity of $F_k$-estimation in the coordinator model?
\end{center}
As $F_k(x) = \sum_{i=1}^n |x_i|^k$ is just one example of an entrywise function $\sum_{i=1}^n f(x_i)$ for a non-negative function $f:\mathbb{R}_{\geq 0} \rightarrow \R_{\geq 0}$, it is natural to ask what the complexity of approximating $\sum_{i=1}^n f(x_i)$ is in terms of $f$. Indeed, a wide body of work originating with that of Braverman and Ostrovsky \cite{BO10} does exactly this for the related data stream model. In the coordinator model, Kannan, Vempala, and Woodruff \cite{kannan2014principal} attempt to characterize the complexity of $f$ by defining a parameter they call $c_{f,s}$, which is the smallest positive number for which 
$
f(y_1 + \cdots + y_s) \leq c_{f,s} (f(y_1) + \cdots + f(y_s))$ for all $y_1, \ldots, y_s \geq 0.$

For general functions $f$, they give a protocol which uses $O(s^2 c_{f,s}/\epsilon^2)$ bits of communication up to polylogarithmic factors. Assuming that the function $f$ is super-additive, i.e., $f(y_1 + y_2) \ge f(y_1) + f(y_2)$ for all $y_1, y_2 \ge 0$, their upper bound can be further improved to $O(s c_{f,s}/\varepsilon^2)$. They also give an $\Omega(c_{f,s}/\epsilon)$ communication lower bound. 

We note that a number of interesting entrywise functions have been considered in optimization contexts, such as the M-Estimators (see, e.g., \cite{CW15}), and a natural such estimator is the Huber loss function $f(x) = x^2/(2\tau)$ for $|x| \leq \tau$, and $f(x) = |x|-\tau/2$ otherwise. It is not hard to show $c_{f,s} = s$ for the Huber loss function, and so the best known upper bound is $\tilde{O}(s^2/\epsilon^2)$ bits while the lower bound is only $\Omega(s/\epsilon)$. Given the gap between the upper and lower bounds, it is unclear if the parameter $c_{f,s}$ captures the complexity of approximating the sum $\sum_i f(x_i)$. We observe that the communication complexity of the problem is better captured by a new parameter $c_f[s]$ defined as the smallest number for which
\begin{align}
    f(y_1 + \cdots + y_s) \le \frac{c_f[s]}{s}(\sqrt{f(y_1)} + \cdots + \sqrt{f(y_s)})^2\quad \text{for all}\quad y_1, \ldots, y_s \ge 0.
    \label{eqn:cfs-definition}
\end{align}
Since, $(\sqrt{f(y_1)} + \cdots + \sqrt{f(y_s)})^2 \ge f(y_1) + \cdots + f(y_s)$, we obtain $c_f[s] \le c_{f, s} \cdot s$ and using the Cauchy-Schwarz inequality, we can show that $c_{f,s} \le c_f[s]$. We consider the following question:
\begin{center}
    {\bf Question 2:} What is the complexity of entrywise function approximation in the coordinator model? Can one characterize the complexity completely in terms of $c_{f}[s]$?
\end{center}

Beyond the coordinator model, a number of works have looked at more general network topologies, such as \cite{CRR14,CLL17}.  One challenge in a more general network topology and without a coordinator is how to formally define the communication model. 
One of the most popular distributed frameworks in the past few decades is the CONGEST model~\cite{peleg2000distributed}.
In this model, each server is a node in the network topology and in each round it can send and simultaneously receive a possibly distinct message of bounded size\footnote{Usually $O(\log n)$ bits where $n$ is the number of nodes in the graph.} to and from its neighbors, as defined by the edges in the network, which is an unweighted undirected graph. Given the restriction on the size of the messages that can be sent in each round, efficiency of a protocol in this model is in general measured by the number of rounds required by an algorithm. 

Efficient CONGEST algorithms have been developed for shortest paths~\cite{ghaffari2018improved,haeupler2018faster}, independent sets~\cite{luby1986simple,ghaffari2019distributed}, matchings~\cite{ahmadi2018distributed,barenboim2016locality}, minimum spanning trees~\cite{kutten1998fast,ghaffari2017distributed}, and so on. A related distributed computation model is the on-device public-private model~\cite{eem19} that provides a framework for distributed computation with privacy considerations.

It is not hard to see that one cannot estimate $F_k$ of the sum of vectors as efficiently in the CONGEST model as one can in the coordinator model with $s$ servers, even if one has a two-level rooted tree where each non-leaf node has $s$ children. Indeed, the root and the $s$ nodes in the middle layer can have no input, at which point the problem reduces to the coordinator model with $s^2$ servers, and for which a stronger $\Omega((s^2)^{k-1})$ lower bound holds \cite{WZ12} and hence the average communication per node in the tree must be $\Omega(s^{2k-4})$ bits, as opposed to an average of $O(s^{k-2})$ bits per server in the coordinator model with $s$ servers. A natural question is which functions can be estimated efficiently with a more general network topology. Inspired by connections between frequency moment algorithms and randomized linear algebra (see, e.g., \cite{woodruff2013subspace}), we study the feasibility of communication efficient algorithms for various linear algebra problems in this setting. 

We assume that each server $v$ in the graph holds a matrix $A_v \in \R^{n_v \times d}$. Accordingly, we restrict the size of messages in each round to be $\poly(d, \log n, \log \max_v n_v)$ bits where $n$ is the number of nodes in the graph. 

We define a generalization of the CONGEST model called the ``{personalized CONGEST model}''. In this model, given a distance parameter $\Delta$, we would like, after $\Delta$ communication rounds, for each node to compute a function of all nodes reachable from it in at most $\Delta$ steps, which is referred to as its $\Delta$-hop neighborhood. Note that taking $\Delta$ to be the diameter of the graph, we obtain the CONGEST model. Such personalized solutions are desired in several applications such as recommendation systems and online advertisements.
For instance, in an application that wants to recommend a restaurant to a user, the data from devices in a different country with no relation to that particular user may not provide useful information and may even introduce some irrelevant bias.
Distributed problems with personalization have been studied in a line of work~\cite{park2007location,ceklm15,eem19}; however, to the best of our knowledge none of the prior work studies linear algebraic problems.

Ideally one would like to ``lift" a communication protocol for the coordinator model to obtain algorithms for the personalized CONGEST model. However, several challenges arise. The first is that if you have a protocol in the coordinator model which requires more than $1$-round, one may not be able to compute a function of the $\Delta$-hop neighborhood in only $\Delta$ rounds. Also, the communication may become too large if a node has to send different messages for each node in say, its $2$-hop neighborhood. Another issue is that in applications, one may be most interested in the maximum communication any node has to send, as it may correspond to an individual device, and so cannot be used for collecting a lot of messages and forwarding them. More subtly though, a major issue arises due to multiple distinct paths between two nodes $u$ and $v$ in the same $\Delta$-hop neighborhood. Indeed, if a server is say, interested in a subspace embedding of the union of all the rows held among servers in its $\Delta$-hop neighborhood, we do not want to count the same row twice, but it may be implicitly given a different weight depending on the number of paths it is involved in. 

Distributed algorithms for problems in randomized linear algebra, such as regression and low rank approximation, are well studied in the coordinator model ~\cite{balcan2015distributed,boutsidis2016optimal,kannan2014principal,feldman2020turning,balcan2014improved}, but they do not work for communication networks with a general topology such as social networks, mobile communication networks, the internet, and other networks that can be described by the CONGEST model. Hence we ask:
\begin{center}
{\bf Question 3:} For which of the problems in numerical linear algebra can one obtain algorithms in the personalized CONGEST model? 
\end{center}

\subsection{Our Results}
To answer Question 2, we give a two round protocol for approximating $\sum_i f(x_i)$ for non-negative functions which have an ``approximate-invertibility'' property.
\begin{definition}[Approximate Invertibility]
    We say that a function $f$ satisfies approximate invertibility with parameters $\theta, \theta', \theta'' > 1$ if all the following properties hold: (i) \emph{super-additivity}: $f(x + y) \ge f(x) + f(y)$ for all $x, y \ge 0$,
    (ii) for all $y \ge 0$, $f(\theta' y) \ge \theta \cdot f(y)$, and
    (iii) for all $y \ge 0$, $f(y/4 \cdot \sqrt{\theta} \cdot \theta') \ge f(y)/\theta''$.
\end{definition}
The super-additivity and the fact that $f(y) \ge 0$ for all $y$ implies that $f(0) = 0$. We note that any increasing convex function $f$ with $f(0) = 0$ satisfies the super-additivity property and hence it is not a very strong requirement. It is satisfied by $f(x) = x^k$ for $k \ge 1$ and the Huber loss function with any parameter. For such functions, we prove the following theorem:
\begin{theorem}[Informal, Theorem~\ref{thm:main-estimation}]
Let there be $s$ servers with the $j$-th server holding a non-negative $n$-dimensional vector $x(j)$, and define $x = x(1) + \cdots + x(s)$. Given a function $f$ which satisfies the approximate invertibility property with parameters $\theta, \theta', \theta'' > 1$, our two round protocol approximates $\sum_i f(x_i)$ up to a $1 \pm \varepsilon$ factor with probability $\ge 9/10$. Our protocol uses a total communication of $O_{\theta, \theta', \theta''}(c_f[s]/\varepsilon^2)$ bits up to polylogarithmic factors in the dimension $n$.
\end{theorem}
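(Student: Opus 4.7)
The plan is to use an importance-sampling estimator whose proposal distribution is driven directly by the inequality defining $c_f[s]$. Set $a_i := \sum_{j=1}^s \sqrt{f(x(j)_i)}$ and $Z := \sum_i a_i^2$, and sample $i$ with probability $q_i := a_i^2/Z$. This is natural for two reasons: $a_i^2$ is exactly the quantity that \eqref{eqn:cfs-definition} uses to upper bound $f(x_i)$ (up to the factor $c_f[s]/s$), and $a_i$ is additive across servers, so each server can contribute $\sqrt{f(x(j)_i)}$ using only its local vector.

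\textbf{Variance bound.} First I would show that $m = \Theta(c_f[s]/\varepsilon^2)$ samples from $q$ suffice. Write $F_j := \sum_i f(x(j)_i)$ and $F := \sum_i f(x_i)$. Super-additivity (property (i)) gives $\sum_j F_j \le F$, while Cauchy--Schwarz applied coordinate-by-coordinate gives $Z \le s \sum_j F_j \le sF$. Combined with \eqref{eqn:cfs-definition}, $f(x_i)/q_i \le (c_f[s]/s)\,Z \le c_f[s]\,F$ for every $i$. A direct second-moment calculation then yields $\mathbb{E}_{i\sim q}[(f(x_i)/q_i)^2] \le c_f[s] F^2$, and by Chebyshev the mean of $m = \Theta(c_f[s]/\varepsilon^2)$ i.i.d.\ samples is a $(1\pm\varepsilon)$-approximation of $F$ with constant probability.

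\textbf{Two-round realization.} The distribution $q$ is $\ell_2^2$-sampling on the vector $a$, which is the coordinate-wise sum of the server-held vectors $(\sqrt{f(x(j)_i)})_i$. In round~1 each server sends a composable $\ell_2$-sampling sketch of its local vector constructed from shared randomness (for concreteness a Count-Sketch together with precision-sampling exponentials $u_i$); the coordinator sums these sketches to obtain a sketch of $a$ and reads off a sampled set $S$ of size $\tildeO(m)$. In round~2 the coordinator broadcasts $S$ and each server returns (discretized) local values $x(j)_i$ for $i\in S$, after which the coordinator computes $x_i = \sum_j x(j)_i$ and evaluates $(1/m)\sum_{i\in S} f(x_i)/q_i$.

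\textbf{Main obstacle.} The delicate step is keeping the round~2 communication at $\tildeO(c_f[s]/\varepsilon^2)$ bits rather than $s$ times that, since a naive protocol would have every server respond for every $i\in S$. To remove the extra factor of $s$ I would invoke properties (ii) and (iii) of approximate invertibility: from the round-1 sketch we already know $a_i$ approximately for each $i\in S$ and hence the scale of $f(x_i)$, and (ii)--(iii) imply that $f(x_i)$ is determined up to $1\pm\varepsilon$ by the contributions $x(j)_i$ above a per-sample threshold set at that scale---contributions below the threshold together change $F$ by at most $\varepsilon F$ after a dyadic union bound over $O(\log n)$ magnitude scales. Each server therefore transmits only its above-threshold entries in round~2, giving $\tildeO(1)$ bits per sample on average, and the $\theta,\theta',\theta''$-dependence is absorbed into the hidden constant. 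A standard median-of-means amplification finally raises the success probability from constant to $9/10$.
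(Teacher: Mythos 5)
Your variance analysis is clean and correct: sampling $i$ with probability $q_i \propto a_i^2$ where $a_i = \sum_j \sqrt{f(x(j)_i)}$, and using the definition of $c_f[s]$ together with super-additivity, does give $\mathbb{E}[(f(x_i)/q_i)^2] \le c_f[s]\,F^2$, so $\Theta(c_f[s]/\varepsilon^2)$ samples suffice for the estimator. However, the communication accounting is off by a factor of $s$, and this is exactly the obstacle the paper itself identifies and then circumvents with a genuinely different argument.

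The gap is in both rounds. In round~1, the proposal distribution $q$ depends on the \emph{global} vector $a = \sum_j a^{(j)}$, so every server must contribute sketch information for every one of the $m$ samples: $m$ independent $\ell_2$-samplers need $\tilde{O}(m)$ words per server, for a total of $\tilde{O}(s\cdot c_f[s]/\varepsilon^2)$. In round~2, the thresholding step does not generally reduce the per-sample cost to $\tilde{O}(1)$. Consider $f(x)=x^k$, $n$ coordinates, and $x(j)_i = 1/s$ for every $j,i$; then $x_i = 1$, every server's contribution is exactly $x_i/s$, no server's entry can be dropped without changing $f(x_i)$ by more than a constant factor (property (iii) only permits dropping pieces whose total is a small fraction of $x_i$, but here each server's piece is already a $1/s$ fraction), and consequently every sampled coordinate requires responses from all $s$ servers. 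This drives round~2 to $\Theta(s\cdot |S|) = \Theta(s\cdot c_f[s]/\varepsilon^2)$ as well. The paper explicitly considers this ``sample from an additively-defined distribution, then query'' route at the end of its techniques overview and rejects it for precisely this reason, noting it yields $O(s\cdot c_f[s]/\varepsilon^4)$ communication.

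The paper's actual proof of Theorem~\ref{thm:main-estimation} takes a different route. It never constructs an importance-sampling estimator from $m = \Theta(c_f[s]/\varepsilon^2)$ independent draws. Instead it uses max-stability of exponentials: with shared $\be_1,\dots,\be_n$, the single random variable $\max_i \be_i^{-1} f(x_i)$ is distributed as $(\sum_i f(x_i))/\be$, so the \emph{median} of only $O(1/\varepsilon^2)$ independent copies already gives a $(1\pm\varepsilon)$ estimate. The communication is then spent recovering the exact value of $\max_i \be_i^{-1} f(x_i)$ in each of the $O(1/\varepsilon^2)$ iterations. The crucial point is that each iteration distributes its work: every server sends only $N = \tilde{O}(c_f[s]/s)$ \emph{local} samples (drawn from its own distribution $\propto \be_i^{-1} f(x(j)_i)$, not a sketch of the global $a$), totaling $\tilde{O}(c_f[s])$ per iteration; Lemma~\ref{lma:i-star-is-sampled} uses the definition of $c_f[s]$ to show $i^*$ is among these. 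The coordinator then narrows to a set $\PL$ of size $\polylog(n)$ (via the bucketed estimators $\hat{x}_i$) and pays only $O(s\,\polylog(n))$ in round~2 per iteration. Since $c_f[s]\ge s$, both rounds fit inside $\tilde{O}(c_f[s]/\varepsilon^2)$ overall. Your variance computation is a correct ingredient of the naive approach, but the crux of the theorem is avoiding $\Theta(c_f[s]/\varepsilon^2)$ independent global samples altogether, and the proposal does not do that.
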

For $f(x) = x^k$, we see that $c_{f}[s] = s^{k-1}$ and can take $\theta = 2$, $\theta' = 2^{1/k}$ and $\theta'' = 2 \cdot 8^{k/2}$. Hence our algorithm uses $s^{k-1}/\varepsilon^2$ bits of total communication up to multiplicative factors depending on $k$ and $\log n$,  thus matching the known lower bounds from \cite{WZ12}. We additionally show that any one round algorithm must use $\Omega(s^{k-1}/\varepsilon^k)$ bits of communication and hence our protocol achieves the optimal communication bounds using the fewest possible number of rounds, thus resolving Question 1 completely. We summarize the results for $F_k$-moment estimation in Table \ref{fig:1}. 
\begin{table}[t]
    \centering
    \begin{tabular}{l  c  c}\toprule
         $F_k$ estimation algorithm                 & Upper Bound                         & Lower Bound            \\ \midrule
         2-round algorithm & $\tilde{O}(s^{k-1}/\varepsilon^2)$ (Corollary~\ref{cor:fk-estimation})  & $\Omega(s^{k-1}/\varepsilon^2)$ \cite{WZ12}\\
         1-round algorithm  & $\tilde{O}(s^{k-1}/\varepsilon^{\Theta(k)})$  \cite{WZ12} & $\tilde{\Omega}(s^{k-1}/\varepsilon^k)$ (Theorem~\ref{thm:Fk-one-round-lowerbound})\\
         \bottomrule
    \end{tabular}
    \caption{Upper and lower bounds on the total communication for $F_k$ approximation in the coordinator model. As mentioned, the claimed $\tilde{O}(\varepsilon^{-3}(s^{k-1} + s^3))$ upper bound for $F_k$ in \cite{kannan2014principal} has a gap.}
    \label{fig:1}
\end{table}

We can also use our protocol to approximate ``higher-order correlations'' studied by Kannan, Vempala and Woodruff \cite{kannan2014principal}. In this problem, each server $j$ holds a set of non-negative vectors $W_j$ and given functions $f : \R_{\ge 0} \rightarrow \R_{\ge 0}$ and $g : \R^k_{\ge 0} \rightarrow \R_{\ge 0}$, the correlation $M(f, g)$ is defined as
\begin{align}
    M(f, g, W_1, \ldots, W_k) \coloneqq \sum_{i_1, \ldots, i_k\, \text{distinct}}f\left(\sum_{j}\sum_{v \in W_j}g(v_{i_1}, \ldots, v_{i_k})\right).
    \label{eqn:defn-correlation}
\end{align}
Kannan, Vempala and Woodruff \cite{kannan2014principal} note that this problem has numerous applications and give some examples. Our protocol for estimating $\sum_i f(x_i)$ in the coordinator model extends in a straightforward way to the problem of estimating higher-order correlations. We show the following result:
\begin{theorem}[Informal, Theorem~\ref{thm:higher-order}]
Let there be $s$ servers with the $j$-th server holding a set of $n$-dimensional non-negative vectors $W_j$. Given a function $f$ that has the approximate invertibility property with parameters $\theta, \theta', \theta'' > 1$ and a function $g : \R^{k}_{\ge 0} \rightarrow \R_{\ge 0}$, our randomized two round protocol approximates $M(f, g, W_1, \ldots, W_k)$ up to a $1 \pm \varepsilon$ factor with probability $\ge 9/10$. The protocol uses a total of
$
    O_{\theta, \theta', \theta''}\left({c_f[s]}\poly(k, \log n)/\varepsilon^2\right)
$
bits of communication.
\end{theorem}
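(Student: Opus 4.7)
The plan is to reduce the higher-order correlation estimation problem to the entrywise function approximation problem in Theorem~\ref{thm:main-estimation} by reinterpreting $M(f,g,W_1,\ldots,W_k)$ as $\sum_t f(x_t)$ on a tensor-indexed vector. Concretely, let $T \subseteq [n]^k$ be the set of $k$-tuples of distinct indices and for each server $j$ define the non-negative vector $x(j) \in \R^T$ by
\begin{align*}
x(j)_{(i_1,\ldots,i_k)} = \sum_{v \in W_j} g(v_{i_1},\ldots,v_{i_k}).
\end{align*}
Setting $x = x(1)+\cdots+x(s)$, the definition \eqref{eqn:defn-correlation} immediately gives $M(f,g,W_1,\ldots,W_k) = \sum_{t \in T} f(x_t)$. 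Thus, up to replacing the universe $[n]$ with $T$, the higher-order correlation problem is a special case of entrywise function estimation on a non-negative vector that splits across $s$ servers.

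Next, I would invoke Theorem~\ref{thm:main-estimation} on these implicit vectors. The hypotheses of approximate invertibility of $f$ are assumptions on $f$ alone, and $c_f[s]$ as defined in \eqref{eqn:cfs-definition} depends only on $f$ and $s$, not on the dimension of the input vectors. The theorem therefore yields a two-round protocol with total communication $O_{\theta,\theta',\theta''}(c_f[s]/\varepsilon^2)$ up to polylogarithmic factors in the index-set size $|T| \le n^k$; since $\log|T| = O(k \log n)$, this polylogarithmic factor is absorbed into the claimed $\poly(k,\log n)$ term, together with an additional $O(k \log n)$ overhead per transmitted coordinate label since each label is a $k$-tuple rather than a single index.

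The main obstacle, and essentially the only non-trivial part of the argument, is that $x(j)$ has up to $n^k$ coordinates and therefore cannot be explicitly materialized on server $j$; the protocol of Theorem~\ref{thm:main-estimation} must be executed with only implicit access. I would verify that every primitive the protocol uses can be implemented from $W_j$ directly: (i) for any specific tuple $t$, server $j$ can evaluate $x(j)_t$ in $O(|W_j| \cdot k)$ time; (ii) the importance-sampling/sensitivity-style scores used by the two-round protocol only need aggregates of $x(j)$ (such as $\sum_t f(x(j)_t)$ and individual coordinate values at sampled locations), which can be computed or estimated from $W_j$ using standard streaming reductions for higher-order moments; and (iii) the sketching and deduplication steps operate on tuples of labels and values, and hence carry over unchanged once the label format is $O(k \log n)$ bits. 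Once these implementation details are confirmed, the communication bound of Theorem~\ref{thm:main-estimation} instantiated on the index set $T$ yields exactly the stated bound $O_{\theta,\theta',\theta''}(c_f[s] \poly(k,\log n)/\varepsilon^2)$.
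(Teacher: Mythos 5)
Your high-level reduction is exactly the paper's: define the tensor-indexed vectors $x(j)_{(i_1,\ldots,i_k)} = \sum_{v\in W_j} g(v_{i_1},\ldots,v_{i_k})$ (the paper calls these $w(j)$), observe $M(f,g,W_1,\ldots,W_s)=\sum_{t}f(x_t)$, invoke Theorem~\ref{thm:main-estimation} on a universe of size $\le n^k$, and absorb $\log(n^k)=k\log n$ and the $k$-tuple label overhead into the $\poly(k,\log n)$ factor. You also correctly flag that the vectors cannot be materialized. But your implementation discussion has a genuine gap.

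The protocol of Theorem~\ref{thm:main-estimation} depends on \emph{shared exponential random variables} $\be_t$, one per coordinate $t$, that must be evaluated consistently across all $s$ servers. Here there are $\Theta(\binom{n}{k}k!/\varepsilon^2)$ of them. Your item (ii) drops these variables entirely: you write that the protocol needs aggregates like $\sum_t f(x(j)_t)$, but what each server actually needs is $\sum_t \be_t^{-1}f(x(j)_t)$ and samples from $\Pr[\bi=t]\propto \be_t^{-1}f(x(j)_t)$, both of which require the server to be able to reproduce $\be_t$ for every tuple $t$. Naively this costs $\Theta(n^k)$ space per server, and ``standard streaming reductions for higher-order moments'' does not say how the $\be_t$ are generated or stored. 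The paper's fix is two-pronged: (a) replace independent exponentials by exponentials generated from a seed of Nisan's PRG of length $O(k^2\log^2(n/\varepsilon))$ (Appendix~\ref{sec:nisan}), so the shared randomness is a short seed, and a separate small-space tester argument shows the relevant events in Lemma~\ref{lma:max-is-significant} and the median concentration from Section~\ref{subsec:exponentials} still hold up to a small additive error; and (b) have each server enumerate tuples $(i_1,\ldots,i_k)$ in lexicographic order, regenerate $\be_t$ from the seed on the fly, and use reservoir sampling to draw from the target distribution in $O(1)$ words of working space. Without step (a) there is no small-space way to share/reproduce the exponentials, and without step (b) the per-server space is $\Omega(n^k)$; both are needed, and neither is supplied by your outline. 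With these two pieces added, your argument matches the paper's proof.
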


Our algorithm for approximating $\sum_i f(x_i)$ in the coordinator model is inspired by a one round protocol for sampling from an ``additively-defined distribution'' in the coordinator model, which can also approximate the sampling probability of the index that was sampled. This protocol lets us sample, in one round, from very general distributions such as the leverage scores. Our result is stated in the following theorem.
\begin{theorem}[Informal, Theorem~\ref{thm:additive-sampling}]
Given that each server $j$ has a non-negative vector $p(j) \in \R^{n}$, define $q_i \coloneqq \sum_j p_i(j)$. There is a randomized algorithm which outputs FAIL with probability $\le \frac{1}{\poly(n)}$ and conditioned on not outputting FAIL, it outputs a coordinate $\hat{i}$ along with a value $\hat{q}$ such that for all $i \in [n]$
\begin{align*}
    \Pr[\hat{i} = i\ \text{and}\ \hat{q} \in (1 \pm O(\varepsilon))\frac{q_i}{\sum_i q_i}] = (1 \pm O(\varepsilon))\frac{q_i}{\sum_i q_i} \pm \frac{1}{\poly(n)}.
\end{align*}
The algorithm uses one round and has a total communication of $O(s\polylog(n)/\varepsilon^2)$ words.
\end{theorem}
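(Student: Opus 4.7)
The plan is to reduce sampling from the additive distribution $(q_i/Q)_i$ (with $Q \coloneqq \sum_i q_i$) to a single distributed heavy-hitter recovery task, by combining exponential-priority sampling with a linear sketch. Using shared randomness, all parties draw independent $E_1,\ldots,E_n \sim \mathrm{Exp}(1)$. Each server $j$ forms the priority-scaled vector $w(j) \in \R^n$ with $w_i(j) \coloneqq p_i(j)/E_i$ and sends to the coordinator (i) a CountSketch of $w(j)$ of size $O(\polylog(n)/\varepsilon^2)$ words and (ii) the scalar $\|p(j)\|_1 = \sum_i p_i(j)$, which is exact since $p(j) \ge 0$. Because CountSketch is linear, the coordinator sums the sketches to obtain a CountSketch of $w \coloneqq \sum_j w(j) = (q_i/E_i)_i$, and sums the scalars to obtain $Q$ exactly. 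It then runs heavy-hitter recovery on the combined sketch to obtain $\hat i \coloneqq \arg\max_i w_i$ together with an estimate $\tilde w_{\hat i}$, and outputs $\hat i$ and $\hat q \coloneqq E_{\hat i}\, \tilde w_{\hat i}/Q$.

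For correctness, the classical identity that $E_i/q_i \sim \mathrm{Exp}(q_i)$ and hence $\arg\max_i q_i/E_i = i$ with probability exactly $q_i/Q$ shows that, conditional on the sketch identifying the argmax correctly, the sampling distribution is already exact. The remaining error sources are the chance that CountSketch mis-identifies the heaviest coordinate and the multiplicative error of $\tilde w_{\hat i}$. I would control both via a ``super-heavy-hitter'' event $\mathcal G$ on the shared randomness: after truncating each $E_i$ to $[1/n^c, n^c]$ (which perturbs the law by at most $1/\poly(n)$ in total variation by a union bound), the memoryless property of the exponentials and a direct tail calculation on the minimum of $n$ exponentials give that $w_{i^*} \ge \varepsilon^{-1}\|w_{\{-i^*\}}\|_2$ except on an event of probability $O(\varepsilon)$. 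On $\mathcal G$ the standard CountSketch guarantee $|\tilde w_{\hat i} - w_{\hat i}| \le \varepsilon\|w_{\{-i^*\}}\|_2$ yields $\tilde w_{\hat i} \in (1 \pm O(\varepsilon))\,w_{\hat i}$, so $\hat q \in (1 \pm O(\varepsilon))\,q_{\hat i}/Q$; conditioning on $\{\arg\max w = i\} \cap \mathcal G$ and summing over the failure modes gives the claimed bound
\[
\Pr\!\left[\hat i = i \text{ and } \hat q \in (1 \pm O(\varepsilon))\tfrac{q_i}{Q}\right] = (1 \pm O(\varepsilon))\tfrac{q_i}{Q} \pm \tfrac{1}{\poly(n)}.
\]

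The main obstacle is establishing event $\mathcal G$: bounding $\|w_{\{-i^*\}}\|_2$ and proving that $w_{i^*}$ dominates it by an $\varepsilon^{-1}$ factor. A priori $\mathbb{E}\|w\|_2^2 = \infty$, so the argument has to be carried out at the level of the truncated conditional distribution; here the clean algebra of exponential priorities---as opposed to uniform priorities---is what lets one extract the $\varepsilon^{-1}$ separation without paying $\poly(s)$ overhead. Moreover, the conditional probability $\Pr[\mathcal G \mid \arg\max w = i]$ must be shown to be $1 - O(\varepsilon) - 1/\poly(n)$ \emph{uniformly in $i$}, so that the multiplicative-in-$q_i/Q$ form of the claim (rather than a merely additive one) is preserved. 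Once $\mathcal G$ is in hand the rest is a mechanical application of the linearity of CountSketch and a union bound; the one-round total communication of $O(s\polylog(n)/\varepsilon^2)$ words is immediate from the per-server sketch size.
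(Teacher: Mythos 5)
Your architecture matches the alternative the paper sketches informally in Section~1.2 (deterministic $\alpha$-incoherent matrices with an $\ell_\infty/\ell_1$ recovery guarantee), whereas the paper's actual proof of Theorem~\ref{thm:additive-sampling} avoids linear sketches entirely: each server sends $O(\varepsilon^{-2}\log^5 n)$ i.i.d.\ samples from its local priority distribution and the coordinator recovers the argmax and its mass by counting (Algorithm~\ref{alg:sampling-general}). But your proposal has a genuine gap: the event $\mathcal G$, namely $w_{i^*}\ge\varepsilon^{-1}\|w_{\{-i^*\}}\|_2$ with probability $1-O(\varepsilon)$, is false. Already for the uniform instance $q_i\equiv 1$, $w_i = 1/E_i$, memorylessness makes the gap $E_{(2)}-E_{(1)}$ between the two smallest exponentials distributed as $\mathrm{Exp}(n-1)$ and independent of $E_{(1)}\sim\mathrm{Exp}(n)$, so the ratio $w_{(1)}/w_{(2)} = 1 + (E_{(2)}-E_{(1)})/E_{(1)}$ is $\Theta(1)$ with constant probability and exceeds $\varepsilon^{-1}$ only with probability $\Theta(\varepsilon)$. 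Hence $\|w_{\{-i^*\}}\|_2 \ge w_{(2)} = \Theta(w_{i^*})$ with constant probability --- the residual's $\ell_2$ mass is typically \emph{comparable to} $w_{i^*}$, not $\varepsilon$-small, and a direct $\ell_2/\ell_2$ heavy-hitter argument cannot close.

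The correct statement, which both the paper's sketch and its sampling proof hinge on, is the $\ell_1$ bound of Lemma~\ref{lma:max-is-significant}: $\|w\|_1 \le C\log^2 n\cdot w_{i^*}$ with probability $1-1/\poly(n)$. Your CountSketch argument can be repaired by routing through this via the standard tail bound $\|w_{-k}\|_2\le\|w\|_1/\sqrt{k}$: CountSketch with $O(k\log n)$ cells has pointwise error $O(\|w_{-k}\|_2/\sqrt{k})\le O(\|w\|_1/k)\le O\bigl((C\log^2 n/k)\,w_{i^*}\bigr)$, so taking $k=\Theta(\varepsilon^{-1}\log^2 n)$ gives error $O(\varepsilon w_{i^*})$ within the $O(\polylog(n)/\varepsilon^2)$ budget; you then use the $(1+\varepsilon)$-separation between the max and the second-max --- which \emph{does} hold with probability $1-O(\varepsilon)$ --- to certify the recovered index. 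As written, however, the proof rests on a false $\ell_2$ norm dominance and does not go through.
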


For Question 3, in the personalized CONGEST model, we show how to compute $\Delta$-hop subspace embeddings, approximately solve $\ell_p$-regression, and approximately solve low rank approximation efficiently. For example, for $\ell_p$-subspace embeddings and regression, we achieve $\tilde{O}(\Delta^2 n d^{p/2+2}/\epsilon^2)$ words of communication, which we optimize for $p = 2$ to $\tilde{O}(\Delta^2 n d^2)$ communication, where $n$ is the total number of nodes in the graph. Our algorithms are also efficient in that each node sends at most $\tilde{O}(\Delta \cdot d^{\max(p/2+2, 3)})$ communication to each of its neighbors in each of the rounds, which we optimize to $\tilde{O}(\Delta \cdot d^2)$ for $p = 2$. That is, the maximum communication per server is also small. We remark that in a round, the information sent by a node to all its neighbors is the same. Finally, our protocols are efficient, in that the total time, up to logarithmic factors is proportional to the number of non-zero entries across the servers, up to additive poly$(d/\epsilon)$ terms. Our results hold more broadly for sensitivity sampling for any optimization problem, which we explain in Section~\ref{subsec:techniques}. Our results in the CONGEST model are summarized in Table \ref{tab:congest}.
\begin{small}
\begin{table}[t]
    \centering
    \begin{tabular}{l l}\toprule
        Problem & Per node Communication in each round\\ \midrule
         $\ell_p$ subspace embeddings ($p \ne 2$) &  $\tilde{O}_{\Delta}(d^{\max(p/2+2, 3)}\varepsilon^{-2})$ (Theorem~\ref{thm:main-theorem-congest-embeddings})\\
         $\ell_p$ regression ($p \ne 2$) &  $\tilde{O}_{\Delta}(d^{\max(p/2+2, 3)}\varepsilon^{-2})$ (Section~\ref{subsec:congest-regression})\\ 
         $\ell_2$ subspace embeddings  &  $\tilde{O}_{\Delta}(d^2\varepsilon^{-2})$ (Theorem~\ref{thm:main-theorem-congest-embeddings})\\
         $\ell_2$ regression  &  $\tilde{O}_{\Delta}(d^{2}\varepsilon^{-2})$ (Section~\ref{subsec:congest-regression})\\ 
         Rank-$k$ Frobenius LRA & $\tilde{O}_{\Delta}(kd\varepsilon^{-3})$ (Section~\ref{subsec:congest-lra})\\ \bottomrule
    \end{tabular}
    \caption{Per node communication in each of the $\Delta$ rounds to solve the problems over data in a $\Delta$ neighborhood of each node in the CONGEST model}
    \label{tab:congest}
\end{table}    
\end{small}

\subsection{Our Techniques}\label{subsec:techniques}
\paragraph{Sampling from ``Additively-Defined'' Distributions.} At the heart of our results for approximating $\sum_{i=1}^n f(x_i) = \sum_{i=1}^n f(\sum_{j=1}^s x_i(j))$ is a general technique to sample from an ``additively-defined'' distribution using only one round of communication. To obtain our tightest communication bounds, our algorithm for approximating $\sum_{i=1}^n f(x_i)$ does not use this protocol in a black-box way, but the techniques used are similar to the ones used in this protocol. In this setting, the $j$-th server holds a non-negative vector $p(j) \in \R^{n}_{\ge 0}$ and the coordinator wants to sample from a distribution over $[n]$ with the probability of sampling $i \in [n]$ being proportional to 
$
    q_i = p_i(1) + \cdots + p_i(s).
$

Additionally if the coordinate $i$ is sampled by the coordinator, the coordinator also needs to be able to estimate the probability with which $i$ is sampled. This is an important requirement to obtain (approximately) unbiased estimators in applications involving importance sampling. If the coordinator just wants to sample from the distribution, it can run the following simple protocol: first, each server $j$ samples a coordinate $\bi_j$ from its local distribution, i.e., $\Pr[\bi_j = i] = p_i(j)/\sum_i p_i(j)$. Each server sends the coordinate $\bi_j$ along with the quantity $\sum_i p_i(j)$ to the coordinator. The coordinator then samples a random server $\bj$ from the distribution $\Pr[\bj = j] = (\sum_i p_i(j))/\sum_{j'}(\sum_i p_i(j'))$ and then takes $\bi_{\bj}$, i.e., the coordinate sent by the server $\bj$, to be the sample. We note that
\begin{align*}
    \Pr[\bi_\bj = i] = \sum_j \Pr[\bj = j] \Pr[\bi_j = i] = \sum_j \frac{\sum_{i'}p_{i'}(j)}{\sum_j \sum_{i'}p_{i'}(j)} \cdot \frac{p_i(j)}{\sum_{i'}p_{i'}(j)} = \frac{\sum_j p_i(j)}{\sum_j\sum_{i'}p_{i'}(j)} = \frac{q_i}{\sum_{i'}q_{i'}}.
\end{align*}
Hence the distribution of $\bi_{\bj}$ is \emph{correct}. But notice that there is no easy way for the coordinator to estimate the probability of sampling $\bi_{\bj}$ since it may not receive any information about this coordinate from the other servers. Therefore, to obtain the probability with which $\bi_{\bj}$ was sampled, the coordinator needs another round of communication, which we wish to avoid.

We will now give a protocol that can also approximate the sampling probabilities with only one round of communication. The protocol we describe here is a simpler version of the full protocol in Section~\ref{sec:generalized_sampler}. Consider the following way of sampling from the distribution in which $i$ has a probability proportional to $q_i$. Let $\be_1, \ldots, \be_n$ be independent standard exponential random variables. Let
$
    i^* = \argmax_{i \in [n]} \be_i^{-1}q_i = \argmax_{i \in [n]}\be_i^{-1}(p_i(1) + \cdots + p_i(s)).
$
By standard properties of the exponential random variable, we have
$
    \Pr[i^* = i] = \frac{\sum_{j}p_i(j)}{\sum_j \sum_{i'} p_{i'}(j)} = \frac{q_i}{\sum_{i'}q_{i'}}.
$

Hence, the random variable $i^*$ also has the \emph{right} distribution. The advantage now is that we can additionally show that with a high probability,
$
    \sum_i\be_i^{-1}\left(\sum_j p_i(j)\right) \le (C\log^2 n) \cdot \be_{i^*}^{-1}\sum_{j}p_{i^*}(j).
$
In other words, if we define $s$ vectors $r(1), \ldots, r(s)$, one at each of the servers, such that $r_i(j) = \be_i^{-1}p_i(j)$ and define $r = \sum_{j=1}^s r(j)$, then the coordinate $r_{i^*}$ is a $1/(C\log^2 n)$ $\ell_1$ \emph{heavy hitter}, as in $r_{i^*} \ge \|r\|_1/(C\log^2 n)$. 

Suppose a deterministic sketch matrix $S \in \R^{m \times n}$ is an $\alpha$-incoherent matrix for $\alpha = \varepsilon/(4C\log^2 n)$. Here we say that a matrix $S$ is $\alpha$-incoherent if all the columns of $S$ have unit Euclidean norm and for any $i \ne i' \in [n]$, $|\la S_{*i}, S_{*i'}\ra| \le \alpha$. Nelson, Nguyen and Woodruff \cite{nelson2014deterministic} give constructions of such matrices with $m = O(\log(n)/\alpha^2)$ rows and show that for any vector $x$,
$
    \|x - \T{S}Sx\|_{\infty} \le \alpha \|x\|_1.
$

Now suppose that each server $j$ computes the vector $S \cdot r(j)$ and uses $O(\polylog(n)/\varepsilon^2)$ words of communication to send the vector $S \cdot r(j)$ to the coordinator. The coordinator receives the vectors $S \cdot r(1), \ldots, S \cdot r(s)$ and  computes the vector $S \cdot r = S \cdot r(1) + \cdots + S \cdot r(s)$ and can then compute a vector $r' = \T{S}Sr$ satisfying
$
    \linf{r - r'} \le \alpha\|r\|_1.
$

Conditioned on the event that $\|r\|_1 \le (C\log^2 n) \cdot r_{i^*}$, as $\alpha$ is set to be $\varepsilon/(4C\log^2 n)$, we obtain that
$
    r'_{i^*} = (1 \pm \varepsilon/4) \cdot r_{i^*}
$
and for all $i \ne i^*$, we have $r'_i \le r_i + (\varepsilon/4) \cdot r_{i^*}$. Using properties of exponential random variables, we can also show that with probability $\ge 1 - O(\varepsilon)$, $\max_i r_i \ge (1 + \varepsilon) \cdot\text{second-max}_i\, r_i$. Conditioned on this event as well, we obtain that for all $i \ne i^*$, $r'_i < r_{i^*}(1 - \varepsilon/4) \le r'_{i^*}$. Hence the largest coordinate in the vector $r'$ is exactly $i^*$ and the value $r_{i^*}$ can be recovered up to a $1 \pm \varepsilon/4$ factor. Overall, the coordinator can find a coordinate $\hat{i}$ and a value $\hat{q} = \be_{\hat{i}} \cdot (r_{\hat{i}})'$ such that for all $i \in [n]$,
\begin{align}
    \Pr[\hat{i} = i\ \text{and}\ \hat{q} \in (1 \pm \varepsilon/4)q_{i}] = \frac{q_i}{\sum_{i'}q_{i'}} \pm O(\varepsilon).
    \label{eqn:overview-generalized-distribution}
\end{align}
Thus this procedure lets us sample from a distribution close to that of the desired distribution while at the same time lets us approximate the probability of the drawn sample. In Section~\ref{sec:generalized_sampler}, we give a protocol, which instead of using incoherent matrices, uses an ``$\ell_1$ sampling-based" algorithm to compute the coordinate $i^*$ and approximate the value $q_{i^*}$. We end up obtaining a sample $\hat{i} \in [n]$ and a value $\hat{q}$ for which 
\begin{align}
    \Pr[\hat{i} = i\, \text{and}\, \hat{q} = (1\pm \varepsilon)\frac{q_i}{\sum_{i'}q_{i'}}] = (1 \pm \varepsilon)\frac{q_i}{\sum_{i'}q_{i'}} \pm \frac{1}{\poly(n)}.
\end{align}
Additionally, computing the coordinate $\hat{i}$ does not require $\Omega(n)$ time at the coordinator using the protocol in Section~\ref{sec:generalized_sampler}.
\paragraph{Function Sum Approximation.}
Our aim is to obtain an algorithm which approximates the sum $\sum_{i=1}^n f(x_i) = \sum_{i=1}^n f(\sum_{j=1}^s x_i(j))$ up to a $1 \pm \varepsilon$ factor for a non-negative, super-additive function $f$. The protocol described above for sampling from additively-defined distributions shows that correlating randomness across all the servers using exponential random variables is a powerful primitive in this context.

In the function sum approximation problem, each server holds a non-negative vector $x(1), \ldots, x(s)$,  respectively, and the coordinator wants to approximate $\sum_i f(x_i) = \sum_i f(\sum_j x_i(j))$. Suppose that each server $j$ defines a vector $p(j) \in \R^n$ such that $p_i(j) = f(x_i(j))$. Then the above described protocol can be used to sample approximately from the distribution in which $i$ has probability
    $(1\pm \varepsilon)\frac{f(x_i(1)) + \cdots + f(x_i(s))}{\sum_{i'}\sum_j f(x_{i'}(j))} \pm \frac{1}{\poly(n)}.$
Using super-additivity and the definition of the parameter $c_f[s]$, we obtain that the above probability is at least
$
    (1\pm \varepsilon)\frac{f(x_i)}{c_f[s] \cdot \sum_{i'}f(x_{i'})} \pm \frac{1}{\poly(n)}.
$
This distribution is off by a multiplicative $c_f[s]$ factor from the distribution we need to sample coordinates from in order to estimate $\sum_i f(x_i)$ with a low variance. Thus, we need $O(c_f[s]/\varepsilon^2)$ samples from the above distribution, and this overall requires a communication of $O(s \cdot c_f[s]/\varepsilon^4)$ bits, which is more than the total communication required by the protocol of \cite{kannan2014principal}. 

Additionally, when the coordinate $i$ is sampled, the protocol lets us estimate $f(x_i(1)) + \cdots + f(x_i(s))$, but the quantity we want to construct an estimator for is the value $f(x_i(1) + \cdots + x_i(s))$, which requires an additional round of communication and defeats the point of obtaining a protocol that can approximate the sampling probability in the same round. Overall, a protocol based on this procedure requires $O(s \cdot c_f[s] \cdot \polylog(n)/\varepsilon^4)$ bits of communication and two rounds of communication.

Thus, we need a different technique to obtain algorithms which can approximate $\sum_i f(x_i)$ more efficiently. We continue to use exponential random variables to correlate the randomness across the servers but instead heavily use the max-stability property  to design our protocol. Let $\be_1, \ldots, \be_n$ be independent standard exponential random variables. The max-stability property asserts that for any $f_1, \ldots, f_n \ge 0$, the random variable $\max(f_1/\be_1,\ldots,f_n/\be_n)$ has the same distribution as $(\sum_{i=1}^n f_i)/\be$ where $\be$ is also a standard exponential random variable. Now using the \emph{median} of $O(1/\varepsilon^2)$ independent copies of the random variable $(\sum_{i=1}^n f_i)/\be$, we can compute an approximation of $\sum_{i=1}^n f_i$ up to a $1 \pm \varepsilon$ factor with high probability. Thus, in the coordinator model, if there is a protocol that can find the value of the random variable $\max_i f(x_i)/\be_i$, then we can use it to compute a $1 \pm \varepsilon$ approximation to $\sum_i f(x_i)$ by running the protocol for $O(1/\varepsilon^2)$ independent copies of the exponential random variables. From here on, we explain how we construct such a protocol.

Given the exponential random variables $\be_1, \ldots, \be_n$, define
$
 i^* \coloneqq \argmax_{i \in [n]}\be_i^{-1}f(x_i).   
$
As mentioned above we would like to find the value of $\be_{i^*}^{-1}f(x_{i^*})$. The ``heavy-hitter'' property we used previously shows that with probability $\ge 1 - \frac{1}{\poly(n)}$,
\begin{align}
    \sum_i \be_i^{-1} f(x_i) \le (C\log^2 n) \cdot \max_i \be_i^{-1}f(x_i).
    \label{eqn:intro-max-is-large}
\end{align}
This is the main property that leads to a communication efficient algorithm that can identify the max coordinate $i^*$ and the value $x_{i^*}$. From here on, condition on the above event. 

Note that we are shooting for a protocol that uses at most $O(c_f[s] \cdot \polylog(n))$ bits of total communication and succeeds in computing the value $\be_{i^*}^{-1}f(x_{i^*})$ with a probability $\ge 1 - \frac{1}{\poly(n)}$. Fix a server $j \in [s]$. Consider the random variable $\bi$ which takes values in the set $[n]$ according to the distribution
$
    \Pr[\bi = i] = {\be_i^{-1}f(x_i(j))}/{\sum_{i \in [n]}\be_i^{-1}f(x_i(j))}.
$
Since the server $j$ knows all the values, $x_1(j), x_2(j), \ldots, x_n(j)$, it can compute the above probability distribution and can sample $N$ (for a value to be chosen later) independent copies $\bi_1, \ldots, \bi_N$ of the random variable $\bi$. Let $\SC_j \coloneqq \set{\bi_1, \ldots, \bi_N}$\footnote{We use the notation $\SC_j$ since it denotes the ``\textsf{\textbf{S}ampled \textbf{C}oordinates}'' at server $j$.} be the set of coordinates sampled by server $j$. Server $j$ then sends the set $\SC_j$ along with the values $x_i(j)$ for $i \in \SC_j$. Note that the coordinator can compute $f(x_i(j))$ since it knows the definition of the function $f$. The total communication from all the servers to the coordinator until this point is $O(s \cdot N)$ words. 

Now define $\SC \coloneqq \bigcup_j\,  \SC_j$ to be the set of coordinates that is received by the central coordinator from all the servers. We will first argue that $i^* \in \SC$ with a large probability if the number of sampled coordinates at each server $N = \Omega(c_f[s] \cdot \polylog(n)/s)$. To prove this, we use the fact that $\be_{i^*}^{-1}f(x_{i^*})$ is significantly large since we conditioned on the event in \eqref{eqn:intro-max-is-large}, and then apply the definition \eqref{eqn:cfs-definition} of the parameter $c_f[s]$.

Conditioned on the event that the coordinate $i^* \in \SC$, a simple algorithm to determine the value of $\be_{i^*}^{-1}f(x_{i^*})$ would be for the coordinator to query the value of $x_i(j)$ for all $i \in \SC$ from all the servers $j$. Unfortunately, this requires a total communication of $\Omega(s \cdot c_f[s] \cdot \polylog(n))$ bits since the set $\SC$ could have a size as large as $s \cdot N = \Omega(c_f[s] \cdot \polylog(n))$. As we are aiming for a protocol that uses about $O(c_f[s] \cdot \polylog(n))$ bits of total communication, the coordinator cannot ask for the values of all the coordinates in the set $\SC$.

If the coordinator finds a smaller subset $\PL\footnote{We use $\PL$ to denote that these set of coordinates have ``\textsf{\textbf{P}robably \textbf{L}arge}'' values of $\be_i^{-1}f(x_i)$.} \subseteq \SC \subseteq [n]$ that contains $i^*$, with a size $|\PL|$ of about $\polylog(n)$, the coordinator can then query for $x_i(j)$ for $i \in \PL$ for all $j$ using only a communication of $O(s \cdot \polylog(n)) = O(c_f[s] \cdot \polylog(n))$ bits since $c_f[s] \ge s$. 

From here on condition on the event that $i^* \in \SC$. To find such a small subset $\PL$, our strategy is to construct $\hat{x}_i$ for each $i \in \SC$ so that the following properties are simultaneously satisfied with probability $\ge 1 - \frac{1}{\poly(n)}$:
    (i) for all $i \in \SC$, $\hat{x}_i \le x_i$ and
    (ii) $\be_{i^*}^{-1}f(\hat{x}_{i^*}) \ge \alpha \cdot \be_{i^*}^{-1}f(x_{i^*})$ for some value $\alpha < 1$.
Define $\Est_i \coloneqq \be_{i}^{-1}f(\hat{x}_i)$ for $i \in \SC$. If the constructed values $\hat{x}_i$ for $i \in \SC$ satisfy these two properties, we have that $\Est_i \le \be_i^{-1}f(x_i)$ for all $i$ and $\Est_{i^*} \ge \alpha \cdot \be_{i^*}^{-1}f(x_{i^*})$ by monotonicity of $f$. 

Recall that we conditioned on the event $\sum_i \be_{i}^{-1}f(x_i) \le (C \log^2 n) \cdot \be_{i^*}^{-1}f(x_{i^*})$ which then implies that the number of coordinates $i$, with $\Est_i \ge \Est_{i^*}$ is at most $(C\log^2 n)/\alpha$. Now, if we define $\PL$ to be the set of coordinates $i \in \SC$ with the $C\log^2 n/\alpha$ largest values, then we have that $i^* \in \PL$. The coordinator can then determine $\be_{i^*}^{-1}f(x_{i^*})$ after a second round of communication from the servers in which it asks for the values of $x_i(j)$ from all the servers $j$ only for the coordinates $i \in \PL$.

Fix a coordinate $i \in \SC$. We will briefly describe how $\hat{x}_i$ is computed by the coordinator using only the information it receives in the first round of communication from the servers. We say $x_i(j)$ is the    \emph{contribution} of server $j$ to $x_i$. If $i \ne i^*$, then we can safely ignore the contribution from any number of servers to $x_i$ when we are trying to construct the estimator $\hat{x}_i$. However, the coordinator does not know what $i^*$ is and cannot arbitrarily drop the contribution from servers when trying to compute the estimator $\hat{x}_i$.

We first define two disjoint subsets of servers $\textLARGE_i$ and $\textSMALL_i$: we put $j \in \textLARGE_i$ if the probability of the coordinate $i$ being sampled at $j$ is \emph{very high} and we put $j \in \textSMALL_i$ if the probability is \emph{very low}. Since the probability of $i$ being sampled at the servers in $\textLARGE_i$ is very large, we can union bound over all $i \in [n]$ and all servers $j \in \textLARGE_i$ and assume that it does happen, and can therefore estimate the contribution of all the servers in $\textLARGE_i$ exactly.

We then argue that the contribution from all the servers $j \in \textSMALL_i$ can be ``ignored'': by ignoring the contribution of a set of servers $S_i$, we mean that $\be_{i^*}^{-1}f(x_{i^*} - \sum_{j \in S_{i^*}}x_{i^*}(j))$ is a significant portion of $\be_{i^*}^{-1}f(x_{i^*})$. Note that we only need to care about how excluding the contribution of $S_{i^*}$ to $x_{i^*}$ affects our ability in obtaining $\hat{x}_{i^*}$ which satisfies the above property and we need not care about the effects of excluding the contribution to $x_i$ from the set of servers $S_i$ for all other $i \ne i^*$ since we are only trying to underestimate such $x_i$.

We then need to estimate the contribution to $x_i$ from the ``intermediate'' servers, i.e., those that are neither \textLARGE{} nor \textSMALL{}. We bucket the servers $j$ based on the values $\sum_i \be_i^{-1}f(x_i(j))$ and $\be_i^{-1}f(x_i(j))/\sum_i \be_i^{-1}f(x_i(j))$ (the probability that a given sample at server $j$ is equal to $i$). We show that the \emph{size} of a bucket is enough to approximate the contribution from all the servers in the bucket towards $x_i$ and argue that if the size is not very large, then the contribution from the bucket can be ignored in the sense described above. Of course, the coordinator can not determine the size of a bucket but given that a server $j$ sampled the coordinate $i$, the coordinator can compute which bucket the server $j$ belongs to. We show that when the size of the bucket is large enough, the number of servers in the bucket that sample the coordinate $i$ is concentrated enough that we can estimate its size. We thus identify which buckets have a large size based on the number of servers in the bucket that sample $i$, and for each bucket that we identify as large, we approximate the size up to constant factor. This can then be used to approximate the contribution of the bucket to $x_i$, and hence obtain the estimator $\hat{x}_i$. 

This wraps up our protocol for computing $\max_i \be_i^{-1}f(x_i)$, with a high probability, and using a total of $O(c_f[s] \cdot \polylog(n))$ bits of total communication across two rounds. Running this protocol concurrently for $O(1/\varepsilon^2)$ copies of the exponential random variables, we can then obtain a $1 \pm \varepsilon$ approximation to $\sum_i f(x_i)$ with high probability.

\paragraph{The Personalized CONGEST Model.}
We focus on constructing $\ell_2$-subspace embeddings, though the arguments here are analogous for $\ell_p$-subspace embeddings by using the $\ell_p$-sensitivities instead of the leverage scores, defined below.
Recall that given an $n \times d$ matrix $A$, we say that a matrix $M$ is $1/2$ subspace embedding for $A$ if
for all vectors $x$,
    $\opnorm{Mx}^2 = (1 \pm 1/2)\opnorm{Ax}^2$.
Subspace embeddings have numerous applications in obtaining fast algorithms for problems such as linear regression, low rank approximation, etc. 

Our main technique is to use the \emph{same} uniform random variables across all the servers to \emph{coordinate} the random samples across all the servers in a useful way. 
For simplicity, assume that there is a node $\alpha$ connected to $s$ neighbors (servers) such that the $j$-th neighbor holds a matrix $A^{(j)} \in \R^{n_j \times d}$, which does not have any duplicate rows. Further assume that we want to compute a subspace embedding for the matrix $A$ obtained by the union of the rows of the matrices $A^{(j)}$, i.e., if a row $v$ is present in say both $A^{(1)}$ and $A^{(2)}$, it appears only once in the matrix $A$. 

Given a matrix $A$, we recall the standard definition of the leverage scores for each of the rows of $A$, as well as the standard construction to obtain a subspace embedding from the leverage scores. If $v$ is a row of the matrix $A$, define the leverage score $\tau_A(v)$ of v to be:
    $\tau_A(v) \coloneqq \max_{x : Ax \ne 0}\frac{|\la v, x\ra|^2}{\opnorm{Ax}^2}$.
For convenience, we define $\tau_A(v) = 0$ if $v$ is not a row of the matrix $A$.
One has that the sum of the leverage scores of all the rows in a matrix $A$ is at most $d$. Now construct the $n \times n$ random diagonal matrix $\bD$ as follows: for each $i \in [n]$ independently set $\bD_{i,i}$ to be $1/\sqrt{p_i}$ with probability $p_i = \min(1, q_i)$ where $q_i \ge C\tau_A(a_i)\log d$ and $a_i$ is the $i$-th row of $A$, and set it to $0$ otherwise. We note that since $\sum_{i \in [n]}\tau_A(a_i) \le d$, the random matrix $\bD$ has at most $O(d\log d)$ nonzero entries with a large probability. One can now show that if $C$ is large enough, then with probability $\ge 99/100$, the matrix $\bD A$ is a $1/2$ subspace embedding for $A$. We note that the matrix $\bD A$ has at most $O(d \log d)$ nonzero rows with a large probability. This algorithm is known as \emph{leverage score sampling}.

Going back to our setting, we want to implement leverage score sampling on the matrix $A$ which is formed by the {\it union} of the rows of the matrices $A^{(1)}, \ldots, A^{(s)}$, i.e., we only count a row once even if it appears on multiple servers. As in the coordinator model where we used the same exponential random variables across servers, a key idea we use in the CONGEST model is \emph{correlated randomness}.  This time, for each possible row $v$ that could be held by any server, all the servers choose the same threshold $h(v)$ uniformly at random from the interval $[0,1]$. We treat $h(\cdot)$ as a fully random hash function mapping row $v$ to a uniform random number from the interval $[0,1]$.

Each server $j$ now computes the $\ell_2$ leverage score of each of the rows in its matrix $A^{(j)}$. Note if $v$ is held by two different servers $j \neq j'$, then it could be that $\tau_{A^{(j)}}(v) \ne \tau_{A^{(j')}}(v)$. 
 
Server $j$ sends all its rows $v$ that satisfy $h(v) \le C\tau_{A^{(j)}}(v)\log d$ to node $\alpha$. Additionally, assume that the server sends the value $\tau_{A^{(j)}}(v)$ along with the row $v$. Since $h(v)$ is picked uniformly at random from the interval $[0,1]$, the probability that a row $v$ is sent to the node $\alpha$ by the $j$-th server is $\min(1, C\tau_j(v)\log d)$. Hence, each server is implementing leverage score sampling of its own rows and sending all the rows that have been sampled to node $\alpha$.

Now we note that if $v$ is a row of the matrix $A^{(j)}$, then $\tau_{A^{(j)}}(v) \ge \tau_{A}(v)$ which directly follows from the definition of leverage scores. For any $v$ that is a row of the matrix $A$, we have $\tau_A(v) \le \max_{j \in [s]}\tau_{A^{(j)}}(v)$. Since $h(v)$ is same across all the servers, then the probability that a row $v$ of the matrix $A$ is sent to the node $\alpha$ is exactly, $\min(1, \max_{j \in [s]}C\tau_{A^{(j)}}(v)\log d)$. For all the rows $v$ that are received by node $\alpha$, it can also compute $\min(1, \max_{j \in [s]}C\tau_{A^{(j)}}(v)\log d)$ since it also receives the values $\tau_{A^{(j)}}(v)$ from all the servers that send the row $v$. Now using the fact that $\tau_A(v) \le \max_{j \in [s]}\tau_{A^{(j)}}(v)$ for all rows $v$ of $A$, the union of rows that are received by the node $\alpha$ correspond to a leverage score sampling of the matrix $A$. Since the node $\alpha$ can also compute the probability that each row it receives was sampled with, it can appropriately scale the rows and obtain a subspace embedding for the matrix $A$. In the above procedure, each server $j$ sends at most $O(d\log d)$ rows and therefore the subspace embedding constructed by the node $\alpha$ for matrix $A$ has at most $O(sd\log d)$ rows.

Even though we described a procedure to compute a subspace embedding of the union of neighboring matrices at a single node $\alpha$, if the nodes send the rows that are under the threshold $h(v)$ to all their neighbors, this procedure can simultaneously compute a subspace embedding at each node for a matrix that corresponds to the union of neighbor matrices of that node. This solves the $1$-neighborhood version of the more general $\Delta$-neighborhood problem we introduced.

Now consider how we can compute a subspace embedding for the distance $2$ neighborhood matrix. Note that we cannot run the same procedure on $1$-neighborhood subspace embeddings to obtain $2$-neighborhood subspace embeddings. We again use the monotonicity of leverage scores. Consider the node $\alpha$ and the matrix $A$ as defined before. Let $v$ be a row that it receives from one of its neighbors in the first round. Suppose the node $\alpha$ can compute $\tau_{A}(v)$, the leverage score of $v$ with respect to the matrix $A$. Now the node $\alpha$ forwards the row $v$ to its neighbors if $h(v) \le C\tau_A(v)\log d$. Suppose $\beta$ is neighbor of the node $\alpha$. Thus, after the second round, the rows received by $\beta$ then correspond to performing a leverage score sampling of the distance-$2$ neighborhood matrix for the node $\beta$ and it can then compute a subspace embedding for that matrix!

Now the main question is how can the node $\alpha$ compute the leverage scores $\tau_A(v)$? By definition of a subspace embedding, we note that if $M$ is a subspace embedding for $A$, then $M$ can be used to approximate $\tau_A(v)$. Since we already saw that the node $\alpha$ can compute a subspace embedding for the $1$-neighborhood matrix, it can also approximate $\tau_A(v)$ for all the rows $v$ that it receives. However an issue arises where we are using the set of rows that $\alpha$ receives in the first round themselves to approximate their leverage scores and therefore their sampling probabilities in the second round. This leads to correlations and it is unclear how to analyze leverage score sampling with such correlations. To solve for this issue, we use two independent hash functions $h_1(\cdot)$ and $h_2(\cdot)$. Using the sample of rows received by the node $\alpha$ when the $1$-neighborhood procedure from above is run using hash function $h_1(\cdot)$, it computes a subspace embedding for the matrix $A$ and then uses this subspace embedding to approximate the leverage scores of the rows that it receives when the $1$-neighborhood procedure run using hash function $h_2(\cdot)$. The node $\alpha$ then uses these approximate leverage scores to decide which of the rows that it received are to be forwarded to its neighbors. This decouples the probability computation and the sampling procedure and the proof of leverage score sampling goes through.

This procedure is similarly extended to compute subspace embeddings for the $\Delta$-neighborhood matrices at each node in the graph. In each round, we use a fresh subspace embedding and use it to compute approximate leverage scores and filter out the rows and then forward them to the neighbors. This way of decorrelating randomness is similar to the {\it sketch switching} method for adversarial streams in \cite{BJWY22}, though we have not seen it used in this context. 

This general procedure of collecting data from neighbors, \emph{shrinking} the collected data and transferring the data to all the neighbors is called ``graph propagation''. Any procedure such as ours above which can handle duplicates can be readily applied in this framework so that each node in the graph can simultaneously learn some statistic/solve a problem over the data in its neighborhood.

\nocite{jayaram2019towards,jayaram2019weighted}

\section{Preliminaries}
\subsection{Notation}
For a positive integer $n$, we use the notation $[n]$ to denote the set $\set{1,\ldots, n}$. For two non-negative numbers $a$ and $b$ and a parameter $0 < \varepsilon < 1$, we use $a = (1 \pm \varepsilon)b$ if $(1-\varepsilon)b \le a \le (1+\varepsilon)b$. Given a vector $x \in \R^n$ and $k \ge 1$, $\|x\|_k = (\sum_{i}|x_i|^k)^{1/k}$ denotes the $\ell_k$ norm of the vector $x$. We use the notation $F_k(x)$ for $\|x\|_k^k$. Given a matrix $A$, $\norm{A}{F}$ denotes the Frobenius norm defined as $(\sum_{i,j}A_{ij}^2)^{1/2}$ and for an index $i$, $A_i$ denotes the $i$-th row of matrix $A$. We use the notation $\tilde{O}(f)$ to hide the factors that are polylogarithmic in the parameters of the problem such as the dimension or $1/\varepsilon$ where $\varepsilon$ is an accuracy parameter, etc. For a parameter $k$, we use the notation $O_k(f)$ to hide the multiplicative factors that depend purely on $k$.
\subsection{Exponential Random Variables}\label{subsec:exponentials}
We use the following properties of exponential random variables extensively throughout the paper.
\begin{enumerate}
    \item If $\be$ is a standard exponential random variable, then $\Pr[\be \ge C\log n] = 1/n^C$ and $\Pr[\be \le t] \le t$ for any $C, t \ge 0$.
    \item If $f_1,\ldots,f_n \ge 0$ are arbitrary and $\be_1,\ldots,\be_n$ are standard exponential random variables, then $\max_i \be_i^{-1}f_i$ has the same distribution as $\be^{-1}(\sum_i f_i)$. This property is referred to as ``max-stability''. This shows that with probability $\ge 1 - 1/n^C, \max_i \be_i^{-1}f_i \ge (\sum_i f_i)/C\log n$.
    \item In the same setting, if $i^* = \argmax_i \be_i^{-1}f_i$, then $\Pr[i^* = i] = f_i/\sum_j f_j$.
    \item Let $\be_1,\ldots,\be_t$ are $t$ independent standard exponential random variables. If $t = \Omega(1/\varepsilon^2)$,  then
    \begin{align*}
        \Pr[(1-\varepsilon)\ln(2) \le \median(\be_1,\ldots,\be_t) \le (1+\varepsilon)\ln(2)] \ge 9/10.
    \end{align*}
    Hence for any $F > 0$, $\median(F/\be_1,\ldots,F/\be_t) \cdot \ln(2) \in [(1-\varepsilon)F, (1+\varepsilon)F]$ with probability $\ge 9/10$.
\end{enumerate}
We use the following lemma extensively which shows that with a probability $\ge 1 - 1/\poly(n)$, $\sum_i \be_i^{-1}f_i \le (C\log^2 n)\max_i \be_i^{-1}f_i$. 
\begin{lemma}
Given any $f_1,\ldots,f_n \ge 0$, let $F = \sum_{i=1}^n f_i$. With probability $\ge 1 - 1/\poly(n)$,
\begin{align*}
    \frac{\max f_i/\be_i}{\sum_{i=1}^n f_i/\be_i} \ge \frac{1}{C\log^2 n}. 
\end{align*}
\label{lma:max-is-significant}
\end{lemma}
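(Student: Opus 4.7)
The plan is to condition on the identity of the argmax $i^\star = \argmax_j f_j/\be_j$, exploit memorylessness of the exponentials to rewrite the ratio $Z/M$ (where $Z = \sum_i f_i/\be_i$ and $M = \max_i f_i/\be_i$) as a sum of bounded random variables involving fresh independent exponentials, and then control that sum with concentration inequalities.

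For the conditioning step, a direct computation of the joint density gives
\[
\Pr[i^\star = i,\ \be_i \in [b,b+db]] \;=\; e^{-bF/f_i}\, db,
\]
so $\be_i$ conditional on $\{i^\star=i\}$ has law $\mathrm{Exp}(F/f_i)$ and by memorylessness, for each $j \ne i$, $\be_j$ is distributed as $f_j \be_i/f_i + \be_j'$ with $\be_j'$ iid standard exponentials. Setting $\gamma := F\be_i/f_i$ (a standard exponential) and $p_j := f_j/F$, algebra yields
\[
 \frac{Z-M}{M} \;=\; \sum_{j\ne i}\frac{\gamma p_j}{\gamma p_j + \be_j'} \;=:\; \sum_{j\ne i} R_j, \qquad R_j \in (0,1).
\]
The task is reduced to showing $\sum_j R_j = O(\log^2 n)$ with probability $1-1/\poly(n)$, uniformly over the conditioning.

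Since $\gamma$ is a standard exponential, $\gamma \le C\log n$ with probability $1-n^{-C}$; assume this. I split the indices based on whether $\be_j' < \gamma p_j$. For the ``heavy'' subset $A = \{j : \be_j' < \gamma p_j\}$, I use the trivial bound $R_j \le 1$, so their contribution is at most $|A|$, a sum of independent Bernoullis with total mean $\sum_j(1-e^{-\gamma p_j}) \le \gamma \le C\log n$. A Chernoff bound gives $|A| = O(\log n)$ with probability $1-1/\poly(n)$. For the ``light'' subset I use $R_j \le \gamma p_j/\be_j'$ and call this contribution $S$. Using $E[\be^{-1}\mathbf{1}(\be\ge a)] = \int_a^\infty e^{-t}/t\,dt \le \log^+(1/a)+O(1)$ together with the entropy inequality $\sum_j p_j \log(1/p_j) \le \log n$, I obtain
\[
E[S \mid \gamma] \;\le\; \sum_j \gamma p_j\bigl(\log^+(1/(\gamma p_j)) + O(1)\bigr) \;=\; O(\gamma \log n) \;=\; O(\log^2 n).
\]
Each summand of $S$ is in $[0,1]$ (by the heavy/light split) with variance dominated by its mean, so Bernstein's inequality upgrades this to $S = O(\log^2 n)$ with probability $1 - 1/\poly(n)$. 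Combining, $Z/M = 1 + |A| + S = O(\log^2 n)$, which is the desired bound after adjusting $C$.

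The main technical obstacle is turning the easy expectation bound on $S$ into a high-probability bound: the $R_j$'s are heterogeneous and, without the heavy/light split, individual summands carry a $1/\be_j'$ factor whose variance is too large for a naive Chernoff. The truncation implicit in restricting $S$ to the light indices is what makes Bernstein's inequality tight enough, and handling the small-probability events $\{\gamma > C\log n\}$ and large $|A|$ separately is what lets us absorb them into the failure probability.
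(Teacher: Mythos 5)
Your proof is correct, but it takes a genuinely different route from the paper's.

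The paper's argument is a direct bucketing argument on the values $f_i/\be_i$: it conditions on all $\be_i$ lying in $[1/n^3, 3\log n]$, partitions the range $[F/(Cn\log n), n^3 F]$ into $O(\log n)$ dyadic intervals, bounds each bucket's occupancy count via Bernstein, and then observes that nonempty high buckets contribute at most $O(\log^2 n)\cdot\max_i f_i/\be_i$ while the tail contributes $O(F\log n)$; combining with $\max_i f_i/\be_i \ge F/(3\log n)$ closes the argument. Your approach instead conditions on the identity of the argmax $i^\star$ and uses the memorylessness of exponentials to get an exact, clean reduction: $(Z-M)/M = \sum_{j\ne i^\star}\gamma p_j/(\gamma p_j + \be_j')$ with $\gamma$ standard exponential and $\be_j'$ fresh iid exponentials. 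You then split by whether $\be_j' < \gamma p_j$, control the heavy count by Chernoff (mean $\le \gamma$), and control the light sum via the entropy bound $\sum_j p_j\log(1/p_j)\le\log n$ plus Bernstein. Both arguments land on $O(\log^2 n)$ with $1-1/\poly(n)$ probability and then average over $i^\star$ (or over the conditioning event). Your route exploits the special structure of exponentials more heavily and yields a structurally transparent decomposition of the ratio $Z/M$; the paper's route is more generic (bucketing plus occupancy counts) and would transfer to distributions without memorylessness. A few cosmetic notes on your writeup: the final identity should read $Z/M \le 1 + |A| + S$, not an equality, since you bounded the heavy $R_j$ by $1$ and the light $R_j$ by $\gamma p_j/\be_j'$; and you should make explicit that the high-probability bounds on $|A|$ and $S$ hold uniformly over $\gamma\le C\log n$ (they do, since both the Chernoff mean $\le\gamma$ and the Bernstein mean $O(\gamma\log n)$ are monotone in $\gamma$), so the final failure probability is a clean union bound over the three bad events.
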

\begin{proof}
Condition on the event that $1/n^3 \le \be_i \le 3\log n$ for all $i$. The event has a probability of at least $1 - 2/n^2$ by a union bound. Let the event be denoted $\calE$. Conditioned on $\calE$, we have $\max_i f_i/\be_i \le n^3F$. Now consider the interval $I = [F/(Cn\log n), n^3F]$. The values $f_i/\be_i$ that are smaller than $F/(Cn\log n)$ contribute at most $F/(C\log n)$ to the sum $\sum_{i=1}^n f_i/\be_i$.

Partition the interval $I$ into $I_1,\ldots,$ such that $I_j = [2^{j-1}(F/(Cn\log n)), 2^{j}(F/(Cn\log n)))$. Note that there are at most $O(\log n)$ such intervals. We will use the fact that conditioned on $\calE$, the random variables $\be_1,\ldots,\be_n$ are still independent. 

Let $\bX_j$ be the number of indices $i$ such that $f_i/\be_i \in I_j$. We have
\begin{align*}
    \sum_{i} f_i/\be_i \le \frac{F}{C\log n} + \sum_{j=0}^{O(\log n)}2^j\bX_j \frac{F}{Cn\log n}.
\end{align*}
Let $\bY_{ij} = 1$ if $f_i/\be_i \ge 2^{j-1}(F/(Cn\log n))$ and $\bY_{ij}= 0$ otherwise. Note that $\bX_j \le \sum_{i=1}^n \bY_{ij}$ and that for a fixed $j$, the random variables $\bY_{ij}$ are mutually independent given $\calE$. We have
\begin{align*}
    \Pr[\bY_{ij} = 1 \mid \calE] = \Pr\left[\be_i \le \frac{Cn\log n}{2^{j-1}} \frac{f_i}{F} \mid \calE\right] \le \frac{2Cn\log n}{2^{j-1}}\frac{f_i}{F}
\end{align*}
since the p.d.f. of the exponential distribution is bounded above by $1$ and $\Pr[\calE] \ge 1/2$. By linearity of expectation, $\E[\sum_{i} \bY_{ij} \mid \calE] \le 2Cn\log n/(2^{j-1})$. By Bernstein's inequality, we get
\begin{align*}
    \Pr[\sum_{i} \bY_{ij} \ge (\E[\sum_{i} \bY_{ij} \mid \calE] + t) \mid \calE] \le \exp\left(-\frac{t^2/2}{2Cn\log n/(2^{j-1} )+ t/3}\right).
\end{align*}
Setting $t = 2Cn\log n/(2^{j-1}) + 6\log n$, we obtain that
\begin{align*}
    \Pr[\sum_{i} \bY_{ij} \ge (\E[\sum_{i} \bY_{ij} \mid \calE] + t) \mid \calE] \le \frac{1}{n^2}.
\end{align*}
We union bound over all $j$ and obtain that for all $j$,
\begin{align*}
    \bX_j \le \frac{2Cn\log n}{2^{j-1}} + 6\log n.
\end{align*}
Additionally, we note that if $\bX_j \ne 0$, then $2^{j}(F/(Cn\log n)) \le 2\max_i f_i/\be_i$. Now,
\begin{align*}
    \sum_i f_i/\be_i \le \frac{F}{C\log n} + \sum_{j : \bX_j \ne 0} 2^j \bX_j \frac{F}{Cn \log n} &\le \sum_{j: \bX_j \ne 0}\left( \frac{2Cn\log n}{2^{j-1}} \frac{2^j F}{Cn\log n} + (6\log n)\max_i f_i/\be_i\right)\\
    &\le O(F \log n) + O(\log^2 n) \max_i f_i/\be_i.
\end{align*}
As $\max_i f_i/\be_i \ge F/(3\log n)$ with probability $\ge 1 - 1/n^2$ conditioned on $\calE$, we get that with probability $\ge 1 - 1/\poly(n)$,
\begin{align*}
    \frac{\max_i f_i/\be_i}{\sum_i f_i/\be_i} \ge \frac{1}{C\log^2 n}.&\qedhere
\end{align*}
\end{proof}
If $\max_i f_i/\be_i \ge (\sum_i f_i/\be_i)/C\log^2 n$, we obtain that for any parameter $T \ge 1$,
\begin{align*}
    |\setbuilder{i}{f_i/\be_i \ge (1/T)\max_i f_i/\be_i}| \le TC\log^2 n
\end{align*}
which shows that there are only a small number of indices for which $f_i/\be_i$ is comparable to $\max_i f_i/\be_i$.

\section{Sampling from Additively-Defined Distributions}\label{sec:generalized_sampler}
Consider the setting of $s$ servers with a coordinator. Assume that the $j$-th server has a non-negative vector $p(j) \in \R^{n}$ and the coordinator wants $N$ independent samples from the distribution supported on $[n]$ with the probability of $i \in [n]$ being
\begin{align}
    q_i \coloneqq \frac{\sum_j p_i(j)}{\sum_{i,j}p_i(j)}.
    \label{eqn:general-distribution}
\end{align}
For each sample, we would ideally also want an estimate to the probability of obtaining that sample. The sampling probabilities are necessary to scale the statistics appropriately to obtain unbiased estimators. A simple one round algorithm for sampling from such a distribution is for all the servers $j$ to send the value $\sum_{i}p_i(j)$ and an index $i$ with probability proportional to $p_i(j)$. The coordinator then picks a server $j$ with probability proportional to $\sum_i p_i(j)$ and chooses the index $i$ that the server $j$ sampled. Note that the coordinate $i$ sampled this way has the desired distribution. While this sampling procedure only requires one round of communication, note that the coordinator does not have the value $\sum_j p_i(j)$ nor even a way to estimate it. Hence if the index $i$ was sampled by the coordinator, it is not clear how to compute (or even approximate) $q_i$ without further rounds of communication. If another round of communication is allowed, the coordinator can send the sampled coordinate $i$ to all the servers which in turn report the values $p_i(j)$ from which the coordinator can compute $q_i$ exactly.

Using exponential random variables, we show that there is a protocol using which the coordinator can sample a coordinate $i$ and approximate the probability $q_i$ with only one round of communication.

Let $\be_1,\ldots,\be_n$ be independent exponential random variables that all the servers (and the coordinator) sample using the shared randomness. Each server $j$ locally computes $\be_i^{-1}p_i(j)$. Let $i^* \coloneqq \argmax_i \be_i^{-1}\sum_j p_i(j)$. As we have seen, the probability distribution of $i^*$ is exactly as in \eqref{eqn:general-distribution} and the advantage now is that with probability $\ge 1-1/\poly(n)$,
\begin{align*}
    \frac{\be_{i^*}^{-1}\sum_{j}p_{i^*}(j)}{\sum_{i,j}\be_i^{-1}p_i(j)} \ge \frac{1}{C\log^2 n}.
\end{align*}
We can further show the following lemma which is helpful to isolate the max coordinate $i^*$.
\begin{lemma}
Let $f_1,\ldots, f_n \ge 0$. and $i^* = \argmax_{i \in [n]}\be_i^{-1}f_i$ where $\be_1,\ldots,\be_n$ are independent standard exponential random variables. We have for all $i \in [n]$ that
\begin{align*}
    \frac{f_i}{\sum_{i'} f_{i'}} \ge \Pr\left[i^* = i\ \text{and}\ \be_{i^*}^{-1}f_{i^*} \ge (1+\varepsilon)\max_{i' \ne i^*}\be_i^{-1}f_{i'}\right] \ge \frac{f_i}{(1+\varepsilon)\sum_{i'}f_{i'}}.
\end{align*}
\end{lemma}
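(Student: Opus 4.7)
The plan is to reduce this to a clean computation with exponential rates. Introduce $E_i := \be_i / f_i$ for each $i$ with $f_i > 0$ (the indices with $f_i = 0$ contribute $\be_i^{-1}f_i = 0$, so they are automatically dominated by any positive $\be_{i'}^{-1}f_{i'}$ and can be ignored in what follows; the case $f_i = 0$ makes the lower bound vacuous). By scaling of exponentials, each $E_i$ is an exponential random variable with rate $f_i$, and the $E_i$ remain mutually independent. In these variables, $\be_i^{-1}f_i = 1/E_i$, so the event
\[
    \mathcal{E}_i := \{\, i^* = i \ \text{and}\ \be_{i^*}^{-1}f_{i^*} \ge (1+\varepsilon)\max_{i' \ne i^*} \be_{i'}^{-1}f_{i'}\,\}
\]
is equivalent to $\{\, E_i \le E_{i'}/(1+\varepsilon)\ \text{for all}\ i' \ne i\,\}$ (since the strict inequality against other coordinates forces $i^* = i$ almost surely).

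Next I would compute $\Pr[\mathcal{E}_i]$ by conditioning on $E_i = t$. Using the density $f_i e^{-f_i t}$ of $E_i$ and the survival function $\Pr[E_{i'} \ge (1+\varepsilon)t] = e^{-f_{i'}(1+\varepsilon)t}$ of each independent $E_{i'}$, independence gives
\[
    \Pr[\mathcal{E}_i] = \int_0^\infty f_i\, e^{-f_i t}\, \prod_{i' \ne i} e^{-f_{i'}(1+\varepsilon) t}\, dt = \frac{f_i}{f_i + (1+\varepsilon)\sum_{i' \ne i} f_{i'}}.
\]

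Finally I would derive the two claimed bounds from this exact formula. Writing the denominator as $(1+\varepsilon)\sum_{i'} f_{i'} - \varepsilon f_i$, the upper bound
\[
    \Pr[\mathcal{E}_i] \le \frac{f_i}{\sum_{i'} f_{i'}}
\]
follows because $(1+\varepsilon)\sum_{i'} f_{i'} - \varepsilon f_i = \sum_{i'} f_{i'} + \varepsilon \sum_{i' \ne i} f_{i'} \ge \sum_{i'} f_{i'}$, and the lower bound
\[
    \Pr[\mathcal{E}_i] \ge \frac{f_i}{(1+\varepsilon)\sum_{i'} f_{i'}}
\]
follows because the denominator $(1+\varepsilon)\sum_{i'} f_{i'} - \varepsilon f_i$ is at most $(1+\varepsilon)\sum_{i'} f_{i'}$.

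There is no real obstacle here; the only subtlety is choosing the reparametrization that turns ``$\be_i^{-1}f_i$ is the maximum by a $(1+\varepsilon)$ factor'' into the symmetric-looking condition ``$E_i \le E_{i'}/(1+\varepsilon)$ for all $i'\ne i$'', after which the computation is a one-line exponential integral and the bounds drop out algebraically. The only minor care needed is to handle indices with $f_i = 0$ (trivially) and to note that the event $\mathcal{E}_i$ coincides with the event on $E_i$'s up to measure-zero ties, since all $E_{i'}$ with positive rate are a.s.\ distinct.
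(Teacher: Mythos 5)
Your proof is correct and follows essentially the same route as the paper: both reduce the event to the comparison $\be_i^{-1}f_i \ge (1+\varepsilon)\max_{i'\ne i}\be_{i'}^{-1}f_{i'}$ and then land on the exact formula $\Pr[\mathcal E_i] = f_i/(f_i + (1+\varepsilon)\sum_{i'\ne i}f_{i'})$, from which both bounds follow algebraically. The only cosmetic difference is that the paper invokes max-stability of exponentials to collapse $\max_{i'\ne i}\be_{i'}^{-1}f_{i'}$ into a single exponential $\be^{-1}\sum_{i'\ne i}f_{i'}$ before applying the standard two-exponential comparison, whereas you reparametrize $E_i = \be_i/f_i$ and compute the same quantity by a direct integral; you also obtain the upper bound as a byproduct of the exact formula rather than observing (as the paper does) that the event is a subevent of $\{i^*=i\}$. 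Both variants are fine and the substance is identical.
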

\begin{proof}
    Fix $i\in [n]$. The probability in the theorem statement is equivalent to the probability of $\be_i^{-1} f_i \ge (1+\varepsilon)\max_{i' \ne i}
    \be_{i'}^{-1}f_{i'}$. The probability is clearly at most $f_i/\sum_{i'}f_{i'}$.
    
    By min-stability of exponential random variables, $\max_{i' \ne i}\be_{i'}^{-1}f_{i'}$ is distributed as $\be^{-1}\sum_{i' \ne i}f_{i'}$. Note that the exponential random variable $\be$ is independent of $\be_i$. By standard arguments, 
    \[
    \Pr[\be_i^{-1}f_i \ge \be^{-1}(1+\varepsilon)\sum_{i' \ne i}f_i']
  = \frac{f_i}{f_i + (1+\varepsilon)\sum_{i' \ne i}f_{i'}} \ge \frac{f_i}{(1+\varepsilon)\sum_{i'} f_{i'}}.
        \qedhere\]
\end{proof}
Thus, $\be_{i^*}^{-1}\sum_{j}p_{i^*}(j)$ in addition to capturing a \emph{significant} portion of the interval $\sum_{i,j}\be_i^{-1}p_i(j)$ is also at least a $1+\varepsilon$ factor larger than the second largest value. We use these properties to prove the following theorem:
\begin{theorem}
Given that each server $j$ has a non-negative vector $p(j) \in \R^{n}$, define $q_i \coloneqq \sum_j p_i(j)$. Algorithm~\ref{alg:sampling-general} outputs FAIL only with probability $O(\varepsilon)$ and conditioned on not failing, for all $i \in [n]$
\begin{align*}
    \Pr[\hat{i} = i\ \text{and}\ \hat{q} \in (1 \pm O(\varepsilon))\frac{q_i}{\sum_i q_i}] = (1 \pm O(\varepsilon))\frac{q_i}{\sum_i q_i} \pm 1/\poly(n).
\end{align*}
The algorithm uses only one round and has a total communication of $O(s\polylog(n)/\varepsilon^2)$ words.
\label{thm:additive-sampling}
\end{theorem}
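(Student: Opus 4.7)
The plan is to implement the strategy outlined in Section~\ref{subsec:techniques}. Using the shared randomness, all servers and the coordinator independently sample standard exponentials $\be_1,\ldots,\be_n$; no communication is needed for this. Each server $j$ forms the vector $r(j) \in \R^n$ with entries $r_i(j) := \be_i^{-1}p_i(j)$, and we define $r := r(1) + \cdots + r(s)$ so that $r_i = \be_i^{-1}q_i$. By the max-stability recalled in Section~\ref{subsec:exponentials}, the random index $i^\star := \argmax_{i \in [n]} r_i$ satisfies $\Pr[i^\star = i] = q_i/\sum_{i'} q_{i'}$ exactly. The task thus reduces to (a) recovering $i^\star$ at the coordinator and (b) approximating $r_{i^\star}$, both from messages sent in a single round.

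Two high-probability events drive the analysis. Lemma~\ref{lma:max-is-significant} supplies the heavy-hitter guarantee $r_{i^\star} \ge \|r\|_1/(C\log^2 n)$ with probability $1 - 1/\poly(n)$. The isolation lemma stated immediately before the theorem gives, after summing its per-$i$ inequality over $i \in [n]$, $\Pr[r_{i^\star} \ge (1+\varepsilon)\max_{i \ne i^\star} r_i] \ge 1/(1+\varepsilon) = 1 - O(\varepsilon)$. Let $\mathcal{E}$ denote the intersection of these two events; then $\Pr[\mathcal{E}] \ge 1 - O(\varepsilon) - 1/\poly(n)$, and for each fixed $i$, $\Pr[i^\star = i,\, \mathcal{E}] = (1 \pm O(\varepsilon))\,q_i/\sum_{i'}q_{i'} \pm 1/\poly(n)$ by a per-$i$ application of the same two ingredients.

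For the sketching primitive, I would invoke the deterministic $\alpha$-incoherent matrix $S \in \R^{m \times n}$ of Nelson--Nguyen--Woodruff with $\alpha = \Theta(\varepsilon/\log^2 n)$ and $m = O(\alpha^{-2}\log n) = O(\polylog(n)/\varepsilon^2)$. In the single round, each server $j$ transmits $Sr(j)$ together with the scalar $\sum_i p_i(j)$, for a total of $O(s\,\polylog(n)/\varepsilon^2)$ words. The coordinator assembles $Sr = \sum_j Sr(j)$ and forms $r' := S^\top S r$, which satisfies $\|r - r'\|_\infty \le \alpha\|r\|_1$. Conditional on $\mathcal{E}$ this gives $r'_{i^\star} = (1 \pm \varepsilon/4)\,r_{i^\star}$ and $r'_i < r'_{i^\star}$ for every $i \ne i^\star$, so $\hat{i} := \argmax_i r'_i$ equals $i^\star$. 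The coordinator outputs $\hat{q} := \be_{\hat{i}}\,r'_{\hat{i}}/T$, where $T := \sum_j \sum_i p_i(j) = \sum_{i'} q_{i'}$ is computable from the reported scalars; this gives $\hat{q} = (1 \pm O(\varepsilon))\,q_{\hat{i}}/\sum_{i'}q_{i'}$. The algorithm declares FAIL whenever a simple certificate on $r'$ (e.g.\ that $r'_{\hat{i}}$ exceeds the next-largest entry by a $1+\varepsilon/2$ factor) fails, which can only happen when $\neg\mathcal{E}$ holds, so $\Pr[\text{FAIL}] \le O(\varepsilon)$.

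The main subtlety is assembling the probability statement: one must squeeze $\Pr[\hat{i} = i,\, \hat{q} \in (1 \pm O(\varepsilon))q_i/\sum_{i'}q_{i'}]$ between the unconditional marginal $q_i/\sum_{i'}q_{i'}$ (upper bound, since on $\mathcal{E}$ the event $\hat{i} = i$ forces $i^\star = i$) and $(1 - O(\varepsilon))\,q_i/\sum_{i'}q_{i'} - 1/\poly(n)$ (lower bound, combining the $1/(1+\varepsilon)$ factor in the isolation lemma with the Lemma~\ref{lma:max-is-significant} tail); this matches the stated form exactly. A secondary concern is discretizing the exponentials and sketch entries to $O(\log n)$-bit words, which only adds a $\polylog(n)$ overhead; the paper's actual protocol replaces the incoherent matrix by an $\ell_1$-sampling routine to additionally avoid $\Omega(n)$ decoding time at the coordinator, but the sketch above already delivers the claimed one-round communication and correctness guarantee.
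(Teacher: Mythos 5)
Your argument proves a theorem in the spirit of the stated one, but it is for a \emph{different algorithm} than Algorithm~\ref{alg:sampling-general}, and it leaves a genuine gap in the conditional-probability bookkeeping that the paper's own proof is specifically designed to close.

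First, the theorem is a statement about Algorithm~\ref{alg:sampling-general}, which is the $\ell_1$-sampling protocol: each server sends $2S$ coordinates drawn from its local distribution, the coordinator forms empirical counts $\bX_i, \bY_i$, and the FAIL test and the estimate $\hat q$ are defined in terms of those counts. Your protocol instead ships the incoherent-matrix sketches $Sr(j)$ and decodes $r' = S^\top S r$ at the coordinator; this is the version sketched in Section~\ref{subsec:techniques}, which the paper explicitly replaces in Section~\ref{sec:generalized_sampler} to avoid $\Omega(n)$ decoding time. So even if your analysis were airtight, it would not discharge the theorem as written. (You do flag this at the end, but a referee would still ask for a proof of the stated claim.)

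Second, and more substantively, the hard part of the theorem is the \emph{upper} bound on the conditional probability, and your final paragraph does not establish it. You bound $\Pr[\text{FAIL}] \le \Pr[\neg\mathcal{E}] = O(\varepsilon) + 1/\poly(n)$ by arguing that the certificate passes whenever $\mathcal{E}$ holds. That direction is fine. But the theorem also requires that, conditioned on $\neg\text{FAIL}$, the probability of outputting $\hat{i}=i$ together with a \emph{wrong} $\hat q$ (or of outputting $\hat{i}=i$ when $i^\star \ne i$) is at most $1/\poly(n)$, uniformly over $i$. On $\neg\mathcal{E}\cap\neg\text{FAIL}$ your decoder can declare an arbitrary winner, and that event has probability $O(\varepsilon)$ which may dwarf $q_i/\sum_{i'} q_{i'}$ for small $q_i$. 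To make the upper bound $q_i/\sum q_{i'}(1+O(\varepsilon)) + 1/\poly(n)$ hold, one needs the converse guarantee $\Pr[\text{FAIL}\mid\neg\mathcal{E}] \ge 1 - 1/\poly(n)$ — that the FAIL filter almost surely catches the bad exponentials. This is exactly what the paper proves: it introduces the inner event $\calE'$ (a tightened version of $\mathcal{E}$ with thresholds $1/C\log^2 n$ and $1+\varepsilon$ rather than $1/4C\log^2 n$ and $1+\varepsilon/4$), bounds $\Pr[\lnot\text{FAIL}\mid\calE'] \ge 1 - 1/\poly(n)$ and $\Pr[\text{FAIL}\mid\lnot\calE] \ge 1 - 1/\poly(n)$ via Chernoff/Bernstein on the counts $\bX_i$, and then runs Bayes' theorem to conclude $\Pr[\calE\mid\lnot\text{FAIL}] = 1 - 1/\poly(n)$. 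Your ``squeeze'' step uses $q_i/\sum q_{i'}$ as an unconditional upper bound, but that bound only controls the $\mathcal{E}$ part of the event; you would need to show, for your certificate on $r'$, that $\neg\mathcal{E}$ (minus the $1/\poly(n)$ tail of Lemma~\ref{lma:max-is-significant}) forces the certificate to trip. That is plausible but requires an argument mirroring the paper's two-sided case analysis, which is currently absent.

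Finally, a minor quantitative remark: on $\mathcal{E}$ you have $r'_{i^\star} \ge (1-\varepsilon/4) r_{i^\star}$ and $r'_i \le r_{i^\star}/(1+\varepsilon) + (\varepsilon/4) r_{i^\star}$ for $i\ne i^\star$, so the margin in $r'$ is roughly $1 + \varepsilon/2 - O(\varepsilon^2)$; a $1+\varepsilon/2$ certificate is borderline. You should weaken the incoherence parameter (e.g.\ $\alpha = \varepsilon/(8C\log^2 n)$) so the margin clears the threshold with room to spare, as the paper does in its own threshold choices.
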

\begin{proof}
All the random variables used in the proof are as defined in the algorithm. The algorithm fails to sample if $\max_i \bX_i \le S/(2C\log^2 n)$ or if $\bX_{(1)} \le (1+\varepsilon/2)\bX_{(2)}$. Let $\calE$ denote the event that $\max_i \be_i^{-1}q_i \ge \sum_i \be_i^{-1}q_i/4C\log^2 n$ and $(\max_i \be_i^{-1}q_i) \ge (1+\varepsilon/4)\cdot \text{second-max}_{i}\, \be_i^{-1}q_i$. We now bound $\Pr[\calE \mid \lnot \text{FAIL}]$. By Bayes' theorem, we have
\begin{align*}
    \Pr[\calE \mid \lnot \text{FAIL}] = \frac{\Pr[\lnot\text{FAIL}\mid \calE]\Pr[\calE]}{\Pr[\lnot\text{FAIL}\mid \calE]\Pr[\calE] + \Pr[\lnot\text{FAIL}\mid \lnot\calE]\Pr[\lnot\calE]}.
\end{align*}
We have $\Pr[\calE] \ge 1/(1+\varepsilon/4) - 1/\poly(n)$ from the above lemma. Let $\calE'$ denote the event that $\max_i \be_i^{-1}q_i \ge \sum_i \be_i^{-1}q_i/C\log^2 n$ and $(\max_i \be_i^{-1}q_i) \ge (1+\varepsilon)\cdot \text{second-max}_{i}\, \be_i^{-1}q_i$. We note that $\calE' \subseteq \calE$ and that $\Pr[\calE'] \ge 1/(1+\varepsilon) - 1/\poly(n)$. Now,
\begin{align*}
    \Pr[\lnot \text{FAIL} \mid \calE] \ge \Pr[\lnot \text{FAIL} \mid \calE']\Pr[\calE' \mid \calE] \ge \Pr[\lnot \text{FAIL} \mid \calE']\Pr[\calE'].
\end{align*}
Condition on $\calE'$ and let $i^* = \argmax \be_i^{-1}q_i$. By a Chernoff bound, since $S \ge O(C^2\log^5 n/\varepsilon^2)$ we get that with probability $\ge 1 - 1/\poly(n)$
\begin{align*}
    \bX_{i^*} \ge S \frac{\be_{i^*}^{-1}q_{i^*}}{\sum_i \be_i^{-1}q_i}(1-\varepsilon/5)> \frac{S}{2C\log^2 n}.
\end{align*} 
Additionally, for all other indices $i$, we get by a Bernstein bound that
\begin{align*}
    \Pr[\bX_i \ge S \frac{\be_i^{-1}q_i}{\sum_i \be_i^{-1}q_i} + \frac{\varepsilon}{8}\frac{S}{C\log^2 n}] \le 1/\poly(n).
\end{align*}
Taking a union bound over all the indices $i \ne i'$, we get that with probability $\ge 1 - 1/\poly(n)$ for all $i \ne i'$,
\begin{align*}
    \bX_i < S \frac{\be_i^{-1}q_i}{\sum_i \be_i^{-1}q_i} + \frac{\varepsilon}{4}\frac{S}{C\log^2 n} \le S\frac{\be_{i^*}^{-1}q_{i^*}}{\sum_{i}\be_{i}^{-1}q_i}(1/(1+\varepsilon) + \varepsilon/8) \le \frac{1-\varepsilon/5}{1+\varepsilon/2}S\frac{\be_{i^*}^{-1}q_{i^*}}{\sum_{i}\be_{i}^{-1}q_i}
\end{align*}
when $\varepsilon$ is a small enough constant. Hence $\Pr[\lnot \text{FAIL} \mid \calE'] \ge 1 - 1/\poly(n)$ and we get
\begin{align*}
    \Pr[\lnot \text{FAIL} \mid \calE] \ge (1 - 1/\poly(n))(1/(1+\varepsilon) - 1/\poly(n)).
\end{align*}
Now similarly, we show that $\Pr[\text{FAIL} \mid \lnot \calE]$ is large. Condition on $\lnot\calE$ and again let $i^* = \max_i \be_i^{-1}q_i$. If $\be_{i^*}^{-1}q_{i^*}/\sum_i \be_i^{-1}q_i \le 1/4C\log^2 n$, then again by Bernstein's bound, we get $\max_i \bX_i < S/(2C\log^2 n)$ with probability $\ge 1 - 1/\poly(n)$. Suppose $\be_{i^*}^{-1}q_{i^*}/\sum_i \be_i^{-1}q_i > 1/4C\log^2 n$ but $\be_{i^*}^{-1}q_{i^*}/\sum_i \be_i^{-1}q_i < (1+\varepsilon/4)\cdot \text{second-max}_{i}\, \be_i^{-1}q_i$. By using a Chernoff bound, we get $\bX_{(1)} < (1+\varepsilon/2)\bX_{(2)}$ with a probability $\ge 1 - 1/\poly(n)$. Hence, $\Pr[\text{FAIL} \mid \lnot \calE] \ge 1 - 1/\poly(n)$. Thus, overall
\begin{align*}
    \Pr[\calE \mid \lnot \text{FAIL}] = 1 - 1/\poly(n).
\end{align*}
Now, conditioned on $\lnot \text{FAIL}$ and $\calE$, with probability $\ge 1 - 1/\poly(n)$, $\hat{i} = i^*$ and 
\begin{align*}
    \bY_{\hat{i}}/S = \bY_{i^*}/S = (1 \pm \varepsilon/2)\frac{\be_{i^*}^{-1}q_{i^*}}{\sum_{i}\be_i^{-1}q_i}
\end{align*}
and hence, $\hat{q} = (1\pm \varepsilon/2)q_{i^*}/\sum_i q_i$. Finally,
\begin{align*}
    \Pr[\hat{i} = i \mid \lnot \text{FAIL}] &= \Pr[\hat{i} = i \mid \lnot \text{FAIL}, \calE]\Pr[\calE \mid \lnot \text{FAIL}] + \Pr[\hat{i} = i \mid \lnot \text{FAIL}, \lnot \calE]\Pr[\lnot \calE \mid \lnot \text{FAIL}]\\
    &= \Pr[i^* = i \mid \calE] \pm 1/\poly(n)\\
    &= \frac{q_i}{\sum_i q_i}(1 \pm O(\varepsilon)) \pm 1/\poly(n),
\end{align*}
where the last inequality is from the previous lemma. 
\end{proof}
\begin{algorithm}
    \caption{Sampling from General Distributions}\label{alg:sampling-general}
    \DontPrintSemicolon
    \KwIn{Each server has a nonnegative vector $p(j) \in \R^{n}$ and a parameter $\varepsilon$}
    \KwOut{Samples approximately from the distribution with the probability of $i$ being $\sum_j p_i(j)/\sum_{i,j} p_{i}(j)$}
    All the servers agree on independent exponential random variables $\be_1,\ldots,\be_n$ using public randomness\;
    $S \gets O(\varepsilon^{-2}\log^5 n)$\;
    For each $j=1,\ldots,s$, the $j$-th server samples $2S$ independent copies of the random variable $\bi$ defined by $\Pr[\bi = i]=\be_i^{-1}p_i(j)/\sum_i \be_i^{-1}p_i(j)$\;
    Each server $j$ communicates $\sum_i p_i(j)$, $\sum_i \be_i^{-1}p_i(j)$ and the $2S$ sampled coordinates to the coordinator\;

    The coordinator, using the communication from the servers, samples $2S$ independent copies of the random variable $(\bj, \bi)$ defined by
    \begin{align*}
        \Pr[(\bj, \bi) = (j,i)] = \frac{\be_i^{-1}p_i(j)}{\sum_{i,j}\be_i^{-1}p_i(j)}
    \end{align*}\;
    $\bX_i \gets $ The number of times $(*, i)$ is sampled in the first $S$ trials\;
    $\bY_i \gets $ The number of times $(*, i)$ is sampled in the second $S$ trials\;
    $\bX_{(1)}, \bX_{(2)} \gets$ top two among $\bX_1,\ldots,\bX_n$\;
    \If{$\bX_{(1)} < S/(2C\log^2 n)$ or $\bX_{(1)} \le (1+\varepsilon/2)\bX_{(2)}$}{\Return{FAIL}}
    $\hat{i} \gets \argmax_i \bY_i$\;
    $\hat{q} \gets \be_{\hat{i}}(\bY_{\hat{i}}/S)\sum_{i,j}\be_i^{-1}p_i(j)/\sum_{i,j}p_i(j)$\;
    \Return{$(\hat{i}, \hat{q})$}
\end{algorithm}
As the above theorem shows, Algorithm~\ref{alg:sampling-general} can sample approximately from very general ``additively-defined'' distributions. As a simple application, we show that it can be used to obtain a leverage score sample from the deduplicated matrix. In this setting, each server $j$ has a matrix $A^{(j)}$ with $d$ columns. Assume that each row $a_i$ of the matrix $A^{(j)}$ is associated with a tag $t_i$. Assume that across all the servers, a tag $t$ is always associated with the same row. Let $A$ be the matrix formed by the rows that correspond to distinct tags across the servers. For example, suppose a pair of row and tag $(a, t)$ is present in the servers $1$, $2$ and $3$. In this deduplicated case, the matrix $A$ has only one copy of the row $a$. Suppose server $1$ has a pair $(a, 1)$ and server $2$ has a pair $(a, 2)$ i.e., the same row is associated with different tags at different servers, then the matrix $A$ has both the copies of the row. We want to sample rows of $A$ from the associated $\ell_2$ leverage score distribution. For a row $a$ of the matrix $A$, the leverage score of $a$ with respect to $A$ is defined as
\begin{align*}
    \tau_A(a)\coloneqq \max_x \frac{|\la a, x\ra|^2}{\opnorm{Ax}^2}.
\end{align*}
If the row $a$ is not in the matrix $A$, we define $\tau_A(a) = 0$. An important property of the leverage score distribution is that $\sum_{a \in A}\ell_A(a) = \text{rank}(A) \le d$ since $A$ has $d$ columns.
Given a tag $t$, let $a_t$ be the row associated with the tag $t$. As the matrix $A$ is formed by the ``union'' of all the matrices $A^{(1)}, \ldots, A^{(t)}$, for any $x$ we have  $\opnorm{Ax}^2 \ge \opnorm{A^{(j)}x}^2$. Hence if $a$ is a row of the matrix $A^{(j)}$, then $\tau_{A^{(j)}}(a) \ge \tau_{A}(a)$. Let the tags be drawn from a set $T$. Each server $j$ can now define a $|T|$-dimensional vector $p(j)$ such that for each $t \in T$
\begin{align*}
    p_t(j) = \tau_{A^{(j)}}(a_t). 
\end{align*}
Here we note that if the tag $t$ is not present in the matrix $A^{(j)}$, we have $p_t(j) = 0$. We now have for any tag $t$,
\begin{align*}
    \frac{\sum_{j=1}^s p_t(j)}{\sum_{j=1}^s\sum_{t \in T}p_t(j)} = \frac{\sum_{j=1}^s \tau_{A^{(j)}}(a_t)}{\sum_{j=1}^s \sum_{t\in T}\tau_{A^{(j)}}(a_t)} \ge \frac{\tau_{A}(a_t)}{\sum_{j=1}^s \text{rank}(A^{(j)})} \ge \frac{\tau_{A}(a_t)}{s \cdot \text{rank}(A)}.
\end{align*}
Here, we used the monotonicity of leverage scores which directly follows from the definition. Such a sampling distribution is an $s$-approximate leverage score distribution and has numerous applications such as computing subspace embeddings \cite{w14}. This application shows that Algorithm~\ref{alg:sampling-general} can sample from useful distributions and importantly give approximations to the probabilities of the samples using only one round of communication.

While this very general sampling procedure can be used to construct a subspace embedding for the deduplicated matrix in the coordinator model, it is unclear how to extend these algorithms to more general topologies to obtain algorithms in the personalized CONGEST model in which we want the protocols to have additional nice properties as discussed in the introduction. In later sections, we define the concept of composable sketches and how they are useful to obtain protocols in the personalized CONGEST model.

\section{A Two Round Protocol for Sum Approximation}\label{sec:new_alg}
Recall that we say a non-negative function $f$ satisfies ``approximate invertibility'' with parameters $\theta, \theta', \theta'' > 1$ if the following hold:
\begin{enumerate}
    \item for all $x_1, x_2 \ge 0$, it holds that $f(x_1) + f(x_2) \le f(x_1 + x_2)$ which additionally implies that $f(0) = 0$, 
    \item for all $x$, $f(\theta' x) \ge \theta f(x)$, and
    \item for all $x$, $f(x/(4 \cdot \sqrt{\theta} \cdot \theta')) \ge f(x)/\theta''$.
\end{enumerate}
Note that plugging $x = f^{-1}(z)$ in the second property above, we get $\theta' f^{-1}(z) \ge f^{-1}(\theta z)$.
We now define $\varepsilon_1 \coloneqq 1/\theta''$ and $\varepsilon_2 \coloneqq 1 - 1/\theta''$. We can show that for all values of $0 \le x_2 \le x_1$, if $f(x_2) \le \varepsilon_1 f(x_1)$, then $f(x_1 - x_2) \ge (1-\varepsilon_2)f(x_1)$. This essentially shows that if $f(x_2)$ is very small as compared to $f(x_1)$, then $f(x_1 - x_2)$ can not be very small when compared to $f(x_1)$. These properties make the function $f$ ``approximately invertible'', meaning that good approximations for $f(x)$ will let us approximate the preimage $x$ as well. We say that a function $f$ that satisfies the above properties is ``approximately invertible'' with parameters $\theta, \theta', \theta''$.

Note that for an integer $s \ge 1$, we defined $c_{f}[s]$ to be the smallest number such that for all $x_1, \ldots, x_s \ge 0$,
\begin{align*}
    f(x_1 + \cdots + x_s) \le \frac{c_f[s]}{s}\left(\sqrt{f(x_1)} + \cdots + \sqrt{f(x_s)}\right)^2.
\end{align*}
Using the Cauchy-Schwarz inequality, we additionally have 
\[f(x_1 + \cdots + x_s) \le c_f[s]\left(f(x_1) + \cdots + f(x_s)\right)\] 
for all $x_1, \ldots, x_s \ge 0$.
We show that the parameter $c_f[s]$ as a function of $s$ cannot grow arbitrarily. Consider arbitrary integers $s, t \ge 1$. The following lemma shows that $c_f[s \cdot t] \le c_f[s] \cdot c_f[t]$ which implies that the function $c_f[s]$ is upper bounded by a polynomial in $s$ with a degree that depends only on $c_f[2]$.
\begin{lemma}
    For any $x_1, x_2, \ldots, x_{s \cdot t} \ge 0$, 
    \begin{align*}
        f(x_{1} + \cdots + x_{s \cdot t}) \le \frac{c_{f}[s] \cdot c_{f}[t]}{s \cdot t}\left(\sqrt{f(x_1)} + \cdots + \sqrt{f(x_{s \cdot t})}\right)^2.
    \end{align*}
\end{lemma}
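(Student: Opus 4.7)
The plan is to prove the submultiplicativity bound $c_f[s\cdot t] \le c_f[s]\cdot c_f[t]$ by a two-level grouping argument: first aggregate the $s\cdot t$ variables into $s$ super-blocks of $t$ variables each, apply the definition of $c_f[t]$ within each block, and then apply the definition of $c_f[s]$ to the block sums.

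Concretely, I would define the block sums $y_i = x_{(i-1)t+1} + x_{(i-1)t+2} + \cdots + x_{it}$ for $i = 1, \ldots, s$, so that $x_1 + \cdots + x_{st} = y_1 + \cdots + y_s$. The outer step applies the definition \eqref{eqn:cfs-definition} of $c_f[s]$ to the $y_i$'s to obtain
\[
f(x_1+\cdots+x_{st}) \;=\; f(y_1+\cdots+y_s) \;\le\; \frac{c_f[s]}{s}\Bigl(\sqrt{f(y_1)}+\cdots+\sqrt{f(y_s)}\Bigr)^{\!2}.
\]
The inner step applies $c_f[t]$ to each individual $y_i$: $f(y_i) \le (c_f[t]/t)\bigl(\sqrt{f(x_{(i-1)t+1})}+\cdots+\sqrt{f(x_{it})}\bigr)^{2}$, and therefore (since both sides are non-negative) $\sqrt{f(y_i)} \le \sqrt{c_f[t]/t}\sum_{k=(i-1)t+1}^{it}\sqrt{f(x_k)}$.

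Summing this last inequality over $i=1,\ldots,s$ telescopes the partition into the full sum $\sum_{k=1}^{st}\sqrt{f(x_k)}$, giving
\[
\sum_{i=1}^{s}\sqrt{f(y_i)} \;\le\; \sqrt{\frac{c_f[t]}{t}}\,\sum_{k=1}^{st}\sqrt{f(x_k)}.
\]
Squaring this bound and substituting into the outer inequality yields the desired
\[
f(x_1+\cdots+x_{st}) \;\le\; \frac{c_f[s]}{s}\cdot\frac{c_f[t]}{t}\Bigl(\sum_{k=1}^{st}\sqrt{f(x_k)}\Bigr)^{\!2} \;=\; \frac{c_f[s]\,c_f[t]}{s\,t}\Bigl(\sqrt{f(x_1)}+\cdots+\sqrt{f(x_{st})}\Bigr)^{\!2}.
\]

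The argument is essentially mechanical once one sees that the square-root on the left of the definition is exactly what lets the inner and outer applications compose cleanly; there is no real obstacle. The only subtlety worth flagging is that it is crucial to use the square-root form in \eqref{eqn:cfs-definition} rather than the weaker Cauchy--Schwarz consequence $f(\sum x_k) \le c_f[s]\sum f(x_k)$, because only the square-root form survives being plugged back into itself without losing a further factor of $st$. I would note in passing that this lemma implies that $c_f[s]$ grows at most polynomially in $s$ with exponent $\log_2 c_f[2]$, by iterating with $t=2$.
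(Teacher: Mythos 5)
Your proof is correct and takes exactly the paper's approach: partition the $s\cdot t$ variables into $s$ blocks of $t$, apply the definition of $c_f[s]$ to the block sums, then apply the definition of $c_f[t]$ within each block and recombine. You write out the inner step more explicitly (taking square roots before summing), but the argument is the same.
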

\begin{proof}
    By definition of $c_f[s]$, we obtain
    \begin{align*}
        &f(x_1 + \cdots + x_{s \cdot t})\\
        &\le \frac{c_f[s]}{s}\left(\sqrt{f(x_1 + \cdots + x_t)} + \sqrt{f(x_{t+1}) + \cdots + f(x_{2 \cdot t})} + \cdots + \sqrt{f(x_{(s-1)\cdot t + 1} + \cdots + f(x_{s \cdot t}))}\right)^2.
    \end{align*}
    Now we use $c_f[t]$ to expand the internal terms, to get
    \begin{align*}
        f(x_1 + \cdots + x_{s \cdot t}) \le \frac{c_f[s]}{s}\frac{c_f[t]}{t}\left(\sqrt{f(x_1)} + \cdots + \sqrt{f(x_{s \cdot t})}\right)^2.
    \end{align*}
    By definition of $c_f[s \cdot t]$, we obtain $c_f[s \cdot t] \le c_f[s] \cdot c_f[t]$.
\end{proof}
The above lemma upper bounds the growth of the parameter $c_f[s]$. We now lower bound the growth and show that $c_f[s]$ must grow at least linearly in the parameter $s$.
\begin{lemma}
    If $s \ge t$, then $c_f[s] \ge c_f[t] \cdot s / t$.
\end{lemma}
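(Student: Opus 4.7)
The plan is to prove this by the standard ``pad with zeros'' trick: a tuple of length $t$ that nearly witnesses $c_f[t]$ yields a tuple of length $s$ that witnesses at least $(s/t)\,c_f[t]$ in the definition of $c_f[s]$.

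First I would record that $f(0)=0$: this is noted earlier in the excerpt as a consequence of super-additivity together with non-negativity of $f$, since super-additivity gives $f(0)=f(0+0)\ge 2f(0)$, forcing $f(0)\le 0$, while non-negativity gives $f(0)\ge 0$. This is the property that makes padding by zeros free of cost in both the numerator and the denominator of the ratio defining $c_f[\cdot]$.

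Next I would rewrite the definition \eqref{eqn:cfs-definition} as the supremum characterization
\[
c_f[s]\;=\;\sup_{y_1,\dots,y_s\ge 0}\;\frac{s\cdot f(y_1+\cdots+y_s)}{\bigl(\sqrt{f(y_1)}+\cdots+\sqrt{f(y_s)}\bigr)^2},
\]
where the supremum is taken over tuples for which the denominator is positive (degenerate all-zero tuples can be excluded without changing the supremum). Fix any $\eta>0$ and choose $y_1,\ldots,y_t\ge 0$ achieving the ratio at least $c_f[t]-\eta$. Now define the length-$s$ tuple $(y_1,\ldots,y_t,0,\ldots,0)$ by appending $s-t$ zeros. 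Because $f(0)=0$, each appended term contributes $0$ to both the numerator $f(y_1+\cdots+y_t+0+\cdots+0)=f(y_1+\cdots+y_t)$ and to the inner sum $\sqrt{f(y_1)}+\cdots+\sqrt{f(y_t)}+0+\cdots+0$. The ratio evaluated on the padded tuple therefore equals
\[
\frac{s\cdot f(y_1+\cdots+y_t)}{\bigl(\sqrt{f(y_1)}+\cdots+\sqrt{f(y_t)}\bigr)^2}\;=\;\frac{s}{t}\cdot\frac{t\cdot f(y_1+\cdots+y_t)}{\bigl(\sqrt{f(y_1)}+\cdots+\sqrt{f(y_t)}\bigr)^2}\;\ge\;\frac{s}{t}\bigl(c_f[t]-\eta\bigr).
\]
Since this is a lower bound achieved by a particular length-$s$ tuple, we have $c_f[s]\ge (s/t)(c_f[t]-\eta)$, and letting $\eta\to 0$ yields the desired $c_f[s]\ge c_f[t]\cdot s/t$.

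There is no real obstacle here; the only thing to be careful about is justifying $f(0)=0$ so that padding really is ``free,'' which is exactly what super-additivity provides. Combined with the previous lemma $c_f[s\cdot t]\le c_f[s]\cdot c_f[t]$, this sandwiches the growth of $c_f[s]$ to be at least linear and at most a polynomial determined by $c_f[2]$.
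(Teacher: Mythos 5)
Your proof is correct and uses essentially the same idea as the paper: pad a length-$t$ tuple with $s-t$ zeros and use $f(0)=0$ (from super-additivity plus non-negativity) to conclude $c_f[t]/t \le c_f[s]/s$. The paper argues directly from the defining inequality on the padded tuple rather than passing through the supremum characterization and an $\eta$-approximation, but the content is the same.
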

\begin{proof}
    Let $x_1, \ldots, x_t \ge 0$ be arbitrary.
\begin{align*}
    f(x_1 + \cdots + x_t) &= f(x_1 + \cdots + x_t + \underbrace{0 + \cdots + 0}_{s - t})\\
    &\le \frac{c_f[s]}{s}\left(\sqrt{f(x_1)} + \cdots + \sqrt{f(x_t)} + \underbrace{\sqrt{f(0)} + \cdots + \sqrt{f(0)}}_{s-t}\right)^2\\
    &\le \frac{c_f[s]}{s}(\sqrt{f(x_1)} + \cdots + \sqrt{f(x_t)})^2
\end{align*}
where we used $f(0) = 0$ in the last inequality.
Hence by definition of $c_f[t]$, we get $c_f[t]/t \le c_f[s]/s$ from which we obtain that $c_f[s] \ge c_f[t] \cdot s/t$.
\end{proof}
Using the above properties, we obtain that $c_f[s] \le c_f[2^{\ceil{\log_2(s)}}] \le (c_f[2])^{\ceil{\log_2(s)}} \le c_f[2] \cdot s^{\log_2 c_f[2]}$ which shows that $c_f$ only grows at most \emph{polynomially} in the number of servers $s$ and the degree of growth is upper bounded $\log_2 c_f[2]$. For example, if $f(x) = x^k$, then we can show $c_f[2] = 2^{k-1}$ from which we obtain that $c_f[s] \le (2s)^{k-1}$ for all the values of $s$. 

In addition, if $c_f[t] \ge c_f[s]/\alpha$ for some value $\alpha$, using the second property above, we obtain $c_f[s] \ge c_f[t] \cdot s/t \ge (c_f[s]/\alpha) \cdot (s/t)$ which implies $t \ge s/\alpha$. Thus if $c_f[t]$ is ``comparable'' to $c_f[s]$, then $t$ is ``comparable'' to $s$ as well. This is a property that we critically use in the analysis of our algorithm.
\subsection*{Protocol Description and Analysis}
We now recall the overview of the algorithm we presented in the introduction and define some notation that we use throughout our analysis. All the servers together with the coordinator use the shared randomness and sample standard exponential random variables $\be_1, \ldots, \be_n$ and the goal of the coordinator is to compute
\begin{align*}
    \max_i \be_i^{-1}f(x_i).
\end{align*}
Define $i^* \coloneqq \argmax_i \be_i^{-1}f(x_i)$. We have seen that the median of $O(1/\varepsilon^2)$ independent copies of the random variable $\max_i \be_i^{-1}f(x_i)$ can be used to compute a $1 \pm \varepsilon$ approximation to $\sum_i f(x_i)$ with high probability. Throughout the analysis, we condition on the event that
\begin{align*}
    \sum_i \be_i^{-1}f(x_i) \le (C \log^2 n) \cdot \max_i \be_i^{-1}f(x_i).
\end{align*}
We have seen that this event holds with probability $\ge 1 - 1/\poly(n)$ over the exponential random variables. Now fix a server $j \in [s]$. The server $j$ samples $N$ independent copies of the random variable $\bi$ supported in the set $[n]$ and has a distribution defined as
\begin{align*}
    \Pr[\bi = i] = \frac{\be_i^{-1}f(x_i)}{\sum_i \be_i^{-1}f(x_i)}.
\end{align*}
Let $\SC_j$ denote the set of coordinates that are sampled by the server $j$. The server $j$ then sends the set $\SC_j$ along with (i) the sum $\sum_i \be_i^{-1}f(x_i(j))$ and (ii) the values $x_i(j)$ for $i \in \SC_j$ to the coordinator using $O(N)$ words of communication.

The coordinator defines the set $\SC \coloneqq \bigcup_j \SC_j$ as the union of the sets of coordinates that are sampled at different servers. We will first show that the coordinate $i^* \in \SC$ with a large probability if $N$ is large enough.
\subsection{The coordinate \texorpdfstring{$i^*$}{i-star} is sampled}
\begin{lemma}
If $N \ge (C_{\ref{lma:i-star-is-sampled}}\log^3 n) \cdot c_f[s]/s$ for a universal constant $C_{\ref{lma:i-star-is-sampled}}$ large enough, then the coordinate $i^* \in \SC$ with a probability $\ge 1 - 1/n^4$.
    \label{lma:i-star-is-sampled}
\end{lemma}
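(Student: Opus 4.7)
The plan is to bound the failure probability by independence. Each server $j$ draws $N$ i.i.d.\ samples from a distribution that assigns mass $p_j \coloneqq \be_{i^*}^{-1}f(x_{i^*}(j))/S_j$ to the coordinate $i^*$, where $S_j \coloneqq \sum_i \be_i^{-1}f(x_i(j))$. Since the samples are independent within and across servers,
\begin{align*}
    \Pr[i^* \notin \SC] \;=\; \prod_{j=1}^s (1-p_j)^N \;\le\; \exp\!\left(-N \sum_j p_j\right).
\end{align*}
The entire task then reduces to lower bounding $\sum_j p_j$ by $s/(c_f[s]\cdot C\log^2 n)$ (up to constants); with this bound, choosing $N = \Omega(c_f[s]\log^3 n/s)$ with a sufficiently large constant produces failure probability $1/n^4$.

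To lower bound $\sum_j p_j = \be_{i^*}^{-1}\sum_j f_j/S_j$ (write $f_j \coloneqq f(x_{i^*}(j))$), the plan is to apply Cauchy--Schwarz in the form
\begin{align*}
    \left(\sum_j \sqrt{f_j}\right)^2 = \left(\sum_j \sqrt{f_j/S_j}\cdot\sqrt{S_j}\right)^2 \le \left(\sum_j f_j/S_j\right)\left(\sum_j S_j\right),
\end{align*}
which rearranges to $\sum_j f_j/S_j \ge (\sum_j \sqrt{f_j})^2 / \sum_j S_j$. The numerator is exactly what the definition of $c_f[s]$ was tailored to control: since $x_{i^*} = \sum_j x_{i^*}(j)$, the defining inequality \eqref{eqn:cfs-definition} gives $f(x_{i^*}) \le (c_f[s]/s)(\sum_j \sqrt{f_j})^2$, so $(\sum_j\sqrt{f_j})^2 \ge s\cdot f(x_{i^*})/c_f[s]$.

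For the denominator, super-additivity of $f$ (together with $f(x_i(j)) \le f(x_i(j)) + f(x_i - x_i(j)) \le f(x_i)$) yields $f(x_i(j)) \le f(x_i)$ for each $i,j$. Summing, $\sum_j S_j \le \sum_i \be_i^{-1}f(x_i)$, and by the conditioning event from Lemma~\ref{lma:max-is-significant} this is at most $(C\log^2 n)\be_{i^*}^{-1}f(x_{i^*})$. Combining,
\begin{align*}
    \sum_j p_j \;\ge\; \be_{i^*}^{-1}\cdot\frac{s\,f(x_{i^*})/c_f[s]}{(C\log^2 n)\,\be_{i^*}^{-1}f(x_{i^*})} \;=\; \frac{s}{c_f[s]\cdot C\log^2 n},
\end{align*}
and plugging this back into the exponential gives the claimed tail bound for $N \ge (C_{\ref{lma:i-star-is-sampled}} \log^3 n)\cdot c_f[s]/s$.

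The main obstacle is obtaining the improvement from $c_f[s]$ to $c_f[s]/s$ in the sample complexity. The weaker (Cauchy--Schwarz) consequence $f(x_{i^*}) \le c_f[s]\sum_j f_j$ only yields $\sum_j p_j \gtrsim 1/(c_f[s]\log^2 n)$, forcing $N = \widetilde{\Omega}(c_f[s])$ per server and blowing up total communication by a factor of $s$. The tighter definition \eqref{eqn:cfs-definition}, which bounds $f(x_{i^*})$ by $(c_f[s]/s)$ times the \emph{squared sum of square roots}, is precisely what meshes with the Cauchy--Schwarz step above to recover the extra $s$ factor, and explains why the paper introduces the new parameter $c_f[s]$ in place of $c_{f,s}$.
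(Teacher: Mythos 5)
Your proof follows the same route as the paper's: reduce to lower-bounding $\sum_j q_{i^*}(j)$ (your $\sum_j p_j$), use the Cauchy--Schwarz inequality in the form $\sum_j a_j/b_j \ge (\sum_j\sqrt{a_j})^2/\sum_j b_j$, invoke the definition of $c_f[s]$ for the numerator, and use super-additivity plus the conditioned heavy-hitter event $\sum_i \be_i^{-1}f(x_i) \le (C\log^2 n)\be_{i^*}^{-1}f(x_{i^*})$ for the denominator. Your handling of the tail probability is genuinely cleaner than the paper's: you write $\Pr[i^* \notin \SC] = \prod_j(1-q_{i^*}(j))^N \le \exp(-N\sum_j q_{i^*}(j))$ directly, whereas the paper first sets $p_{i^*}(j) = 1-(1-q_{i^*}(j))^N$, does a case split on whether $q_{i^*}(j) \ge 4\log n/N$, and uses a concavity bound to lower-bound $p_{i^*}(j)$ before exponentiating. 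Your one-line version sidesteps all of that and is the cleaner way to present it.

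One small slip to fix in the super-additivity step. You write that super-additivity ``yields $f(x_i(j)) \le f(x_i)$ for each $i,j$'' and that ``summing'' this gives $\sum_j S_j \le \sum_i \be_i^{-1}f(x_i)$. But summing $f(x_i(j)) \le f(x_i)$ over $j$ only gives $\sum_j f(x_i(j)) \le s\cdot f(x_i)$, which would make your lower bound on $\sum_j p_j$ a factor of $s$ too small and destroy the whole point of the improvement to $c_f[s]/s$. What you actually need (and what the paper uses) is the iterated form of super-additivity: $\sum_j f(x_i(j)) \le f\big(\sum_j x_i(j)\big) = f(x_i)$, and therefore $\sum_j S_j = \sum_i \be_i^{-1}\sum_j f(x_i(j)) \le \sum_i \be_i^{-1}f(x_i)$. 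Your stated conclusion matches this, so it's clear you meant the right thing, but the intermediate justification as written is both weaker than needed and not what delivers the bound.
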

\begin{proof}
Let $q_{i^*}(j) = \be_{i^*}^{-1}f(x_{i^*}(j))/\sum_{i \in [n]}\be_i^{-1}f(x_i(j))$ be the probability that the random variable $\bi = i^*$ (note that the random variable $\bi$ has a different distribution at each server) and $p_{i^*}(j)$ be the probability that $i^* \in \SC_j$. We have
\begin{align*}
    p_{i^*}(j) = 1 - \left(1 - q_{i^*}(j)\right)^N.
\end{align*}
If $q_{i^*}(j) = (4\log n)/N$, then $p_{i^*}(j) \ge 1 - (1 - q_{i^*}(j))^N \ge 1 - \exp(-q_{i^*}(j)N) \ge 1 - \exp(-4\log n) \ge 1 - 1/n^4$ which implies that $i^* \in \SC_j \subseteq \SC$ with probability $\ge 1 - 1/n^4$ and we are done. Otherwise, 
\begin{align}
    p_{i^*}(j) = 1 - (1 - q_{i^*}(j))^N \ge \frac{Nq_{i^*}(j)}{4\log n}(1 - 1/n^4)
    \label{eqn:lower-bound-p-j}
\end{align}
using the concavity of the function $1 - (1-x)^N$ in the interval $[0, 4\log n/N]$. 

Since each of the servers samples the coordinates independently, the probability that $i^* \in \SC$ is $1 - (1-p_{i^*}(1))(1-p_{i^*}(2)) \cdots (1-p_{i^*}(s)) \ge 1 - \exp(-\sum_j p_{i^*}(j))$. So, showing that the sum $\sum_j p_{i^*}(j)$ is large implies that $i^*$ is in the set $\SC$ with a large probability. Assume that for all $j$, $q_{i^*}(j) < 4\log n/N$ since otherwise we already have that the coordinate $i^* \
\in \SC$ with probability $\ge 1 - 1/n^4$. Now using \eqref{eqn:lower-bound-p-j}
\begin{align*}
 \sum_j p_{i^*}(j) \ge \frac{N}{8\log n}\sum_j q_{i^*}(j) = \frac{N}{8\log n}\sum_j \frac{\be_{i^*}^{-1}f(x_{i^*}(j))}{\sum_i \be_i^{-1}f(x_i(j))}. 
\end{align*}
We will now give a lower bound on the the sum in the above equation using the following simple lemma and the definition of the parameter $c_f[s]$.
\begin{lemma}
    Let $a_1, \ldots, a_s \ge 0$ and $b_1, \ldots, b_s > 0$ be arbitrary. Then,
    \begin{align*}
        \sum_{j \in [s]} \frac{a_j}{b_j} \ge \frac{(\sum_{j \in [s]}\sqrt{a_j})^2}{\sum_j b_j}.
    \end{align*}
\end{lemma}
\begin{proof}
    We prove the lemma by using the Cauchy-Schwarz inequality. Since we assume that $b_j > 0$ for all $j$, we can write
    \begin{align*}
        \sum_{j \in [s]}\sqrt{a_j} = \sum_{j \in [s]}\sqrt{\frac{a_j}{b_j}}\sqrt{b_j}.
    \end{align*}
    Using the Cauchy-Schwarz inequality, we get
    \begin{align*}
        \sum_{j\in [s]}\sqrt{a_j} = \sum_{j \in [s]}\sqrt{\frac{a_j}{b_j}}\sqrt{b_j} \le \sqrt{\sum_{j \in [s]}\frac{a_j}{b_j}}\sqrt{\sum_{j \in [s]}b_j}.
    \end{align*}
    Squaring both sides and using the fact that $\sum_{j \in [s]}b_j > 0$, we get
    \begin{align*}
        \sum_{j \in [s]}\frac{a_j}{b_j} \ge \frac{(\sum_{j \in [s]}\sqrt{a_j})^2}{\sum_{j \in [s]}b_j}. &\qedhere
    \end{align*}
\end{proof}
Letting $a_j = \be_{i^*}^{-1}f(x_{i^*}(j))$ and $b_j = \sum_i \be_i^{-1}f(x_i(j))$ in the above lemma, we get
\begin{align*}
    \sum_j \frac{\be_{i^*}^{-1}f(x_{i^*}(j))}{\sum_i \be_i^{-1}f(x_i(j))} \ge \frac{\be_{i^*}^{-1}(\sum_{j \in [s]}\sqrt{f(x_{i^*}(j))})^2}{\sum_j\sum_i \be_{i}^{-1}f(x_i(j))}. 
\end{align*}
Using the definition of $c_f[s]$, we obtain
\begin{align*}
    \left(\sum_{j \in [s]}\sqrt{f(x_{i^*}(j))}\right)^2 \ge \frac{s}{c_f[s]}(f(x_{i^*}(1) + \cdots x_{i^*}(s))) = \frac{s}{c_f[s]}f(x_{i^*}).
\end{align*}
Using the super-additivity of the function $f$, we get
\begin{align*}
    \sum_j \sum_i \be_i^{-1}f(x_i(j)) = \sum_i \be_i^{-1}\sum_j f(x_i(j)) \le \sum_i \be_i^{-1}f(\sum_j x_i(j)) = \sum_i \be_i^{-1}f(x_i). 
\end{align*}
Since we conditioned on the event that $\sum_i \be_i^{-1}f(x_i) \le (C\log^2 n)\cdot\be_{i^*}^{-1}f(x_{i^*})$, we get
\begin{align*}
    \sum_j\sum_i \be_i^{-1}f(x_i(j)) \le (C\log^2 n)\cdot\be_{i^*}^{-1}f(x_{i^*}). 
\end{align*}
We therefore have
\begin{align*}
    \sum_j \frac{\be_{i^*}^{-1}f(x_{i^*}(j))}{\sum_i \be_{i}^{-1}f(x_i(j))} \ge \frac{s}{c_f[s]}\frac{\be_{i^*}^{-1}f(x_{i^*})}{(C\log^2 n) \cdot \be_{i^*}^{-1}f(x_{i^*})} \ge \frac{s}{C\log^2 n \cdot c_f[s]}. 
\end{align*}
Thus, if $N \ge (32C\log^3 n) \cdot c_f[s]/s$, then $\sum_j p_{i^*}(j) \ge 4\log n$ which implies that $i^*$ is in the set $\SC$ with probability $\ge 1 - 1/n^4$. Letting $C_{\ref{lma:i-star-is-sampled}} \coloneqq 32 C$, we have the proof.
\end{proof}
When $N \gg c_f[s] \log^3 n/s$ as is required by the above lemma, the set $\SC = \bigcup_j \SC_j$ may have a size of $\Omega(c_f[s ]\log^3 n)$ which is quite large. Conditioned on the event that $i^* \in \SC$, we now want to compute a small subset $\PL \subseteq \SC$ such that $i^* \in \PL$ with a large probability.

\subsection{Computing the set \texorpdfstring{$\PL$}{PL}}
Recall we condition on the event that
\begin{align*}
    \sum_i \be_i^{-1}f(x_i) \le (C \log^2 n) \cdot \be_{i^*}^{-1}f(x_{i^*})
\end{align*}
and that $i^* \in \SC$. As we noted in the introduction, we proceed by constructing an estimator $\hat{x}_i$ for each $i \in \SC$ that satisfies the following properties with a probability $1 - 1/\poly(n)$:
\begin{enumerate}
    \item For all $i \in \SC$, $\hat{x}_i \le x_i$ and
    \item $\be_{i^*}^{-1}f(\hat{x}_{i^*}) \ge \alpha \cdot \be_{i^*}^{-1}f(x_{i^*})$ for some $\alpha < 1$.
\end{enumerate}

Fix an index $i \in [n]$. In the remaining part of this section, we will describe a quantity that the coordinator can approximate to obtain $\hat{x}_i$ which satisfies the above properties. 
\subsubsection{Contribution from \textLARGE{} servers}
We define $\textLARGE_i$ to be the set of servers $j$ such that
\begin{align*}
    q_i(j) = \frac{\be_i^{-1}f(x_i(j))}{\sum_i \be_i^{-1}f(x_i(j))} \ge \frac{4\log n}{(c_f[s]\log^3 n)/s}. 
\end{align*}
Note if $i \in \SC_j$, then the coordinator can determine if $j \in \textLARGE_i$ since it has access to both the values $\be_i^{-1}f(x_i(j))$ and $\sum_i \be_i^{-1}f(x_i(j))$. We have the following lemma:
\begin{lemma}
    If $N \ge (c_f[s]\log^3 n)/s$, then with probability $\ge 1 - 1/n^3$, for all the coordinates $i$ and servers $j\in\textLARGE_i$, we have $i \in \SC_j$. In other words, with a large probability, all the coordinates $i$ are sampled at all the servers that are in the set $\textLARGE_i$.
    \label{lma:all-large-are-sampled}
\end{lemma}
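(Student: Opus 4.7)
The plan is to directly use the definition of $\textLARGE_i$ together with the hypothesis on $N$ to bound the probability that any single coordinate $i$ fails to appear in $\SC_j$ for a single large server $j$, and then union bound over all $(i,j)$ pairs.

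First I would fix a pair $(i,j)$ with $j \in \textLARGE_i$. By definition of $\textLARGE_i$, the per-trial sampling probability at server $j$ satisfies
\[
q_i(j) \;\geq\; \frac{4\log n}{(c_f[s]\log^3 n)/s}.
\]
Since the hypothesis gives $N \geq (c_f[s]\log^3 n)/s$, multiplying yields $N \cdot q_i(j) \geq 4\log n$. Recall that $\SC_j$ is the multiset of $N$ i.i.d.\ draws from the distribution $\Pr[\bi = i] = q_i(j)$, so the probability that $i$ is never sampled by server $j$ is
\[
\Pr[\,i \notin \SC_j\,] \;=\; (1 - q_i(j))^N \;\leq\; \exp(-N\, q_i(j)) \;\leq\; \exp(-4\log n) \;=\; n^{-4}.
\]

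Next I would take a union bound over all coordinates $i \in [n]$ and all servers $j \in [s]$. Since we may assume $s \leq n$ without loss of generality (otherwise the servers with zero input can be collapsed), there are at most $n \cdot s \leq n^2$ such pairs, and the probability that \emph{some} pair $(i,j)$ with $j \in \textLARGE_i$ has $i \notin \SC_j$ is at most $n^2 \cdot n^{-4} = n^{-2}$. To actually reach the claimed $1/n^3$ bound one simply replaces the constant $4$ in the definition of $\textLARGE$ by any constant $\geq 5$, or equivalently multiplies the lower bound on $N$ by a suitable constant; this is the only place where constants need to be tracked, and it is routine.

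There is no real obstacle here: the statement is a clean Chernoff/union-bound argument once one unpacks that the threshold in the definition of $\textLARGE_i$ was chosen precisely so that the expected number of times $i$ is sampled at $j \in \textLARGE_i$ is at least a logarithmic factor. The only subtlety worth flagging is that we are using independence of the $N$ draws at server $j$ (conditional on the shared exponentials $\be_1,\dots,\be_n$), which is valid because each server draws its samples locally and independently given the public randomness.
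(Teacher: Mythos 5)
Your proposal is correct and follows essentially the same Chernoff/union-bound argument as the paper's proof: lower bound $N\,q_i(j) \geq 4\log n$ from the definition of $\textLARGE_i$, bound the per-pair failure probability by $\exp(-4\log n) = n^{-4}$, and union bound over all $(i,j)$ pairs. You are also right to flag that the union bound over at most $n\cdot s \leq n^2$ pairs only gives $n^{-2}$ rather than the stated $n^{-3}$; the paper is slightly loose on this point, and as you note the discrepancy is absorbed by a benign adjustment of the constant in the $\textLARGE_i$ threshold.
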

\begin{proof}
    Recall that $\SC_j$ denotes the set of coordinates that are sampled at server $j$. Consider a fixed $i \in [n]$. If $j \in \textLARGE_i$, then the probability that $i$ is sampled at server $j$ is $1 - (1 - q_i(j))^N \ge 1 - (1 - 4\log n/(c_f[s]\log^3 n/s))^N \ge 1 - \exp(-4\log n) \ge 1 - 1/n^4$ using $N \ge (c_f[s]\log^3 n)/s$. By a union bound over all the servers $j$ that are \textLARGE{} for $i$, we have the proof.
\end{proof}
Thus with a large probability, for each $i \in [n]$, the contribution to $x_i = \sum_j x_i(j)$ from servers $j \in \textLARGE_i$ can be computed exactly by the coordinator after it receives the samples from the servers. 
\subsubsection{Contribution from \textSMALL{} servers}
We now define $\textSMALL_i$ to be the set of servers $j$ for which
\begin{align*}
    q_i(j) = \frac{\be_i^{-1}f(x_i(j))}{\sum_i \be_i^{-1}f(x_i(j))} \le \frac{\varepsilon_1}{c_f[s] \cdot (C\log^2 n)}.
\end{align*}
We have
\begin{align*}
    \sum_{j \in \textSMALL_{i}}\be_i^{-1}f(x_i(j)) \le \sum_{j \in \textSMALL_{i}}\frac{\varepsilon_1\sum_{i'}\be_{i'}^{-1}f(x_{i'}(j))}{c_f[s] \cdot (C\log^2 n)} \le \frac{\varepsilon_1 \cdot (C\log^2 n) \cdot  (\be_{i^*}^{-1}f(x_{i^*}))}{c_f[s] \cdot (C\log^2 n)} = \frac{\varepsilon_1 \cdot (\be_{i^*}^{-1}f(x_{i^*}))}{c_f[s]}
\end{align*}
where we used the fact that \[\sum_{j \in \textSMALL_i}\sum_{i'}\be_{i'}^{-1}f(x_{i'}(j)) = \sum_{i'}\sum_{j \in \textSMALL_i}\be_{i'}^{-1}f(x_{i'}(j)) \le \sum_{i'}\be_{i'}^{-1}f(x_{i'}) \le (C\log^2 n) \cdot \be_{i^*}^{-1}f(x_{i^*}).\]
By definition of the parameter $c_f[s]$, we then obtain that
\begin{align*}
    \be_i^{-1}f(\sum_{j \in \textSMALL_{i}{}}x_i(j)) \le c_f[s] \cdot \be_i^{-1}\sum_{j \in \textSMALL_{i}{}}f(x_i(j)) \le \varepsilon_1 \cdot \be_{i^*}^{-1}f(x_{i^*})
\end{align*}
which then implies
\begin{align}
    \be_{i^*}^{-1}f(x_{i^*} - \sum_{j \in \textSMALL_{i^*}{}}x_{i^*}(j)) \ge (1-\varepsilon_2) \cdot \be_{i^*}^{-1}f(x_{i^*}).
    \label{eqn:ignoring-small}
\end{align}
Hence we can ignore the contribution of the servers in $\textSMALL_i$ when computing $\hat{x}_i$.
\subsubsection{Contribution from Remaining Servers}
From the above, we have for each coordinate $i \in [n]$, we can compute the contribution of servers $j \in \textLARGE_i$ exactly and ignore the contribution of servers in $\textSMALL_i$  while still being able to satisfy the required properties for $\hat{x}_i$. We will now show how to estimate the contribution of servers $j$ that are neither in $\textLARGE_i$ nor in $\textSMALL_i$. Note that for such servers, the value $q_i(j) = \be_i^{-1}f(x_i(j))/\sum_i \be_i^{-1}f(x_i(j))$ lies in the interval
\begin{align*}
    \left[\frac{\varepsilon_1}{c_f[s] \cdot (C\log^2 n)}, \frac{4s}{c_f[s]\log^2 n}\right].
\end{align*}
We now partition\footnote{We do not require it and so are not too careful about ensuring that the intervals we use are disjoint.} the above interval into intervals with lengths geometrically increasing by a factor of $\sqrt{\theta}$. Let $P_{\start} = \frac{\varepsilon_1}{c_f[s] \cdot (C\log^2 n)}$ and we partition the above interval into intervals $[P_{\start}, \sqrt{\theta}P_{\start}], [\sqrt{\theta}P_{\start},(\sqrt{\theta})^2P_{\start}], \ldots,$ and so on. We note that there are at most 
\begin{align}
 A = O\left(\frac{\log (s/\varepsilon_1)}{\log \theta}\right)
\end{align}
such intervals in the partition.

Let $I^{(a)}_i$ denote the set of servers $j$ such that $q_i(j) \in [(\sqrt{\theta})^{a}P_{\start}, (\sqrt{\theta})^{a+1}P_{\start}]$. If $|I^{(a)}_i|$ is large enough, then the number of servers $j$ in $I^{(a)}_i$ at which the coordinate $i$ is sampled is ``concentrated'' which can then be used to estimate $|I^{(a)}_i|$. But observe that even having an estimate of $|I^{(a)}_i|$ is insufficient since we cannot directly estimate $\sum_{j \in I^{(a)}}x_i(j)$ only given $|I^{(a)}_i|$ as the servers in the set $I^{(a)}_i$ may have quite different values for $\sum_i \be_i^{-1}f(x_i(j))$. So we further partition the servers based on the value of $\sum_i \be_i^{-1}f(x_i(j))$. We first give a lower bound on the values $\sum_i \be_i^{-1}f(x_i(j))$ that we need to consider.
\begin{lemma}
If the server $j \notin \textSMALL_i \cup \textLARGE_i$ and $\sum_i \be_i^{-1}f(x_i(j)) \le \frac{\varepsilon_1(1-\varepsilon_2)\sum_{i, j}\be_i^{-1}f(x_i(j))}{4Cs^2}$, then the contribution from all those servers can be ignored.
\end{lemma}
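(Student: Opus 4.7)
Fix an index $i$ and let $S \subseteq [s]$ denote the set of servers $j \notin \textSMALL_i \cup \textLARGE_i$ satisfying the hypothesis $T_j \coloneqq \sum_{i'} \be_{i'}^{-1}f(x_{i'}(j)) \le \frac{\varepsilon_1(1-\varepsilon_2)}{4Cs^2}\sum_{j'}T_{j'}$. The goal (mimicking the \textSMALL{} case in \eqref{eqn:ignoring-small}) is to prove
\begin{align*}
    \be_{i^*}^{-1}f\Bigl(x_{i^*} - \sum_{j \in S}x_{i^*}(j)\Bigr) \ge (1-\varepsilon_2)\be_{i^*}^{-1}f(x_{i^*}),
\end{align*}
so that dropping the contributions from $S$ is harmless when we eventually apply the approximate-invertibility property to reconstruct $\hat{x}_{i^*}$.

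The plan is to chain together three bounds. First I would use the definition of \textLARGE{}: for $j \in S$ we have $j \notin \textLARGE_{i^*}$, hence $q_{i^*}(j) = \be_{i^*}^{-1}f(x_{i^*}(j))/T_j \le 4s/(c_f[s]\log^2 n)$. Multiplying by the hypothesis on $T_j$ gives
\begin{align*}
    \be_{i^*}^{-1}f(x_{i^*}(j)) \le \frac{4s}{c_f[s]\log^2 n}\cdot \frac{\varepsilon_1(1-\varepsilon_2)\sum_{j'}T_{j'}}{4Cs^2} = \frac{\varepsilon_1(1-\varepsilon_2)\sum_{j'}T_{j'}}{Cs\cdot c_f[s]\log^2 n}.
\end{align*}
Second, summing over $j \in S$ (with $|S|\le s$) and invoking super-additivity plus our running conditioning $\sum_{i'}\be_{i'}^{-1}f(x_{i'}) \le (C\log^2 n)\be_{i^*}^{-1}f(x_{i^*})$ to bound $\sum_{j'}T_{j'} \le \sum_{i'}\be_{i'}^{-1}f(x_{i'}) \le (C\log^2 n)\be_{i^*}^{-1}f(x_{i^*})$, I get
\begin{align*}
    \sum_{j\in S}\be_{i^*}^{-1}f(x_{i^*}(j)) \le \frac{\varepsilon_1(1-\varepsilon_2)\,\be_{i^*}^{-1}f(x_{i^*})}{c_f[s]}.
\end{align*}
Third, apply the Cauchy--Schwarz form of the $c_f[s]$ inequality, namely $f(\sum_{j \in S}x_{i^*}(j)) \le c_f[|S|](\sum_{j \in S}f(x_{i^*}(j))) \le c_f[s]\sum_{j \in S}f(x_{i^*}(j))$ (using $c_f[|S|]\le c_f[s]$ since $c_f[t]/t$ is non-decreasing in $t$, as shown in the second lemma of Section~\ref{sec:new_alg}); this cancels the $c_f[s]$ in the denominator and yields
\begin{align*}
    \be_{i^*}^{-1}f\Bigl(\sum_{j \in S}x_{i^*}(j)\Bigr) \le \varepsilon_1(1-\varepsilon_2)\,\be_{i^*}^{-1}f(x_{i^*}) \le \varepsilon_1\,\be_{i^*}^{-1}f(x_{i^*}).
\end{align*}

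Finally, the approximate-invertibility property (with $x_1 = x_{i^*}$ and $x_2 = \sum_{j\in S}x_{i^*}(j) \le x_{i^*}$) states that $f(x_2) \le \varepsilon_1 f(x_1)$ implies $f(x_1 - x_2) \ge (1-\varepsilon_2)f(x_1)$, which is exactly the desired conclusion. The only real subtlety is the careful bookkeeping of the three factors $4$, $C$, and $s^2$ in the denominator of the hypothesis on $T_j$: one factor $s$ absorbs the sum over $|S|\le s$, the factor $c_f[s]$ from the \textLARGE{} threshold cancels the $c_f[s]$ arising from the $c_f[s]$ inequality, and the factor $C\log^2 n$ from the heavy-hitter conditioning is exactly the factor $C\log^2 n$ that appears when we upper bound $\sum_{j'}T_{j'}$. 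There is no genuine obstacle, only the need to verify that the constants align; the extra $(1-\varepsilon_2)$ factor in the hypothesis provides a safety margin that will be useful later when the ignored servers from \textSMALL{}, from the current bucket, and from other light buckets are combined.
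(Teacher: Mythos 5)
Your proof is correct and follows essentially the same route as the paper. You establish the key quantitative estimate $\be_{i^*}^{-1}f\bigl(\sum_{j\in S}x_{i^*}(j)\bigr) \le \varepsilon_1(1-\varepsilon_2)\,\be_{i^*}^{-1}f(x_{i^*})$ via the same three ingredients (the $\textLARGE$ threshold, the heavy-hitter conditioning, and the $c_f$-plus-Cauchy--Schwarz bound $f(\sum_{j\in S}x_{i^*}(j))\le c_f[|S|]\sum_{j\in S}f(x_{i^*}(j))$), merely interchanging the order of summation and substitution. The one deviation is cosmetic: the paper retains the factor $(1-\varepsilon_2)$ in that estimate so that it can invoke \eqref{eqn:ignoring-small} and apply approximate invertibility to the partially stripped quantity $x_{i^*}-\sum_{j\in\textSMALL_{i^*}}x_{i^*}(j)$, yielding the chained bound $(1-\varepsilon_2)^2\,\be_{i^*}^{-1}f(x_{i^*})$ in one pass, whereas you drop the factor, apply invertibility directly to $x_{i^*}$, and note that the spare $(1-\varepsilon_2)$ margin is what makes the cumulative chaining across $\textSMALL$, $\Ignore$, and $\textBAD$ work out. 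Either presentation is fine; your version makes the role of the safety margin more transparent.
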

\begin{proof}
    Let $\Ignore_i$ be the set of servers $j$ that are neither in $\textLARGE_i$ nor in $\textSMALL_i$ and have $\sum_i \be_i^{-1}f(x_i(j)) \le \frac{\sum_{i, j}\varepsilon_1(1-\varepsilon_2)\be_i^{-1}f(x_i(j))}{4Cs^2}$. By definition, we have
    \begin{align*}
        \sum_{j \in \Ignore_i} \be_i^{-1}f(x_i(j)) &\le \sum_{j \in \Ignore_i}\frac{4s}{c_f[s]\log^2 n}\sum_i \be_i^{-1}f(x_i(j))\\
        &\le \frac{4s \cdot |\Ignore_i|}{c_f[s]\log^2 n} \frac{\varepsilon_1(1-\varepsilon_2)\sum_{i, j}\be_i^{-1}f(x_i(j))}{4Cs^2}.
    \end{align*} 
    Using the definition of $c_f[s]$ and the Cauchy-Schwarz inequality, we then obtain
    \begin{align*}
        \be_i^{-1}f(\sum_{j \in \Ignore_i}x_i(j)) &\le \frac{c_{f}[s]}{s} \cdot |\Ignore_i| \cdot \sum_{j \in \Ignore_i}\be_i^{-1}f(x_i(j))\\
        &\le \frac{4|\Ignore_i|^2}{\log^2 n}\frac{\varepsilon_1(1-\varepsilon_2)\sum_{i, j}\be_i^{-1}f(x_i(j))}{4Cs^2}.
    \end{align*}
    Since $\sum_{i, j}\be_i^{-1}f(x_i(j)) \le (C\log^2 n) \cdot \be_{i^*}^{-1}f(x_i^*)$ and $|\Ignore_{i}| \le s$, we get
    \begin{align*}
        \be_i^{-1}f(\sum_{j \in \Ignore_i}x_i(j)) \le \varepsilon_1(1-\varepsilon_2)\be_{i^{*}}^{-1}f(x_i^*). 
    \end{align*}
    Taking $i = i^*$, we get $\be_{i^*}^{-1}f(\sum_{j \in \Ignore_{i^*}}x_{i^*}(j)) \le \varepsilon_1(1-\varepsilon_2)\be_{i^*}^{-1}f(x_{i^*})$ and therefore using \eqref{eqn:ignoring-small} we obtain that
    \begin{align*}
        \be_{i^*}^{-1}f(\sum_{j \in \Ignore_{i^*}}x_{i^*}(j)) \le \be_{i^*}^{-1}\varepsilon_1 f(x_{i^*} - \sum_{j \in \textSMALL_{i^*}}x_{i^*}(j))
    \end{align*}
    from which we then get
    \begin{align*}
        \be_{i^*}^{-1}f(x_{i^*}-\sum_{j \in \textSMALL_{i^*}}x_{i^*}(j) - \sum_{j \in \Ignore_{i^*}}x_{i^*}(j)) \ge (1-\varepsilon_2)^2 \be_{i^*}^{-1}f(x_{i^*}).
    \end{align*}
    Thus the contribution from the servers in the set $\Ignore_{i^*}$ can be ignored as we can make $\be_{i^*}^{-1}f(\hat{x}_{i^*})$ large even without them.
\end{proof}
So we only have to focus on the servers for which the quantity $\sum_i \be_i^{-1}f(x_i(j))$ lies in the interval
\begin{align*}
    \left[\frac{\varepsilon_1(1-\varepsilon_2)}{4Cs^2}\sum_{i,j}\be_i^{-1}f(x_i(j)), \max_j \sum_i \be_i^{-1}f(x_i(j))\right].
\end{align*}
Now define $F_{\start} \coloneqq \frac{\varepsilon_1(1-\varepsilon_2)}{4Cs^2}\sum_{i, j}\be_i^{-1}f(x_i(j))$ and  partition the above interval into intervals $[F_{\start}, (\sqrt{\theta}F_{\start})]$, $[(\sqrt{\theta})F_{\start}, (\sqrt{\theta})^2F_{\start}]$, $\ldots$, and so on. Note that there are at most 
\begin{align}
    B = O\left(\frac{\log(s^2/\varepsilon_1(1-\varepsilon_2))}{\log \theta}\right)
\end{align}
such intervals. Let $I^{(a, b)}_i$ for $a = 0, 1, \ldots, A-1$ and $b = 0, 1, \ldots, B-1$ be the set of servers $j$ for which
\begin{align*}
    \frac{\be_i^{-1}f(x_i(j))}{\sum_{i}\be_i^{-1}f(x_i(j))} \in [(\sqrt{\theta})^{a}P_{\start}, (\sqrt{\theta})^{a+1}P_{\start}],
\end{align*}
and
\begin{align*}
    \sum_i \be_i^{-1}f(x_i(j)) \in [(\sqrt{\theta})^{b} F_{\start}, (\sqrt{\theta})^{b+1}F_{\start}].
\end{align*}

If the set $|I^{(a, b)}_i|$ is large, then the number of servers $j \in I^{(a, b)}_i$ at which $i$ is sampled is concentrated which can then in turn be used to approximate $|I^{(a, b)}_i|$. But if $|I^{(a, b)}_i|$ is too small, then we cannot obtain a good approximation using the number of servers in the set $I^{(a, b)}_i$ that sample $i$. In the following lemma, we show that the contribution from servers in the sets $I^{(a, b)}_i$ needs to be considered only if $|I^{(a, b)}_i|$ is large. Let $\textBAD_i$ denote the set of tuples $(a, b)$ for which
\begin{align}
        c_f[|I^{(a, b)}_i|] \le \frac{c_f[s]}{c_f[A \cdot B] \cdot  (A \cdot B) \cdot (\sqrt{\theta})^{a+1}/(1-\varepsilon_2)^2}. \label{eqn:bad-i-defn}  
\end{align}
Let $\textGOOD_i$ be the set of all the remaining tuples $(a, b)$. We first note that the coordinator cannot determine if a particular tuple $(a, b)$ is in the set $\textBAD_i$. Using the properties of $c_f[s]$, we get that if $(a, b) \in \textGOOD_i$, then
\begin{align*}
    |I_{i}^{(a, b)}| \ge \frac{s}{c_f[A \cdot B] \cdot (A \cdot B) \cdot (\sqrt{\theta})^{a+1}/(1-\varepsilon_2)^2}.
\end{align*}
\begin{lemma}
The contribution from servers $j \in \bigcup_{(a, b) \in \textBAD_i} I^{(a, b)}_i$ can be ignored.
\label{lma:bad-can-be-ignored}
\end{lemma}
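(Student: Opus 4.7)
The plan is to bound $\be_{i^*}^{-1}f\bigl(\sum_{(a,b)\in\textBAD_{i^*}}\sum_{j\in I^{(a,b)}_{i^*}}x_{i^*}(j)\bigr)$ by at most $\varepsilon_1(1-\varepsilon_2)^2\be_{i^*}^{-1}f(x_{i^*})$, mirroring the structure used for $\Ignore_{i^*}$ in the preceding lemma. Combined with \eqref{eqn:ignoring-small} for the $\textSMALL_{i^*}$ contribution and the analogous $\Ignore_{i^*}$ bound, three successive applications of approximate invertibility property (iii) (with constant $\varepsilon_1 = 1/\theta''$ triggering shrinkage by $(1-\varepsilon_2)$) then yield $\be_{i^*}^{-1}f\bigl(x_{i^*} - \sum_{j\in\textSMALL_{i^*}\cup\Ignore_{i^*}\cup\bigcup_{(a,b)\in\textBAD_{i^*}}I^{(a,b)}_{i^*}}x_{i^*}(j)\bigr) \ge (1-\varepsilon_2)^3\be_{i^*}^{-1}f(x_{i^*})$, which is the precise sense in which the \textBAD{} contribution can be ignored. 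Only the $i=i^*$ case matters, exactly as in the $\Ignore_i$ lemma.

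First, both bracket constraints combined give, for each $j \in I^{(a,b)}_{i^*}$,
\[
    \be_{i^*}^{-1}f(x_{i^*}(j)) = q_{i^*}(j)\sum_{i'}\be_{i'}^{-1}f(x_{i'}(j)) \le (\sqrt{\theta})^{a+b+2}P_{\start}F_{\start}.
\]
Writing $y^{(a,b)}_{i^*} := \sum_{j\in I^{(a,b)}_{i^*}}x_{i^*}(j)$, the definition of $c_f[|I^{(a,b)}_{i^*}|]$ with Cauchy-Schwarz gives $\be_{i^*}^{-1}f(y^{(a,b)}_{i^*}) \le c_f\bigl[|I^{(a,b)}_{i^*}|\bigr]\cdot|I^{(a,b)}_{i^*}|\cdot(\sqrt{\theta})^{a+b+2}P_{\start}F_{\start}$. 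Invoking the \textBAD{} definition \eqref{eqn:bad-i-defn} now replaces $c_f[|I^{(a,b)}_{i^*}|]\cdot(\sqrt{\theta})^{a+1}$ by $c_f[s](1-\varepsilon_2)^2/(c_f[A\cdot B]\cdot(A\cdot B))$, collapsing the $a$-dependence and leaving a single residual factor $(\sqrt{\theta})^{b+1}$ weighting $|I^{(a,b)}_{i^*}|$.

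Summing over $(a,b)\in\textBAD_{i^*}$ exploits disjointness of the buckets across all $(a,b)$: for every server $j$ in the union, its $b$-bracket yields $(\sqrt{\theta})^{b(j)+1}F_{\start} \le \sqrt{\theta}\sum_{i'}\be_{i'}^{-1}f(x_{i'}(j))$, so that
\[
    \sum_{(a,b)\in\textBAD_{i^*}}|I^{(a,b)}_{i^*}|(\sqrt{\theta})^{b+1}F_{\start} \le \sqrt{\theta}\sum_{i',j}\be_{i'}^{-1}f(x_{i'}(j)) \le \sqrt{\theta}\,(C\log^2 n)\,\be_{i^*}^{-1}f(x_{i^*})
\]
by the global conditioning plus super-additivity. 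Substituting $P_{\start} = \varepsilon_1/(c_f[s]\cdot C\log^2 n)$ yields $\sum_{(a,b)\in\textBAD_{i^*}}\be_{i^*}^{-1}f(y^{(a,b)}_{i^*}) \le \frac{\varepsilon_1(1-\varepsilon_2)^2\sqrt{\theta}}{c_f[A\cdot B]\cdot(A\cdot B)}\be_{i^*}^{-1}f(x_{i^*})$. An outer application of $c_f[|\textBAD_{i^*}|]\le c_f[A\cdot B]$ with Cauchy-Schwarz then gives $\be_{i^*}^{-1}f\bigl(\sum_{(a,b)}y^{(a,b)}_{i^*}\bigr) \le c_f[A\cdot B]\sum\be_{i^*}^{-1}f(y^{(a,b)}_{i^*}) \le \frac{\varepsilon_1(1-\varepsilon_2)^2\sqrt{\theta}}{A\cdot B}\be_{i^*}^{-1}f(x_{i^*}) \le \varepsilon_1(1-\varepsilon_2)^2\be_{i^*}^{-1}f(x_{i^*})$, since $A\cdot B \ge \sqrt{\theta}$ for any reasonable setting of $\theta,\theta',\theta''$.

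The main obstacle is the two-layer bookkeeping: within each bucket one applies $c_f[|I^{(a,b)}_{i^*}|]$ via Cauchy-Schwarz to pass from maxima to sums, while across buckets one again applies $c_f[A\cdot B]$ via Cauchy-Schwarz. The $(\sqrt{\theta})^{a+b+2}$ prefactor produced by the bracket upper bound must cancel exactly against the $(\sqrt{\theta})^{a+1}$ in~\eqref{eqn:bad-i-defn}, leaving precisely the $(\sqrt{\theta})^{b+1}$ that can be charged against $\sum_{i',j}\be_{i'}^{-1}f(x_{i'}(j))$ using disjointness of $b$-buckets over each fixed server $j$. Getting the final constant to be $\varepsilon_1(1-\varepsilon_2)^2$ rather than something larger is precisely why the \textBAD{} definition \eqref{eqn:bad-i-defn} carries the denominator $c_f[A\cdot B]\cdot(A\cdot B)$: this factor is exactly what the outer Cauchy-Schwarz and the $A\cdot B$ count of buckets will inflate away.
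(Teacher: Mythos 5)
Your overall plan matches the paper's—bound $\be_{i^*}^{-1}f\bigl(\sum_{(a,b)\in\textBAD_{i^*}}\sum_{j\in I^{(a,b)}_{i^*}}x_{i^*}(j)\bigr)$ by $\varepsilon_1(1-\varepsilon_2)^2\be_{i^*}^{-1}f(x_{i^*})$, then stack three applications of approximate invertibility—but the middle of your derivation takes a detour that introduces an extra $\sqrt{\theta}$ factor and then tries to absorb it with an assumption that the theorem does not grant.

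Concretely: you first use \emph{both} bracket constraints to get the pointwise bound $\be_{i^*}^{-1}f(x_{i^*}(j)) \le (\sqrt{\theta})^{a+b+2}P_{\start}F_{\start}$, multiply by $|I^{(a,b)}_{i^*}|$, and then resum against the $b$-brackets using disjointness. Re-entering the $b$-bracket forces you to use $(\sqrt{\theta})^{b(j)+1}F_{\start} \le \sqrt{\theta}\sum_{i'}\be_{i'}^{-1}f(x_{i'}(j))$, which is where the spurious $\sqrt{\theta}$ comes from: the $b$-bracket guarantees $(\sqrt{\theta})^{b(j)}F_{\start}\le\sum_{i'}\be_{i'}^{-1}f(x_{i'}(j))$, one power of $\sqrt{\theta}$ short of what you need. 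Your final step requires $A\cdot B\ge\sqrt{\theta}$ to swallow this, but that inequality does not follow from the theorem's hypotheses: $A,B$ scale like $\log(\cdot)/\log\theta$, which shrinks as $\theta$ grows while $\sqrt{\theta}$ grows, so there are valid settings (e.g.\ moderately large $\theta$, moderate $s$) where $A\cdot B<\sqrt{\theta}$. For the specific choice $\theta=2$ used in the $F_k$ corollary this happens to hold, but the lemma is stated for arbitrary $\theta>1$.

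The detour is unnecessary. The paper avoids the $b$-bracket entirely in this lemma: for each $(a,b)$ it bounds the \emph{sum} $\sum_{j\in I^{(a,b)}_i}\be_i^{-1}f(x_i(j))$ directly, using only the $a$-bracket $q_i(j)\le(\sqrt{\theta})^{a+1}P_{\start}$ to write $\be_i^{-1}f(x_i(j))\le(\sqrt{\theta})^{a+1}P_{\start}\sum_{i'}\be_{i'}^{-1}f(x_{i'}(j))$, and then $\sum_{j\in I^{(a,b)}_i}\sum_{i'}\be_{i'}^{-1}f(x_{i'}(j))\le\sum_{i'}\be_{i'}^{-1}f(x_{i'})\le(C\log^2 n)\be_{i^*}^{-1}f(x_{i^*})$ by super-additivity plus the conditioned heavy-hitter event. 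This yields a per-bucket bound $(\sqrt{\theta})^{a+1}P_{\start}(C\log^2 n)\be_{i^*}^{-1}f(x_{i^*})$ with no $b$-dependence and no extra power of $\sqrt{\theta}$; multiplying by $c_f[|I^{(a,b)}_i|]$, applying \eqref{eqn:bad-i-defn} to cancel $(\sqrt{\theta})^{a+1}$, and summing over the at most $A\cdot B$ buckets gives exactly $(1-\varepsilon_2)^2\varepsilon_1\be_{i^*}^{-1}f(x_{i^*})$. Replacing your pointwise/$b$-bracket step with this direct per-bucket bound closes the gap without any extra assumption.
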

\begin{proof}
By definition of the set of servers $I^{(a, b)}_i$, 
\begin{align*}
    \sum_{j \in I^{(a, b)}_i}\be_i^{-1}f(x_i(j)) &\le \sum_{j \in I_i^{(a, b)}}(\sqrt{\theta})^{a+1}P_{\start}\sum_{i' \in [n]}\be_{i'}^{-1}f(x_{i'}(j))\\
    &\le (\sqrt{\theta})^{a+1}P_{\start}\sum_{i' \in [n]}\be_{i'}^{-1}\sum_{j \in I_i^{(a, b)}}f(x_{i'}(j))\\
    &\le (\sqrt{\theta})^{a+1}P_{\start}  \sum_{i' \in [n]}\be_{i'}^{-1}f(x_{i'})\\
    &\le (\sqrt{\theta})^{a+1}P_{\start} \cdot (C\log^2 n) \cdot \be_{i^*}^{-1}f(x_{i^*}).
\end{align*}
By definition of the parameter $c_f$, we have
    \begin{align*}
        &\be_i^{-1}f\left(\sum_{(a, b) \in \textBAD_i}\sum_{j \in I^{(a, b)}_i}x_i(j)\right)\\
        &\le c_f[A \cdot B]\left(\sum_{(a, b) \in \textBAD_i} \be_i^{-1}f(\sum_{j \in I^{(a, b)}_i}x_i(j))\right)\\
        &\le c_f[A \cdot B] \sum_{(a, b) \in \textBAD_i} c_f[|I^{(a, b)}_i|]\sum_{j \in I^{(a, b)}_i}\be_i^{-1}f(x_i(j))\\
        &\le c_f[A \cdot B]\sum_{(a, b) \in \textBAD_i}c_f[|I^{(a, b)}_i|] \cdot (\sqrt{\theta})^{a+1}P_{\start} \cdot (C\log^2 n) \cdot \be_{i^*}^{-1}f(x_{i^*})\\
        &\le c_f[A \cdot B] \sum_{(a, b) \in \textBAD_i}c_f[|I^{(a, b)}_i|] \cdot (\sqrt{\theta})^{a+1} \cdot \frac{\varepsilon_1}{c_f[s] \cdot (C\log^2 n)} \cdot (C\log^2 n) \cdot \be_{i^*}^{-1}f(x_{i^*}).
    \end{align*}
    Using \eqref{eqn:bad-i-defn}, we get
    \begin{align*}
        \be_i^{-1}f\left(\sum_{(a, b) \in \textBAD_i}\sum_{j \in I_i^{(a, b)}}x_i(j)\right) \le (1-\varepsilon_2)^2\varepsilon_1 \cdot \be_{i^*}^{-1}f(x_{i^*}).
    \end{align*}
    Taking $i = i^*$, we get that
    \begin{align*}
        \be_{i^*}^{-1}f(x_{i^*} - \sum_{j \in \textSMALL_{i^*}}x_{i^*}(j) - \sum_{j \in \Ignore_{i^*}}x_{i^*}(j) - \sum_{(a, b) \in \textBAD_i}\sum_{j \in I^{(a, b)}_{i^*}}x_{i^*}(j)) \ge (1-\varepsilon_2)^3 \cdot \be_{i^*}^{-1}f(x_{i^*}).
    \end{align*}
    Thus the contribution from the servers in the set $I^{(a, b)}_i$ for tuples $(a, b) \in \textBAD_i$ can be ignored.
\end{proof}
Now we show that the coordinator can compute $\hat{x}_i$ by essentially approximating the following quantity:
\begin{align*}
    &x_{i} - \sum_{j \in \textSMALL_{i}}x_{i}(j) - \sum_{j \in \Ignore_{i}}x_{i}(j) - \sum_{(a, b) \in \textBAD_i}\sum_{j \in I^{(a, b)}_{i}}x_{i}(j)\\
    &= \sum_{j \in \textLARGE_i}x_i(j) + \sum_{(a, b) \in \textGOOD_i}\sum_{j \in I_{i}^{(a, b)}}x_i(j).
\end{align*}
\subsubsection{Algorithm to compute \texorpdfstring{$\hat{x}_i$}{x-hat-i}}
We will go on to give an algorithm that can approximate $\hat{x}_i$ given the samples that the coordinator receives in the first round. We have already seen that given an index $i$, all the servers in the set $\textLARGE_i$ sample the coordinate $i$ with a large probability and therefore we can compute the quantity $\sum_{j \in \textLARGE_i}x_i(j)$ exactly with high probability. We have also seen that the contribution from servers that are in the set $\textSMALL_i$ and in the set $\Ignore_i$ can be ignored.

The main remaining contribution to $x_i$ that is to be accounted is from tuples $(a, b) \in \textGOOD_i$. An important issue we need to solve for is the fact that the coordinator cannot determine if a given tuple $(a, b)$ is in the set $\textGOOD_i$ or in $\textBAD_i$. We will show that if $(a, b) \in \textGOOD_i$, then $i$ is sampled at many servers in the set $I^{(a, b)}_i$ and depending on the \emph{absolute} number of servers in $I^{(a, b)}_i$ that sample $i$, we mark a tuple $(a, b)$ as ``probably good for $i$''. We argue that, with high probability, all tuples $(a, b) \in \textGOOD_i$ are marked ``probably good for $i$'' and that for the tuples $(a, b)$ in $\textBAD_i$ that are marked ``probably good for $i$'', we will still obtain good approximations for $|I^{(a, b)}_i|$.

The following lemma shows why approximating $|I_i^{(a, b)}|$ is enough to approximate the contributions of the servers in the set $I^{(a, b)}_i$.
\begin{lemma}
The value $|I^{(a, b)}_i|$ can be used to approximate $\sum_{j \in I^{(a, b)}_i}x_i(j)$ up to a factor of $\theta'$.
\label{lma:size-to-contribution}
\end{lemma}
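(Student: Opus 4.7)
The idea is that the two-dimensional bucketing $(a,b)$ is fine enough that every server $j \in I^{(a,b)}_i$ has essentially the same value of $f(x_i(j))$, and then the approximate invertibility of $f$ lets us transfer the closeness from $f$-values back to the preimages $x_i(j)$. Concretely, I will argue that all $x_i(j)$ for $j \in I^{(a,b)}_i$ lie in an interval of multiplicative width at most $\theta'$, so that replacing each $x_i(j)$ by the left endpoint of that interval yields an estimator of the form $|I^{(a,b)}_i| \cdot L$, where $L$ is a quantity the coordinator can compute from $a$, $b$, $\be_i$, $P_{\start}$, $F_{\start}$ alone.

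\textbf{Key step 1 (collapse of $f$-values).} For every $j \in I^{(a,b)}_i$, the defining inequalities of $I^{(a,b)}_i$ give
\[
\be_i^{-1} f(x_i(j)) \;=\; q_i(j) \cdot \sum_{i'} \be_{i'}^{-1} f(x_{i'}(j)) \;\in\; \left[(\sqrt{\theta})^{a+b} P_{\start} F_{\start},\; (\sqrt{\theta})^{a+b+2} P_{\start} F_{\start}\right],
\]
an interval of multiplicative width exactly $\theta$. Multiplying through by $\be_i$, the same bound (with $\be_i$ in the endpoints) holds for $f(x_i(j))$.

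\textbf{Key step 2 (transfer to preimages via approximate invertibility).} Let $y_{\min}$ and $y_{\max}$ be the unique non-negative values with $f(y_{\min})$ and $f(y_{\max})$ equal to the left and right endpoints of the interval above; monotonicity of $f$ gives $x_i(j) \in [y_{\min}, y_{\max}]$ for every $j \in I^{(a,b)}_i$. Applying the invertibility property $f(\theta' y) \ge \theta f(y)$ with $y = y_{\min}$ yields $f(\theta' y_{\min}) \ge \theta f(y_{\min}) = f(y_{\max})$, and monotonicity then gives $y_{\max} \le \theta' y_{\min}$.

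\textbf{Key step 3 (the estimator).} Summing over $j \in I^{(a,b)}_i$, we obtain
\[
|I^{(a,b)}_i| \cdot y_{\min} \;\le\; \sum_{j \in I^{(a,b)}_i} x_i(j) \;\le\; |I^{(a,b)}_i| \cdot y_{\max} \;\le\; \theta' \cdot |I^{(a,b)}_i| \cdot y_{\min}.
\]
Since $y_{\min}$ depends only on $a$, $b$, $\be_i$, $P_{\start}$, $F_{\start}$, all of which are known to the coordinator (once the coordinator computes $P_{\start}$ and $F_{\start}$ from the first-round messages), the coordinator can output $|I^{(a,b)}_i| \cdot y_{\min}$ as a $\theta'$-approximation to $\sum_{j \in I^{(a,b)}_i} x_i(j)$.

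\textbf{Main obstacle.} The only subtlety is ensuring that $y_{\min}$ is actually computable and well-defined; this requires $f$ to be continuous and unbounded (so inverses exist), which is implicit in the approximate invertibility setup (together with super-additivity and $f(0)=0$). The calculation itself is short once one has both halves of the invertibility definition at hand.
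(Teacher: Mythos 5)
Your proof is correct and takes essentially the same approach as the paper: you pin down $\be_i^{-1}f(x_i(j))$ to the bucket interval of multiplicative width $\theta$, pass to preimages via monotonicity, and use the invertibility property $f(\theta' y)\ge \theta f(y)$ to conclude that all $x_i(j)$ lie within a factor $\theta'$ of $y_{\min}=f^{-1}(\be_i(\sqrt{\theta})^{a+b}P_{\start}F_{\start})$, then sum. The paper phrases the last step as $\theta' f^{-1}(z)\ge f^{-1}(\theta z)$, but that is the same inequality you derive directly.
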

\begin{proof}
    For all servers $j \in I^{(a, b)}_i$, by definition, we have
    \begin{align*}      (\sqrt{\theta})^{a+b}P_{\start}F_{\start} \le \be_i^{-1}(f(x_i(j))) \le (\sqrt{\theta})^{a+b+2}P_{\start}F_{\start}
    \end{align*}
    which further implies using the monotonicity of $f$ that
    \begin{align*}
        f^{-1}(\be_i(\sqrt{\theta})^{a+b}P_{\start}F_{\start})\le x_i(j) \le f^{-1}(\be_i(\sqrt{\theta})^{a+b+2}P_{\start}F_{\start}) \le \theta'f^{-1}(\be_i(\sqrt{\theta})^{a+b}P_{\start}F_{\start}).
    \end{align*}
    Now we note
    \begin{align*}
       \frac{\sum_{j \in I^{(a, b)}_i}x_i(j)}{\theta'}\le |I^{(a, b)}_i| \cdot f^{-1}(\be_i(\sqrt{\theta})^{a+b}P_{\start}F_{\start}) \le \sum_{j \in I^{(a, b)}_i}x_i(j).&\qedhere
    \end{align*}
\end{proof}
We will then show that the expected number of servers in the set $I^{(a, b)}_i$ that sample $i$ can be used to obtain an approximation for $|I^{(a, b)}_i|$. 
\begin{lemma}
    Let $p_i(j) \coloneqq 1 - (1 - q_i(j))^N$ denote the probability that the coordinate $i$ is among the $N$ coordinates sampled at server $j$. For any tuple $(a, b)$, 
    \begin{align*}
        |I^{(a, b)}_i| \cdot \frac{(1 - (1 - (\sqrt{\theta})^{a+1}P_{\start})^N)}{\sqrt{\theta}}\le \sum_{j \in I^{(a, b)}_i}p_i(j) \le |I^{(a, b)}_i| \cdot (1 - (1 - (\sqrt{\theta})^{a+1}P_{\start})^N).
    \end{align*}
    \label{lma:expectation-to-size}
\end{lemma}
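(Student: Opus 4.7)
The plan is to control $p_i(j) = 1 - (1-q_i(j))^N$ uniformly over $j \in I^{(a,b)}_i$ using only the fact that, by the definition of $I^{(a,b)}_i$, the probabilities $q_i(j)$ lie in the interval $[(\sqrt{\theta})^a P_{\start}, (\sqrt{\theta})^{a+1} P_{\start}]$. The upper bound is immediate from monotonicity: the map $q \mapsto 1 - (1-q)^N$ is increasing, so for every $j \in I^{(a,b)}_i$ we have $p_i(j) \le 1 - (1 - (\sqrt{\theta})^{a+1}P_{\start})^N$, and summing over the $|I^{(a,b)}_i|$ servers yields the upper bound claimed in the statement.

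For the lower bound, I would use a concavity/scaling argument. Define $g(q) \coloneqq 1 - (1-q)^N$ for $q \in [0,1]$. This function is concave with $g(0) = 0$, which implies that the ratio $g(q)/q$ is non-increasing in $q$. Setting $q = (\sqrt{\theta})^a P_{\start}$ and $q' = (\sqrt{\theta})^{a+1} P_{\start} = \sqrt{\theta}\cdot q$ and using $g(q)/q \ge g(q')/q'$, I obtain
\begin{align*}
    g((\sqrt{\theta})^a P_{\start}) \;\ge\; \frac{1}{\sqrt{\theta}}\, g((\sqrt{\theta})^{a+1} P_{\start}) \;=\; \frac{1}{\sqrt{\theta}}\Bigl(1 - (1 - (\sqrt{\theta})^{a+1}P_{\start})^N\Bigr).
\end{align*}
Then, since $q_i(j) \ge (\sqrt{\theta})^a P_{\start}$ for $j \in I^{(a,b)}_i$, monotonicity of $g$ gives $p_i(j) = g(q_i(j)) \ge g((\sqrt{\theta})^a P_{\start})$, and summing over $j \in I^{(a,b)}_i$ produces the lower bound.

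The only mild subtlety is verifying the concavity of $g$ (equivalently, that $g(q)/q$ is non-increasing). This follows from $g''(q) = -N(N-1)(1-q)^{N-2} \le 0$ for $N \ge 1$ and $q \in [0,1]$; alternatively, one can derive $g(\sqrt{\theta}\cdot q) \le \sqrt{\theta} \cdot g(q)$ directly using Bernoulli-style inequalities on $(1-q)^N$. I do not expect any obstacle beyond this elementary check, and the argument requires nothing about the second coordinate $b$ of the bucket index, which is consistent with the statement depending only on $a$.
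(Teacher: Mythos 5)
Your proof is correct and follows essentially the same route as the paper, which just cites ``monotonicity and concavity of $1-(1-x)^N$''; you spell out the concavity step (via $g(q)/q$ non-increasing, since $g(0)=0$), but the underlying argument is identical.
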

\begin{proof}
    Using monotonicity and concavity of the function $1 - (1-x)^N$ in the interval $[0, 1]$, for all $j \in I^{(a, b)}_i$, we have
    \begin{align*}
        (1 - (1 - (\sqrt{\theta})^{a+1}P_{\start})^N) \ge p_i(j) \ge  \frac{(1 - (1 - (\sqrt{\theta})^{a+1}P_{\start})^N)}{\sqrt{\theta}}.
    \end{align*}
    Summing the inequalities over all $j \in I^{(a,b)}_i$ gives the proof.
\end{proof}
If $\sum_{j \in I^{(a, b)}_i}p_j$ is large enough, we obtain using a Chernoff bound that the number of servers in $I^{(a, b)}_i$ at which $i$ is sampled is highly concentrated around the mean $\sum_{j \in I^{(a, b)}_i}p_j$ and using the above lemmas, we can obtain an estimate for $|I^{(a, b)}_i|$ and therefore estimate $\sum_{j \in I^{(a, b)}_i}x_i(j)$. We will follow this approach to approximate $\sum_{j \in I^{(a, b)}_i}x_i(j)$. Let $\bX^{(a, b)}_i$ be the number of servers in $j \in I^{(a, b)}_i$ that sample the coordinate $i$. By linearity of expectation, we have $\E[\bX^{(a, b)}_i] = \sum_{j \in I^{(a, b)}_i}p_j$. We will now use the following standard concentration bounds.
\begin{lemma}
Let $\bY_j$ for $j \in [t]$ be a Bernoulli random variable with $\Pr[\bY_j = 1] = p_j$. Let $\bY_1, \ldots, \bY_t$ be mutually independent and $\bX = \bY_1 + \cdots + \bY_t$. Then the following inequalities hold:
\begin{enumerate}
    \item If $\sum_j p_j \ge 100\log n$, then 
    \begin{align*}
        \Pr[\bX = (1 \pm 1/3)\sum_{j=1}^t p_j] \ge 1 - 1/n^3.
    \end{align*}
    \item For any values of $p_1, \ldots, p_t$, 
    \begin{align*}
        \Pr[\bX < 2\sum_{j=1}^t p_j + 4\log n] \ge 1 - 1/n^4.
    \end{align*}
\end{enumerate}
\label{lma:concentration}
\end{lemma}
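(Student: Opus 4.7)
Both parts are instances of standard Chernoff bounds for sums of independent Bernoulli random variables, and I will deduce them from a single multiplicative form. Let $\mu = \sum_{j=1}^t p_j = \E[\bX]$. I will invoke the following classical bounds without reproving them: for any $\delta \in (0,1)$, $\Pr[|\bX - \mu| \ge \delta \mu] \le 2\exp(-\delta^2 \mu / 3)$, and for any $\delta \ge 1$, $\Pr[\bX \ge (1+\delta)\mu] \le \exp(-\delta \mu / 3)$.

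For part (1), I apply the first inequality with $\delta = 1/3$. Under the hypothesis $\mu \ge 100 \log n$, the tail probability is at most $2\exp(-(1/9)(100 \log n)/3) = 2\exp(-(100/27)\log n)$, which is bounded by $1/n^3$ for all sufficiently large $n$. Union bounding the upper and lower tails gives the two-sided statement.

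For part (2), I split on the size of $\mu$. If $\mu \ge 4\log n$, then $2\mu + 4\log n \le 3\mu = (1+2)\mu$, so the second inequality with $\delta = 2$ yields $\Pr[\bX \ge 2\mu + 4\log n] \le \exp(-2\mu/3) \le \exp(-(8/3)\log n) \le 1/n^4$ for large enough $n$. If instead $\mu < 4\log n$, I compare $\bX$ to the dominated mean $\mu' = 4\log n \ge \mu$: stochastically dominating $\bX$ by a Binomial with mean $\mu'$ (or, equivalently, applying Bennett's inequality directly) gives $\Pr[\bX \ge 2\mu + 4\log n] \le \Pr[\bX' \ge 3\mu'] \le \exp(-2\mu'/3) = \exp(-(8/3)\log n) \le 1/n^4$, where $\bX'$ is the dominating variable. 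Combining the two cases gives the claimed bound.

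The only mild subtlety is the small-$\mu$ case of part (2), where the additive $4\log n$ slack (rather than a purely multiplicative bound) is essential; the argument above handles it by inflating the mean up to $4\log n$ before applying the multiplicative Chernoff bound. No further technique is required.
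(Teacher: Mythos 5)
Part~(1) of your argument is correct and is exactly the paper's argument: multiplicative Chernoff with $\delta=1/3$, giving failure probability $2\exp(-\mu/27)\le n^{-3}$ once $\mu\ge 100\log n$.

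Part~(2) has two genuine gaps. First, in the case $\mu\ge 4\log n$, the Chernoff form you quote, $\Pr[\bX\ge(1+\delta)\mu]\le\exp(-\delta\mu/3)$, with $\delta=2$ gives only $\exp(-2\mu/3)\le\exp(-(8/3)\log n)=n^{-8/3}$, which is \emph{not} $\le n^{-4}$; this does not improve for large $n$. (A tighter form, e.g.\ $\exp(-\delta^2\mu/(2+\delta))$, does give $\exp(-\mu)\le n^{-4}$ here, but that is not what you invoked.) Second, and more seriously, in the case $\mu<4\log n$ the chain $\Pr[\bX\ge 2\mu+4\log n]\le\Pr[\bX'\ge 3\mu']$ is simply false: dominance gives $\Pr[\bX\ge T]\le\Pr[\bX'\ge T]$ at the \emph{same} threshold $T=2\mu+4\log n$, but here $T<3\mu'=12\log n$, so the event $\{\bX'\ge T\}$ contains $\{\bX'\ge 3\mu'\}$ and the inequality goes the wrong way. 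Inflating the mean to $\mu'=4\log n$ actually hurts you: as $\mu\to 0$, $T\to 4\log n=\mu'$, so $\Pr[\bX'\ge T]$ is roughly $1/2$, while the true $\Pr[\bX\ge T]$ is tiny. The right move is precisely the one the paper makes: apply Bernstein's inequality once, with deviation $t=\mu+4\log n$ and variance $\sigma^2\le\mu$, which gives
\begin{align*}
\Pr[\bX\ge\mu+t]\le\exp\!\left(-\frac{t^2/2}{\sigma^2+t/3}\right)\le\exp\!\left(-\frac{3(\mu+4\log n)^2}{2(4\mu+4\log n)}\right)\le\exp(-4\log n),
\end{align*}
where the last step uses $3(\mu+b)^2-2b(4\mu+b)=2\mu^2+(\mu-b)^2\ge 0$ with $b=4\log n$, valid uniformly in $\mu$. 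No case split is needed, and no dominance argument is involved.
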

\begin{proof}
To prove the first inequality, we use the multiplicative Chernoff bound. We have 
\begin{align*}
    \Pr[\bX = (1 \pm 1/3)\sum_{j=1}^t p_j] \le 2\exp\left(-\frac{\sum_{j=1}^t p_j}{27}\right).
\end{align*}
If $\sum_{j=1}^t p_j \ge 100\log n$, then the RHS is at most $1/n^3$. To prove the second inequality, we use the Bernstein concentration bound. We get
\begin{align*}
    \Pr[\bX \ge 2\sum_{j=1}^t p_j + 4\log n] \le \exp\left(- \frac{(\sum_{j} p_j + 4\log n)^2}{\sum_j p_j + (\sum_j p_j + 4\log n)/3}\right) \le \exp(-4\log n) \le 1/n^4.&\qedhere
\end{align*}
\end{proof}
We note that if $\sum_j p_j \ge 100\log n$, then with probability $\ge 1 - 1/n^3$, we have $\bX = (1 \pm 1/3)\log n \ge 66\log n$ and if $\sum_j p_j \le 30\log n$, then with probability $\ge 1 - 1/n^4$, we have $\bX < 64\log n$. Thus the value of $\bX$ can be used to separate the cases of $\sum_j p_j \ge 100\log n$ or $\sum_j p_j \le 30\log n$ with high probability.

Now we show that if $(a, b) \in \textGOOD_i$ and $N$ is large enough, then $\sum_{j \in I^{(a, b)}_i}p_j \ge 100\log n$.
\begin{lemma}
    If $(a, b) \in \textGOOD_i$ and the number of samples $N$ at each coordinator satisfies
    \begin{align*}
        N \ge O\left(\frac{c_f[s]}{s} \cdot \frac{c_f[A \cdot B] \cdot (A \cdot B) \cdot \sqrt{\theta} \cdot \log^4 n}{\varepsilon_1(1-\varepsilon_2)^2}\right),
    \end{align*}
    then either $(\sqrt{\theta})^{a+1}P_{\start} \ge 4\log n/N$ or $\sum_{j \in I^{(a, b)}_i}p_j \ge 100\log n$.
    \label{lma:good-i-has-concentration}
\end{lemma}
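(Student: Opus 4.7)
The plan is to split into two cases based on the size of $(\sqrt{\theta})^{a+1} P_{\start}$ relative to $4 \log n / N$. If we are already in the regime $(\sqrt{\theta})^{a+1}P_{\start} \ge 4\log n / N$, there is nothing to show; so the interesting case is the opposite one, where I need to produce the lower bound $\sum_{j \in I^{(a,b)}_i} p_i(j) \ge 100 \log n$.

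In this interesting case, I would start from the lower bound in Lemma~\ref{lma:expectation-to-size}, which gives
\[
  \sum_{j \in I^{(a,b)}_i} p_i(j) \;\ge\; \frac{|I^{(a,b)}_i|}{\sqrt{\theta}} \cdot \bigl(1 - (1 - (\sqrt{\theta})^{a+1} P_{\start})^N\bigr).
\]
Since $(\sqrt{\theta})^{a+1} P_{\start} \le 4 \log n/N$, I can linearize the factor $1-(1-x)^N$ using the same concavity argument used in the proof of Lemma~\ref{lma:i-star-is-sampled}: the function $g(x) = 1-(1-x)^N$ vanishes at $0$ and is concave on $[0,1]$, so $g(x)/x$ is decreasing, which yields
\[
  1 - (1 - (\sqrt{\theta})^{a+1}P_{\start})^N \;\ge\; \frac{N(\sqrt{\theta})^{a+1}P_{\start}}{4 \log n}\bigl(1 - n^{-4}\bigr).
\]
Next, since $(a,b) \in \textGOOD_i$, the definition \eqref{eqn:bad-i-defn} (together with the earlier inequality $c_f[s] \ge c_f[t] \cdot s/t$) gives $|I^{(a,b)}_i| \ge s(1-\varepsilon_2)^2 / \bigl(c_f[A \cdot B] \cdot (A \cdot B) \cdot (\sqrt{\theta})^{a+1}\bigr)$. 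Substituting this into the display above, the $(\sqrt{\theta})^{a+1}$ factor cancels cleanly, leaving
\[
  \sum_{j \in I^{(a,b)}_i} p_i(j) \;\ge\; \frac{s \cdot (1-\varepsilon_2)^2 \cdot N \cdot P_{\start}}{c_f[A \cdot B] \cdot (A \cdot B) \cdot \sqrt{\theta} \cdot 4\log n} \cdot (1 - n^{-4}).
\]
Plugging in $P_{\start} = \varepsilon_1/(c_f[s] \cdot C\log^2 n)$ and demanding the right-hand side exceed $100 \log n$ yields exactly the condition on $N$ stated in the lemma.

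The only slightly delicate step is the cancellation of $(\sqrt{\theta})^{a+1}$: it is what makes the bound uniform over all \textGOOD{} tuples $(a,b)$, which is essential since the coordinator has no control over which buckets turn out to be good. The rest is bookkeeping with the definitions of $P_{\start}$, $A$, $B$ and the parameter $c_f[\cdot]$.
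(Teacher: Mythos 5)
Your proof is correct and follows essentially the same route as the paper: assume the complementary case, linearize $1-(1-x)^N$ via concavity, invoke Lemma~\ref{lma:expectation-to-size} together with the lower bound on $|I^{(a,b)}_i|$ implied by $(a,b)\in\textGOOD_i$, and observe the $(\sqrt{\theta})^{a+1}$ cancellation before substituting $P_{\start}$. No substantive differences.
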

\begin{proof}
Assume $(\sqrt{\theta})^{a+1}P_{\start} < 4\log n/N$. By concavity of the function $1 - (1-x)^N$ in the interval $[0,1]$, 
\begin{align*}
    1 - (1 - (\sqrt{\theta})^{a+1}P_{\start})^N \ge \frac{N(\sqrt{\theta})^{a+1}P_{\start}}{4\log n} (1 - 1/n^4).
\end{align*}
For $(a, b) \in \textGOOD_i$, we then obtain
\begin{align*}
    \sum_{j \in I^{(a, b)}_i}p_j &\ge |I^{(a, b)}_i| \cdot \frac{N(\sqrt{\theta})^{a+1}P_{\start}}{4\log n\sqrt{\theta}}(1 - 1/n^4)\\
    &\ge \frac{s}{c_f[A \cdot B] \cdot (A \cdot B) \cdot (\sqrt{\theta})^{a+1}/(1-\varepsilon_2)^2} \frac{N(\sqrt{\theta})^{a+1}\varepsilon_1}{4C\log^3 n \cdot c_f[s] \cdot \sqrt{\theta}} \cdot (1 - 1/n^4)\\
    &\ge 100\log n. \qedhere
\end{align*}
\end{proof}
With a high probability, for all coordinates $i$ and all servers $j$ in the set $I^{(a, b)}_i$ for some $a$ with $(\sqrt{\theta})^{a+1}P_{\start} \ge 4\log n/N$, the coordinate $i \in \SC_j$. So the contribution from such servers can be computed exactly akin to the servers in the set $\textLARGE_j$. 

Now consider all the tuples $(a, b) \in \textGOOD_i$ with $(\sqrt{\theta})^{a+1}P_{\start} < 4\log n/N$. The above lemma shows that $\sum_{j \in I^{(a, b)}_i}p_j \ge 100\log n$. Thus, if we mark all the tuples $(a, b)$ with $\bX^{(a, b)}_i \ge 66\log n$ as ``probably good for $i$'', then with a probability $\ge 1 - 1/n^2$, all the tuples $(a, b)$ in $\textGOOD_i$ are marked as ``probably good for $i$'' and  any tuple $(a, b) \in \textBAD_i$ marked as ``probably good for $i$'' satisfies
\begin{align*}
    \bX^{(a, b)}_i \le 2.5 \sum_{j \in I^{(a, b)}_i}p_j.
\end{align*}
We now consider the following algorithm for computing $\hat{x}_i$ using which we then compute $\Est_i$:
\begin{enumerate}
    \item Let $\bS_i = \setbuilder{j}{i \in \SC_j}$ be the set of all the servers that sample the coordinate $i$.
    \item Let $\hat{x}_i \gets 0$ denote our initial estimate for $x_i$.
    \item Let $\bL_i = \setbuilder{j \in \bS}{ \frac{\be_i^{-1}f(x_i(j))}{\sum_{i}\be_i^{-1}f(x_i(j))} > \frac{4s}{c_f[s]\log^2 n}}$. This corresponds to the servers in $\textLARGE_i$ that have sampled $i$. With high probability, $\bL_i = \textLARGE_i$. Note that we know the value $x_i(j)$ for all $j \in \bL_i$.
    \item Update $\hat{x}_i \gets \hat{x}_i + \sum_{j \in \bL_i}x_i(j)$.
    \item Update $\bS_i \gets \bS_i \setminus \bL_i$. 
    \item Let $\mathbf{Sm}_i = \setbuilder{j \in \bS_i}{\frac{\be_i^{-1}f(x_i(j))}{\sum_i \be_i^{-1}f(x_i(j))} < \frac{\varepsilon_1}{c_f[s] \cdot (C\log^2 n)}}$. Corresponds to the servers in $\textSMALL_i$ that have sampled $i$ and therefore the contribution from these servers can be ignored.
    \item Update $\bS_i \gets \bS_i \setminus \mathbf{Sm}_i$.
    \item For each $a=0, 1, \ldots, A-1$ and $b = 0, 1, \ldots, B-1$, set $\bS^{(a, b)}_i \gets \setbuilder{j \in \bS_i}{j \in I^{(a, b)}_i}$. Note that since for all $j \in \bS_i$, we know the value of $x_i(j)$ and therefore we can compute the set $\bS^{(a, b)}_i$.
    \item For all tuples $(a, b)$ with $(\sqrt{\theta})^{a+1}P_{\start} > 4\log n/N$, with high probability $\bS^{(a, b)}_i = I^{(a, b)}_i$ and we update $\hat{x}_i \gets \hat{x}_i + \sum_{j \in \bS^{(a, b)}_i}x_i(j)$.
    \item For all other tuples $(a, b)$, if $|\bS^{(a, b)}_i| \ge 66\log n$, we mark $(a, b)$ as ``probably good for $i$'' and ignore the rest of the tuples.
    \item For all tuples $(a, b)$ marked as ``probably good for $i$'', we update
    \begin{align*}
        \hat{x}_i \gets \hat{x}_i + \frac{(2/5)|\bS_i^{(a, b)}|}{1 - (1 - (\sqrt{\theta})^{a+1}P_{\start})^N}f^{-1}(\be_i(\sqrt{\theta})^{a+b}P_{\start}F_{\start}).
    \end{align*}
    \item Compute $\Est_i \coloneqq \be_i^{-1}f(\hat{x}_i)$.
\end{enumerate}
Using all the lemmas we have gone through till now, we will argue that for all $i \in \SC$, $\hat{x}_i \le x_i$ with a high probability, which proves that $\Est_i$ is upper bounded by $\be_i^{-1}f(x_i)$. We then prove $\Est_{i^*}$ is large.
\begin{lemma}
    With probability $\ge 1 - 1/n$, for all $i \in [n]$, $\hat{x}_i \le x_i$ which implies $\Est_i \le \be_i^{-1}f(x_i)$.
    \label{lma:est-i-is-always}
\end{lemma}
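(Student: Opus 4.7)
The plan is to verify $\hat{x}_i \le x_i$ deterministically on a single high-probability event, at which point monotonicity of $f$ immediately yields $\Est_i = \be_i^{-1}f(\hat{x}_i) \le \be_i^{-1}f(x_i)$. The estimator $\hat{x}_i$ is accumulated from three pairwise disjoint groups of servers: (a) the sampled ``probably large'' set $\bL_i$; (b) the buckets $\bS^{(a,b)}_i$ for tuples with $(\sqrt{\theta})^{a+1}P_{\start} > 4\log n/N$, which by the same argument as Lemma~\ref{lma:all-large-are-sampled} satisfy $\bS^{(a,b)}_i = I^{(a,b)}_i$ with probability $\ge 1-1/n^3$; and (c) the remaining buckets marked ``probably good for $i$''. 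Disjointness is built in, since $\bL_i$ and $\mathbf{Sm}_i$ are removed from $\bS_i$ before bucketing and the $I^{(a,b)}_i$ are themselves disjoint (indeed, they are disjoint subsets of $[s]\setminus(\textLARGE_i \cup \textSMALL_i)$). It therefore suffices to bound each group's contribution by the corresponding partial sum of the nonnegative numbers $\{x_i(j)\}_j$ and add.

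Groups (a) and (b) contribute literal sums of nonnegative $x_i(j)$'s, so the bound is free. The real work is in (c). I would apply the Bernstein tail in Lemma~\ref{lma:concentration}(2) to $|\bS^{(a,b)}_i|$, viewed as a sum of independent Bernoulli indicators with means $p_i(j)$, to obtain, with probability $\ge 1-1/n^4$, that $|\bS^{(a,b)}_i| \le 2\sum_{j \in I^{(a,b)}_i}p_i(j) + 4\log n$. Because the marking rule requires $|\bS^{(a,b)}_i| \ge 66\log n$, this reverses to give $\sum_{j\in I^{(a,b)}_i}p_i(j) \ge 31\log n$ and hence $|\bS^{(a,b)}_i| \le (5/2)\sum_{j\in I^{(a,b)}_i}p_i(j)$. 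Combined with the upper half of Lemma~\ref{lma:expectation-to-size}, this yields
\[
\frac{(2/5)\,|\bS^{(a,b)}_i|}{1 - (1 - (\sqrt{\theta})^{a+1}P_{\start})^N} \le |I^{(a,b)}_i|,
\]
and Lemma~\ref{lma:size-to-contribution} then converts this into $|I^{(a,b)}_i|\cdot f^{-1}(\be_i(\sqrt{\theta})^{a+b}P_{\start}F_{\start}) \le \sum_{j\in I^{(a,b)}_i}x_i(j)$, which is exactly what is needed.

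A union bound over $i \in [n]$ and $(a,b) \in [A]\times[B]$ (only $n A B = \poly(n)$ events, each of failure probability $\le 1/n^4$), together with Lemma~\ref{lma:all-large-are-sampled} applied at the relevant threshold, will guarantee that all of the above concentration inequalities hold simultaneously with probability $\ge 1-1/n$. On this event, summing the three per-group bounds gives $\hat{x}_i \le \sum_{j=1}^s x_i(j) = x_i$ for every $i$ at once, completing the proof.

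The step that will warrant the closest scrutiny is the coupling between the Bernstein upper tail on $|\bS^{(a,b)}_i|$ and the $66\log n$ marking threshold: the slack is numerically tight because $(2/5)\cdot(5/2) = 1$ exactly saturates the estimator. The deflation constant $2/5$ can be traded against the marking threshold, but both are dictated by the need to simultaneously prove a matching lower bound on $\Est_{i^*}$ in the next lemma, so neither is free to change.
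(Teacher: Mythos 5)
Your proposal is correct and takes essentially the same approach as the paper: decompose $\hat{x}_i$ into the disjoint groups of servers contributing to it (the exactly-recovered $\bL_i$, the exactly-recovered buckets with large sampling probability, and the estimated ``probably good'' buckets), then show via concentration that no group's estimated contribution exceeds its true contribution, and union bound. The one place you differ, and it is a mild simplification rather than a new route, is in the marked-bucket case: the paper treats $(a,b)\in\textGOOD_i$ and $(a,b)\in\textBAD_i$-but-marked separately, obtaining the tighter bound $|\bS^{(a,b)}_i|\le(4/3)\sum_j p_i(j)$ from the two-sided Chernoff bound of Lemma~\ref{lma:concentration}(1) in the first case and the weaker $|\bS^{(a,b)}_i|\le(5/2)\sum_j p_i(j)$ from the Bernstein tail plus the $66\log n$ marking threshold in the second. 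You apply the Bernstein-plus-threshold argument uniformly to every marked tuple. Since $(2/5)$ is the algorithm's chosen constant and $(2/5)\cdot(5/2)=1$, the uniform $(5/2)$ bound is exactly what is needed in both cases and the Chernoff case is never actually required for this lemma, so your unification is sound. (Your remark that the constants exactly saturate is slightly overstated: $66\log n\le 2\sum p_j+4\log n$ actually gives $|\bS^{(a,b)}_i|\le(66/31)\sum p_j<(5/2)\sum p_j$, so there is a little slack in the reverse step; but this doesn't affect correctness.)
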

\begin{proof}
Consider a fixed $i$. We will argue about contributions from different types of servers to $\hat{x}_i$ separately. By Lemma~\ref{lma:all-large-are-sampled}, we have $\bL_i = \textLARGE_i$ with probability $\ge 1 - 1/n^3$ and the contribution $\sum_{j \in \textLARGE_i}x_i(j)$ to $x_i$ is estimated correctly. We deterministically exclude all the servers in $\textSMALL_i$ and therefore we do not overestimate the contribution of $\sum_{j \in \textSMALL_i} x_i(j)$ to $x_i$.

Now consider the tuples $(a, b)$ for $a = 0, 1, \ldots, A-1$ and $b = 0, 1, \ldots, B-1$. If $a$ is such that $(\sqrt{\theta})^{a+1}P_{\start} \ge 4\log n/N$, we again have $\bS^{(a, b)}_i = I^{(a, b)}_i$ with a large probability similar to the analysis of Lemma~\ref{lma:all-large-are-sampled} and therefore we estimate the contribution of such servers to $x_i$ exactly.

If $(\sqrt{\theta})^{a+1}P_{\start} < 4\log n/N$ and $(a, b) \in \textGOOD_i$, then Lemma~\ref{lma:good-i-has-concentration} shows that $\sum_{j \in I^{(a, b)}_i}p_j \ge 100\log n$. We then obtain using Lemma~\ref{lma:concentration} that $66\log n \le |\bS^{(a, b)}_i| \le (4/3)\sum_{j \in I^{(a, b)}_i}p_j$ with a large probability. Using Lemma~\ref{lma:expectation-to-size}, we get that
\begin{align*}
    \frac{|\bS_i^{(a, b)}|}{(4/3)(1 - (1-(\sqrt{\theta})^{a+1}P_{\start})^N)} \le |I^{(a, b)}_i|.
\end{align*}
Using Lemma~\ref{lma:size-to-contribution}, we then obtain that
\begin{align*}
       &\frac{|\bS_i^{(a, b)}|}{(4/3)(1 - (1-(\sqrt{\theta})^{a+1}P_{\start})^N)}f^{-1}(\be_i(\sqrt{\theta})^{a+b}P_{\start}F_{\start})\\
       &\le |I^{(a, b)}_i|f^{-1}(\be_i (\sqrt{\theta})^{a+b}P_{\start}F_{\start}) \le \sum_{j \in I^{(a, b)}_i}x_i(j). 
\end{align*}
Hence, the contribution of tuples $(a, b) \in \textGOOD_i$ is not overestimated. If $(a, b) \in \textBAD_i$ and $(a, b)$ is marked ``probably good for $i$'', then using Lemma~\ref{lma:concentration}, we get
\begin{align*}
    |\bS_{i}^{(a, b)}| \le \frac{5}{2}\sum_{j \in I^{(a, b)}_i}p_j
\end{align*}
and using the same series of steps as above, we get
\begin{align*}
        &\frac{|\bS_i^{(a, b)}|}{(5/2)(1 - (1-(\sqrt{\theta})^{a+1}P_{\start})^N)}f^{-1}(\be_i(\sqrt{\theta})^{a+b}P_{\start}F_{\start})\\
        &\le |I^{(a, b)}_i|f^{-1}(\be_i (\sqrt{\theta})^{a+b}P_{\start}F_{\start}) \le \sum_{j \in I^{(a, b)}_i}x_i(j)
\end{align*}
which again shows that the contribution of tuples $(a, b)$ in $\textBAD_i$ but are marked ``probably good for $i$'' is also not overestimated. Overall we get that with a probability $\ge 1 - O(1/n^2)$, $\hat{x}_i \le x_i$. Using a union bound we have the proof.
\end{proof}
We now show that $\Est_{i^*}$ is large with a large probability.
\begin{lemma}
With probability $\ge 1 - 1/n^2$,
\begin{align*}
        \Est_{i^*}  \ge \frac{(1 - \varepsilon_2)^2}{\theta''}\be_{i^*}^{-1}f(x_{i^*}).
\end{align*}
\label{lma:est-i-star-is-significant}
\end{lemma}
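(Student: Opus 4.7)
The plan is to condition on a small collection of high-probability ``good'' events, on which the estimator $\hat{x}_{i^*}$ can be decomposed into three kinds of contributions (from $\textLARGE_{i^*}$, from tuples $(a,b)\in\textGOOD_{i^*}$ whose sampling probability at each server in $I^{(a,b)}_{i^*}$ is already $\Omega(\log n/N)$, and from the remaining tuples in $\textGOOD_{i^*}$), and then to compare $\hat{x}_{i^*}$ against the ``good'' portion of $x_{i^*}$ defined as
\[
y \;\coloneqq\; \sum_{j\in\textLARGE_{i^*}}x_{i^*}(j)\;+\;\sum_{(a,b)\in\textGOOD_{i^*}}\sum_{j\in I^{(a,b)}_{i^*}}x_{i^*}(j).
\]

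First I would union-bound the following events, each of which has failure probability $1/\poly(n)$, and whose intersection has probability $\ge 1-1/n^2$: (i) $\bL_{i^*}=\textLARGE_{i^*}$ (Lemma~\ref{lma:all-large-are-sampled}); (ii) for every $(a,b)\in\textGOOD_{i^*}$ with $(\sqrt{\theta})^{a+1}P_{\start}\ge 4\log n/N$, all servers of $I^{(a,b)}_{i^*}$ sample $i^*$ (same argument as Lemma~\ref{lma:all-large-are-sampled}); and (iii) for every $(a,b)\in\textGOOD_{i^*}$ with $(\sqrt{\theta})^{a+1}P_{\start}<4\log n/N$, Lemma~\ref{lma:good-i-has-concentration} gives $\sum_{j\in I^{(a,b)}_{i^*}}p_{i^*}(j)\ge 100\log n$, and then the Chernoff bound of Lemma~\ref{lma:concentration} yields $|\bS^{(a,b)}_{i^*}|\ge(2/3)\sum_{j\in I^{(a,b)}_{i^*}}p_{i^*}(j)\ge 66\log n$, so the tuple is marked ``probably good for $i^*$''. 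Since $|\textGOOD_{i^*}|\le A\cdot B=O(\polylog n)$, the union bound is cheap.

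Next I would turn each of these events into a deterministic lower bound on $\hat{x}_{i^*}$. The contributions from $\textLARGE_{i^*}$ and from the ``large-probability'' tuples of $\textGOOD_{i^*}$ are recovered exactly, so they contribute $\sum_{j\in\textLARGE_{i^*}}x_{i^*}(j)$ and $\sum x_{i^*}(j)$ over the relevant $I^{(a,b)}_{i^*}$. For the remaining $(a,b)\in\textGOOD_{i^*}$, combining $|\bS^{(a,b)}_{i^*}|\ge(2/3)\sum_j p_{i^*}(j)$ with Lemma~\ref{lma:expectation-to-size} gives $|\bS^{(a,b)}_{i^*}|/(1-(1-(\sqrt{\theta})^{a+1}P_{\start})^N)\ge(2/3)|I^{(a,b)}_{i^*}|/\sqrt{\theta}$, and then Lemma~\ref{lma:size-to-contribution} converts this into
\[
\frac{(2/5)|\bS^{(a,b)}_{i^*}|}{1-(1-(\sqrt{\theta})^{a+1}P_{\start})^N}\,f^{-1}\!\bigl(\be_{i^*}(\sqrt{\theta})^{a+b}P_{\start}F_{\start}\bigr)\;\ge\;\frac{4}{15\sqrt{\theta}\,\theta'}\sum_{j\in I^{(a,b)}_{i^*}}x_{i^*}(j)\;\ge\;\frac{1}{4\sqrt{\theta}\,\theta'}\sum_{j\in I^{(a,b)}_{i^*}}x_{i^*}(j).
\]
Summing all contributions yields $\hat{x}_{i^*}\ge y/(4\sqrt{\theta}\,\theta')$.

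Finally I would close the argument using approximate invertibility. By Lemma~\ref{lma:bad-can-be-ignored} (tracking all three ``ignored'' groups: $\textSMALL_{i^*}$, $\Ignore_{i^*}$, and the $\textBAD_{i^*}$ tuples), we have $f(y)\ge(1-\varepsilon_2)^{2}f(x_{i^*})$ (equivalently, the stronger $(1-\varepsilon_2)^3$ that the proof of that lemma actually establishes). Now apply property~(iii) of approximate invertibility to $y$: $f(y/(4\sqrt{\theta}\,\theta'))\ge f(y)/\theta''$. Combined with monotonicity of $f$ (a consequence of super-additivity plus $f\ge 0$), this gives
\[
\Est_{i^*}=\be_{i^*}^{-1}f(\hat{x}_{i^*})\;\ge\;\be_{i^*}^{-1}f\!\bigl(y/(4\sqrt{\theta}\,\theta')\bigr)\;\ge\;\frac{\be_{i^*}^{-1}f(y)}{\theta''}\;\ge\;\frac{(1-\varepsilon_2)^2}{\theta''}\,\be_{i^*}^{-1}f(x_{i^*}),
\]
as required. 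The main subtlety I expect is keeping the three distinct failure modes cleanly separated during the union bound (in particular making sure the $\textGOOD_{i^*}$ tuples in the ``small-probability'' regime really are all marked as ``probably good for $i^*$''); the rest is bookkeeping of constants that fits comfortably into the $4\sqrt{\theta}\,\theta'$ slack engineered into the algorithm precisely so that property~(iii) can be invoked.
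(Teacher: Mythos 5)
Your proof follows exactly the paper's argument: condition on the same three good events, compute the contributions from $\textLARGE_{i^*}$ and from the two regimes of $\textGOOD_{i^*}$ tuples via Lemmas~\ref{lma:all-large-are-sampled}, \ref{lma:good-i-has-concentration}, \ref{lma:concentration}, \ref{lma:expectation-to-size}, and \ref{lma:size-to-contribution}, obtain $\hat{x}_{i^*}\ge y/(4\sqrt{\theta}\theta')$, and then invoke approximate invertibility property~(iii). You in fact silently correct two places where the paper writes $\theta$ rather than $\theta'$ in the constant; the definition of approximate invertibility has $4\sqrt{\theta}\theta'$, so your version is the internally consistent one.

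One thing is off, however. You write that $f(y)\ge(1-\varepsilon_2)^2 f(x_{i^*})$ ``equivalently, the stronger $(1-\varepsilon_2)^3$,'' but since $1-\varepsilon_2 = 1/\theta''<1$, the exponent-$3$ bound is \emph{weaker}, not stronger: $(1-\varepsilon_2)^3\le(1-\varepsilon_2)^2$. What the ``ignoring'' chain (from $\textSMALL_{i^*}$, then $\Ignore_{i^*}$, then $\textBAD_{i^*}$, as established in Lemma~\ref{lma:bad-can-be-ignored}) actually gives is only $f(y)\ge(1-\varepsilon_2)^3 f(x_{i^*})$, because $y$ has all three groups subtracted; the intermediate $(1-\varepsilon_2)^2$ bound applies to the larger quantity with only $\textSMALL$ and $\Ignore$ removed, not to $y$. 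So your argument, like the paper's own, actually establishes $\Est_{i^*}\ge \frac{(1-\varepsilon_2)^3}{\theta''}\be_{i^*}^{-1}f(x_{i^*})$, not the $(1-\varepsilon_2)^2$ appearing in the lemma statement. This mismatch is present in the paper as written (the in-text proof ends with $(1-\varepsilon_2)^3$, and Theorem~\ref{thm:single-exponential} also quotes and uses $(1-\varepsilon_2)^3$), so the lemma statement is almost certainly a typo for $(1-\varepsilon_2)^3$; but you should not assert the exponent-$2$ bound as if it followed from the exponent-$3$ one.
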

\begin{proof}
    By Lemma~\ref{lma:all-large-are-sampled}, $\bL_{i^*} = \textLARGE_{i^*}$ and therefore the contribution to $x_{i^*}$ from the servers in $\textLARGE_{i^*}$ is captured by $\hat{x}_{i^*}$. We argued that contribution to $\hat{x}_{i^*}$ from servers in $\textSMALL_{i^*} \cup \Ignore_{i^*}$ can be ignored.

    Now consider the tuples $(a, b)$ for $a = 0, 1, \ldots, A-1$ and $b = 0, 1, \ldots, B-1$. If $(\sqrt{\theta})^{a+1}P_{\start} \ge (4\log n)/N$, then $\bS^{(a, b)}_{i^*} = I^{(a, b)}_{i^*}$ with probability $\ge 1 - 1/n^3$ and therefore the contribution from the servers in $I^{(a, b)}_{i^*}$ is captured exactly by $\hat{x}_{i^*}$. If $(a, b) \in \textBAD_{i^*}$, then we argued in Lemma~\ref{lma:bad-can-be-ignored} that we need not capture the contribution from those servers. So any contribution from servers in $\textBAD_{i^*}$ marked as ``probably good for $i^*$'' will only help in increasing $\hat{x}_{i^*}$.

    Now consider $(a, b) \in \textGOOD_{i^*}$ with $(\sqrt{\theta})^{a+1}P_{\start} < 4\log n/N$. By Lemma~\ref{lma:good-i-has-concentration}, with a large probability  
    \begin{align*}
        |\bS_{i^*}^{(a, b)}| \ge \frac{2}{3}\sum_{j \in I^{(a, b)}_{i^*}}p_j.
    \end{align*}
    By Lemma~\ref{lma:expectation-to-size}, we get 
    \begin{align*}
        |\bS_{i^*}^{(a, b)}| \ge \frac{2}{3}|I^{(a, b)}_{i^*}| \frac{1 - (1 - (\sqrt{\theta})^{a+1}P_{\start})^N}{\sqrt{\theta}}.
    \end{align*}
    Now, using Lemma~\ref{lma:size-to-contribution}, we get
    \begin{align*}
        |\bS_{i^*}^{(a, b)}| \cdot f^{-1}(\be_{i^*}(\sqrt{\theta})^{a+b}P_{\start}F_{\start}) \ge \frac{2}{3 \cdot \theta'} \cdot \frac{1 - (1-(\sqrt{\theta})^{a+1}P_{\start})^N}{\sqrt{\theta}}\sum_{j \in I^{(a, b)}_{i^*}}x_{i^*}(j)
    \end{align*}
    which then implies
    \begin{align*}
        \frac{(2/5)|\bS_{i^*}^{(a, b)}| \cdot f^{-1}(\be_{i^*}(\sqrt{\theta})^{a+b}P_{\start}F_{\start})}{1 - (1 - (\sqrt{\theta})^{a+1}P_{\start})^N} \ge \frac{4}{15 \cdot \theta \cdot \sqrt{\theta}} \sum_{j \in I^{(a, b)}_{i^*}}x_{i^*}(j).
    \end{align*}
Thus, overall, with high probability, we have
\begin{align*}
    \hat{x}_{i^*} &\ge \sum_{j \in \textLARGE_{i^*}}x_{i^*}(j) + \frac{4}{15 \cdot \theta \cdot \sqrt{\theta}}\sum_{(a, b) \in \textGOOD_{i^*}}\sum_{j \in I^{(a, b)}_{i^*}}x_{i^*}(j).
\end{align*}
We therefore have
\begin{align*}
    \Est_{i^*} \coloneqq \be_{i^*}^{-1}f(\hat{x}_{i^*}) \ge \frac{(1 - \varepsilon_2)^3}{\theta''}\be_{i^*}^{-1}f(x_{i^*})
\end{align*}
assuming for all $x$, $f(x/(4 \cdot \theta \cdot \sqrt{\theta})) \ge f(x) / \theta''$.
\end{proof}
\begin{theorem}
Assume we are given a super-additive nonnegative function $f$ that satisfies the ``approximate invertibility'' property with parameters $\theta, \theta', \theta'', \varepsilon_1$ and $\varepsilon_2$. Let $s \ge 1$ be the number of servers. Define $A = O(\log(s/\varepsilon_1)/\log \theta)$ and $B = O(\log(s^2/\varepsilon_1(1-\varepsilon_2))/\log \theta)$. Let $\be_1, \ldots, \be_n$ be independent standard exponential random variables shared across all the servers. Then there is a 2-round protocol in the coordinator model which uses a total communication of 
\begin{align*}
    O\left({c_f[s] \cdot \frac{c_f[A \cdot B] \cdot (A \cdot B) \cdot \sqrt{\theta}\log^4 n}{\varepsilon_1(1-\varepsilon_2)^2}} + s \cdot  \frac{\log^2 n \cdot \theta''}{(1-\varepsilon_2)^3}\right)
\end{align*}
words of communication and with probability $\ge 1 - 1/\poly(n)$ computes $\max_i \be_{i}^{-1}f(x_i)$.
\label{thm:single-exponential}
\end{theorem}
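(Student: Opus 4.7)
My plan is to assemble the preceding lemmas into a clean two-round protocol and then verify its communication cost. In round one, using shared randomness the servers jointly draw $\be_1,\dots,\be_n$, and each server $j$ samples $N$ i.i.d.\ indices from the distribution $\Pr[\bi = i] \propto \be_i^{-1} f(x_i(j))$ and forwards to the coordinator the sampled set $\SC_j$, the values $x_i(j)$ for $i \in \SC_j$, and the scalar $\sum_i \be_i^{-1} f(x_i(j))$. I will set $N$ as the maximum of the lower bounds required by Lemma~\ref{lma:i-star-is-sampled} and Lemma~\ref{lma:good-i-has-concentration}, which amounts to $N = \Theta\bigl(c_f[s]\cdot c_f[A\cdot B]\cdot (A\cdot B)\cdot \sqrt{\theta}\log^4 n/(s\,\varepsilon_1(1-\varepsilon_2)^2)\bigr)$. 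Using only this first-round information, the coordinator runs the $\Est$-construction described just before Lemma~\ref{lma:est-i-is-always} to obtain $\Est_i = \be_i^{-1} f(\hat x_i)$ for every $i \in \SC = \bigcup_j \SC_j$. In round two, I will let $\PL \subseteq \SC$ be the set of indices with the top $T = O(C\log^2 n \cdot \theta''/(1-\varepsilon_2)^2)$ values of $\Est_i$, broadcast $\PL$ to the servers, and have each server return $x_i(j)$ for all $i \in \PL$; the coordinator then assembles $x_i = \sum_j x_i(j)$ for each $i \in \PL$ and outputs $\max_{i\in \PL} \be_i^{-1} f(x_i)$.

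For correctness I will condition on the high-probability event $\calE$ from Lemma~\ref{lma:max-is-significant} that $\sum_i \be_i^{-1} f(x_i) \le (C\log^2 n)\max_i \be_i^{-1} f(x_i)$. Under $\calE$, Lemma~\ref{lma:i-star-is-sampled} guarantees $i^* \in \SC$, Lemma~\ref{lma:est-i-is-always} ensures $\Est_i \le \be_i^{-1} f(x_i)$ for every $i$ simultaneously, and Lemma~\ref{lma:est-i-star-is-significant} gives $\Est_{i^*} \ge \frac{(1-\varepsilon_2)^2}{\theta''} \be_{i^*}^{-1} f(x_{i^*})$. These three facts combine to show that any $i$ with $\Est_i \ge \Est_{i^*}$ must satisfy $\be_i^{-1} f(x_i) \ge \frac{(1-\varepsilon_2)^2}{\theta''} \be_{i^*}^{-1} f(x_{i^*})$, and event $\calE$ forces at most $T$ such indices to exist, so $i^* \in \PL$ and the coordinator's output is exactly $\max_i \be_i^{-1} f(x_i)$. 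A union bound over the polynomially many bad events from the supporting lemmas yields the $1 - 1/\poly(n)$ success probability.

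The communication bound then follows from a direct accounting: round one costs $s \cdot N$ words (each server sends $O(N)$ indices, values, and a single scalar), which contributes the first summand, and round two costs $s \cdot |\PL| = O(s \log^2 n \cdot \theta''/(1-\varepsilon_2)^2)$ words, contributing the second summand. The substantive technical effort has been absorbed into Lemmas~\ref{lma:all-large-are-sampled}--\ref{lma:est-i-star-is-significant}, which handle the \textLARGE{}/\textSMALL{} dichotomy and the bucketed concentration analysis for the intermediate regime. Consequently, the main remaining obstacle is bookkeeping: choosing a single $N$ that simultaneously (i) ensures $i^* \in \SC$ and (ii) makes every \textGOOD{} bucket's first-round count concentrate well enough that the coordinator can decide which buckets to use using only the sampled indices, and then checking that both bounds on $N$ reduce to the same dominant expression after substituting $A$ and $B$.
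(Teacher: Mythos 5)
Your proposal matches the paper's proof of Theorem~\ref{thm:single-exponential} essentially step for step: condition on the heavy-hitter event from Lemma~\ref{lma:max-is-significant}, set $N$ to the dominant of the bounds from Lemma~\ref{lma:i-star-is-sampled} and Lemma~\ref{lma:good-i-has-concentration}, use Lemmas~\ref{lma:est-i-is-always} and \ref{lma:est-i-star-is-significant} to conclude $i^*$ has one of the top $\Est$ values and hence lands in $\PL$, and then do a second-round query. One small wrinkle: you take $|\PL| = O(C\log^2 n \cdot \theta''/(1-\varepsilon_2)^2)$ following the stated inequality in Lemma~\ref{lma:est-i-star-is-significant}, but that lemma's own proof only derives the weaker bound $\Est_{i^*} \ge \frac{(1-\varepsilon_2)^3}{\theta''}\be_{i^*}^{-1}f(x_{i^*})$ (the statement and proof of that lemma disagree in the paper), so to be safe you should size $\PL$ with $(1-\varepsilon_2)^3$ in the denominator, which is exactly what the theorem's communication bound reflects.
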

\begin{proof}
First we condition on the event that $\sum_i \be_i^{-1}f(x_i) \le (C\log^2 n) \cdot \max_i \be_i^{-1}f(x_i)$ which holds with probability $\ge 1 - 1/\poly(n)$. Note that the randomness used in sampling is independent of the exponential random variables. Define $i^* \coloneqq \argmax_i \be_i^{-1}f(x_i)$.

Let $N = O((c_f[s]/s) \cdot c_f[A \cdot B] \cdot (A \cdot B) \cdot \sqrt{\theta}\log^4 n/(\varepsilon_1(1-\varepsilon_2)^2))$. Let each server $j$ sample $N$ coordinates independently from its local distribution:
\begin{align*}
    \Pr[\bi = i] = \frac{\be_i^{-1}f(x_i(j))}{\sum_i \be_i^{-1}f(x_i(j))}.
\end{align*}
Let $\SC_j$ be the set of coordinates sampled by the server $j$. Now, each server $j$ sends the set $\SC_j$ along with the values $x_i(j)$ for $i \in \SC_j$ to the coordinator. Additionally, each server also sends the value $\sum_i \be_i^{-1}f(x_i(j))$ to the coordinator. Note that this requires a total communication of $O(N \cdot s)$ words of communication. 

Now, the coordinator computes $\SC = \bigcup_j \SC_j$. By Lemma~\ref{lma:i-star-is-sampled}, we have $i^* \in \SC$ with probability $\ge 1 - 1/\poly(n)$. Now the coordinator computes $\hat{x}_i$ and a value $\Est_i$ for each $i \in \SC$ using the algorithm described above. Lemma~\ref{lma:est-i-is-always} shows that with a probability $\ge 1 - 1/\poly(n)$, for all $i \in \SC$, we have $\Est_i \le \be_i^{-1}f(x_i)$ and Lemma~\ref{lma:est-i-star-is-significant} shows that with probability $\ge 1 - 1/\poly(n)$, we have $\Est_{i^*} \ge \frac{(1-\varepsilon_2)^3}{\theta''}\be_{i^*}^{-1}f(x_{i^*})$. Using a union bound, all these events hold with probability $\ge 1 - 1/\poly(n)$. Condition on all these events. 

Let $\PL$ be the set of coordinates $i \in \SC$ with the $O(C\log^2 n \cdot \theta'' / (1-\varepsilon_2)^3)$ largest values of $\Est_i$. We have $i^* \in \PL$ since we conditioned on all the above events. Now the coordinator queries sends the set $\PL$ to \emph{each} server $j$ and asks for the values $x_i(j)$ for $i \in \PL$. Then the servers all send the requested information in the second round of communication. Note that the total communication required is $O(s \cdot |\PL|)$ words.
Since $i^*$ in $\PL$, we obtain that
\begin{align*}
    \max_{i \in \PL} \be_{i}^{-1}f(\sum_{j \in [s]}x_i(j)) = \be_{i^*}^{-1}f(x_{i^*})
\end{align*}
and the coordinator can compute this value after receiving the required information from the servers in the second round of communication. This proves the theorem.
\end{proof}
We can run the protocol in the above theorem concurrently using $O(1/\varepsilon^2)$ independent copies of the exponential random variables and then obtain a $1 \pm \varepsilon$ approximation for $\sum_i f(x_i)$ with a probability $\ge 99/100$. We note that the overall protocol requires two rounds and a total communication of $O_{\theta, \theta', \theta''}\left(\frac{c_f[s]}{\varepsilon^2}\polylog(n)\right)$ words of communication.
\begin{theorem}
    Let $f$ be a non-negative, increasing, super-additive function that satisfies the ``approximate invertibility'' properties with the parameters $\theta, \theta', \theta''$. Let there be $s$ servers and each of the servers holds a non-negative vector $x(1), \ldots, x(s) \in \R^n$ respectively. Define $A = O(\log(s \cdot \theta'')/\log \theta)$ and $B = O(\log(s^2 \cdot (\theta'')^2)/\log \theta)$.
    Given $\varepsilon < 1/n^c$ for a small constant $c$, there is a two round protocol that uses a total of 
    \begin{align*}
        O\left(\frac{c_f[s]}{\varepsilon^2} \cdot {c_f[A \cdot B] \cdot (A \cdot B) \cdot \sqrt{\theta} \cdot (\theta'')^3 \cdot \log^4 n} + \frac{s}{\varepsilon^2} \cdot {\log^2 n \cdot (\theta'')^4}\right)
    \end{align*}
    words of communication and with probability $\ge 9/10$ computes a $1 \pm \varepsilon$ for the quantity $\sum_{i}f(x_i)$.
    \label{thm:main-estimation}
\end{theorem}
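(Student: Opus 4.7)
The plan is to reduce the theorem to a straightforward parallel repetition of Theorem~\ref{thm:single-exponential} combined with the median-of-exponentials estimator from Section~\ref{subsec:exponentials}. Using shared randomness, the servers and coordinator jointly sample $t = \Theta(1/\varepsilon^2)$ independent families of exponentials $(\be_i^{(\ell)})_{i \in [n]}$ for $\ell \in [t]$, and then invoke the two-round protocol of Theorem~\ref{thm:single-exponential} concurrently on every family. First-round messages for all $t$ families are packed into a single transmission per server, and similarly for the coordinator's second-round queries, so the round count remains $2$; the total communication is simply $t$ times the per-invocation bound.

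Let $F := \sum_i f(x_i)$. The $\ell$-th invocation returns $M_\ell := \max_i (\be_i^{(\ell)})^{-1} f(x_i)$ correctly with probability $\ge 1 - 1/\poly(n)$, where the polynomial exponent can be made arbitrarily large by an $O(1)$ blow-up in constants inside the protocol. By max-stability (property (2) of Section~\ref{subsec:exponentials}), each $M_\ell$ is distributed as $F/\be^{(\ell)}$ for a standard exponential $\be^{(\ell)}$, and the $\be^{(\ell)}$ are mutually independent across $\ell$. Property (4) of Section~\ref{subsec:exponentials} then gives that $\widehat F := \ln(2) \cdot \operatorname{median}(M_1, \ldots, M_t) \in [(1-\varepsilon)F, (1+\varepsilon)F]$ with probability at least $9/10$ whenever $t = \Omega(1/\varepsilon^2)$. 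Because the hypothesis $\varepsilon > 1/n^c$ for a small constant $c$ implies $t = \poly(n)$, a union bound over the $t$ invocations preserves the $1 - 1/\poly(n)$ failure probability of the joint max computation, so the coordinator indeed has access to $M_1, \ldots, M_t$ and can output $\widehat F$.

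It remains to verify the communication bound. Substituting $\varepsilon_1 = 1/\theta''$ and $1-\varepsilon_2 = 1/\theta''$ into the per-invocation cost of Theorem~\ref{thm:single-exponential} gives $1/(\varepsilon_1(1-\varepsilon_2)^2) = (\theta'')^3$ and $\theta''/(1-\varepsilon_2)^3 = (\theta'')^4$, and the values of $A, B$ become $O(\log(s\theta'')/\log\theta)$ and $O(\log(s^2(\theta'')^2)/\log\theta)$ as in the theorem statement. Multiplying the resulting per-invocation bound by $t = O(1/\varepsilon^2)$ yields precisely
\[
O\!\left(\frac{c_f[s]}{\varepsilon^2} \cdot c_f[A\cdot B]\cdot (A\cdot B)\cdot \sqrt{\theta}\cdot (\theta'')^3 \log^4 n + \frac{s}{\varepsilon^2} \cdot \log^2 n \cdot (\theta'')^4\right)
\]
words, matching the stated bound. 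The only mild subtlety, and the one place care is needed, is absorbing the polynomial factor $t$ into the inverse-polynomial failure probability of each single-exponential invocation; this is handled by observing that the constants inside Theorem~\ref{thm:single-exponential} control the exponent of $n$ in the failure probability, so choosing them large enough once (independent of $\varepsilon$) suffices.
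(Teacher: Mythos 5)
Your proof is correct and essentially matches the paper's argument: both run the single-exponential protocol of Theorem~\ref{thm:single-exponential} concurrently on $\Theta(1/\varepsilon^2)$ independent families of exponentials, union bound over the invocations, and apply the median-of-exponentials estimator, and your parameter substitutions $\varepsilon_1 = 1-\varepsilon_2 = 1/\theta''$ agree with what the paper uses. Your reading of the precision hypothesis as $\varepsilon > 1/n^c$ (so that the union bound over $O(1/\varepsilon^2)$ repetitions stays inverse-polynomial) is the intended one; the paper's statement has the inequality direction reversed.
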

\begin{proof}
    For $k \in [O(1/\varepsilon^2)]$ and $i \in [n]$, let $\be^{(k)}_i$ be an independent standard exponential random variable. Let $i^*(k) \coloneqq \argmax_{i \in [n]}(\be_i^{(k)})^{-1}f(x_{i})$. By a union bound, the following hold simultaneously with probability $\ge 1 - 1/\poly(n)$:
    \begin{align*}
        \text{for all }k,\ \sum_{i}(\be_i^{(k)})^{-1}f(x_i) \le (C \log^2 n) \cdot (\be_{i^*{(k)}}^{(k)})^{-1}f(x_{i^*{(k)}})
    \end{align*}
    and
    \begin{align*}
        \ln(2) \cdot \text{median}_k\, (\be_{i^*(k)}^{(k)})^{-1}f(x_{i^*(k)}) = (1 \pm \varepsilon)\sum_{i \in [n]}f(x_i).
    \end{align*}
    We condition on these events. Now concurrently for each $k$, we use the exponential random variables $\be^{(k)}_1, \ldots, \be^{(k)}_n$ and run the protocol in Theorem~\ref{thm:single-exponential} to obtain the value of $(\be_{i^*(k)}^{(k)})^{-1}f(x_{i^*(k)})$ with probability $\ge 1 - 1/\poly(n)$. We union bound over the success of the protocol for all $k$ and obtain that with probability $\ge 9/10$, we can compute the exact value of 
    \begin{align*}
        \ln(2) \cdot \text{median}_k\, (\be_{i^*(k)}^{(k)})^{-1}f(x_{i^*(k)})
    \end{align*}
    which then gives us a $1 \pm \varepsilon$ approximation of $\sum_{i \in [n]}f(x_i)$. The communication bounds directly follow from Theorem~\ref{thm:single-exponential}.
\end{proof}
We obtain the following corollary for estimating $F_k$ moments.
\begin{corollary}
    Let $k > 2$ be arbitrary. In the coordinator model with $s$ servers that each hold a non-negative vector $x(j) \in \R^n$, there is a randomized two round protocol that uses a total of $\tilde{O}_k(s^{k-1}\polylog(n)/\varepsilon^2)$ bits of communication and approximate $\sum_i (\sum_{j \in [s]}x_i(j))^k$ up to a $1 \pm \varepsilon$ factor with probability $\ge 9/10$.
    \label{cor:fk-estimation}
\end{corollary}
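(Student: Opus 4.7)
The plan is to derive the corollary as a direct instantiation of Theorem~\ref{thm:main-estimation} with the choice $f(x) = x^k$, so the work amounts to (i) certifying that $x^k$ satisfies approximate invertibility with good constants, (ii) pinning down $c_f[s]$ exactly, and (iii) substituting and simplifying.

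First I would verify approximate invertibility of $f(x) = x^k$ for $k \ge 2$ with the parameter choice already suggested after Theorem~\ref{thm:main-estimation}, namely $\theta = 2$, $\theta' = 2^{1/k}$, and $\theta'' = 2 \cdot 8^{k/2}$. Super-additivity follows from convexity and $f(0)=0$; the second property is immediate from $(\theta')^k = 2 = \theta$; and the third from $(4\sqrt{\theta}\theta')^k = 4^k \cdot 2^{k/2} \cdot 2 = 2 \cdot 8^{k/2} = \theta''$, so $f(y/(4\sqrt{\theta}\theta')) = y^k/\theta'' = f(y)/\theta''$. All constants depend only on $k$.

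Next I would compute $c_f[s]$. By the power mean inequality applied to the exponents $1$ and $k/2$ (valid since $k/2 \ge 1$), for nonnegative $y_1,\ldots,y_s$,
\begin{equation*}
\Bigl(\tfrac{1}{s}\sum_{j} y_j^{k/2}\Bigr)^{2/k} \;\ge\; \tfrac{1}{s}\sum_j y_j,
\end{equation*}
which on rearranging gives $(\sum_j y_j)^k \le s^{-(k-1)} \cdot s \cdot (\sum_j y_j^{k/2})^2 = (s^{k-1}/s)(\sum_j \sqrt{f(y_j)})^2$. Tightness for $y_1=\cdots=y_s=1$ shows $c_f[s] = s^{k-1}$ exactly.

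Finally I would plug everything into Theorem~\ref{thm:main-estimation}. Since $\log \theta''= O(k)$ and $\log \theta = \Theta(1)$, both $A$ and $B$ are $O(k \log s)$, so $A\cdot B = O(k^2 \log^2 s)$ and $c_f[A\cdot B] = (A\cdot B)^{k-1} = O_k(\log^{2(k-1)} s)$. The dominant term
\begin{equation*}
\frac{c_f[s]}{\varepsilon^2}\cdot c_f[A\cdot B]\cdot (A\cdot B)\cdot \sqrt{\theta}\cdot (\theta'')^3 \cdot \log^4 n
\end{equation*}
becomes $O_k(s^{k-1}\,\mathrm{polylog}(n,s)/\varepsilon^2) = \tilde{O}_k(s^{k-1}/\varepsilon^2)$ words, while the additive $(s/\varepsilon^2)\cdot O_k(\log^2 n)$ term is absorbed because $s \le s^{k-1}$. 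Converting words to bits multiplies by $O(\log n)$, still within $\tilde{O}_k(s^{k-1}\,\mathrm{polylog}(n)/\varepsilon^2)$, which is what we want. No step is a genuine obstacle here since the main theorem already does all the real work; the only mildly non-obvious piece is recognizing that $c_f[s]$ coincides with the natural $s^{k-1}$ bound, which the power-mean computation above makes explicit.
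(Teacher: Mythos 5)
Your proof is correct and follows essentially the same route as the paper: instantiate Theorem~\ref{thm:main-estimation} with $f(x)=x^k$, establish $c_f[s]=s^{k-1}$, verify approximate invertibility with the parameter choices the paper already records, and substitute (the paper invokes H\"older where you invoke the power-mean inequality, but these are the same computation). Two arithmetic slips worth noting, neither of which affects the $\tilde{O}_k$ conclusion since every constant involved depends only on $k$: first, $4^k\cdot 2^{k/2}\cdot 2 = 2^{5k/2+1}$, which is not $2\cdot 8^{k/2}=2^{3k/2+1}$, so your claimed equality $(4\sqrt{\theta}\theta')^k=\theta''$ is false (you would need $\theta''\ge 2\cdot 32^{k/2}$; the paper's stated $\theta''$ appears to have the same issue); second, $s^{-(k-1)}\cdot s = s^{2-k}$, not $s^{k-1}/s=s^{k-2}$, so that intermediate expression has the exponent's sign flipped, though your conclusion $c_f[s]=s^{k-1}$ is correct.
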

\begin{proof}
    For the function $f(x) = x^k$, we have $c_f[s] = s^{k-1}$ by a simple application of the Holder's inequality. We additionally note that $x^k$ is ``approximately invertible'' with parameters $\theta = 2$, $\theta' = 2^{1/k}$, and $\theta'' = 2 \cdot 8^{k/2}$. Therefore $\varepsilon_1$ and $1 - \varepsilon_2$ can be taken as $1 / (2 \cdot 8^{k/2})$. We now have $A, B = O(k + \log s)$ so that $c_f[A \cdot B] \cdot (A \cdot B) = (k + \log s)^k$. From the above theorem, we therefore obtain that there is a two round protocol that computes a $1 \pm \varepsilon$ approximation of $\sum_{i\in [n]} (\sum_{j\in [s]}x_i(j))^k$ with probability $\ge 9/10$ and uses a total communication of
    \begin{align*}
        O\left(\frac{s^{k-1}}{\varepsilon^2} \frac{(k + \log s)^k}{8^{3k/2}}\polylog(n)\right)
    \end{align*}
    words of communication.
\end{proof}
\subsection{Higher-Order Correlations}
Kannan, Vempala, and Woodruff \cite{kannan2014principal} also study the problem of approximating higher order correlations and list a few applications of the problem in their paper. In this problem, there are $s$ servers and the $j$-th server holds a set of $n$-dimensional vectors $W_j$. Given a parameter $k$, and functions $f: \R_{\ge 0} \rightarrow \R_{\ge 0}$, $g: \R_{\ge 0}^k \rightarrow \R_{\ge 0}$, the coordinator wants to approximate
\begin{align*}
    M(f, g, W_1, \ldots, W_s) \coloneqq \sum_{i_1, i_2, \ldots, i_k\, \text{distinct}}f\left(\sum_{j}\sum_{v \in W_j}g(v_{i_1}, v_{i_2}, \ldots, v_{i_k})\right)
\end{align*}
As they mention, for each server $j$, we can create a vector $w(j)$ with $r = \binom{n}{k}k!$ components (one for each tuple $(i_1, \ldots, i_k)$ with distinct values of $i_1, i_2,\ldots, i_k \in [n]$) defined as
\begin{align*}
    [w(j)]_{(i_1, i_2,\ldots, i_k)} \coloneqq \sum_{v \in W_j} g(v_{i_1}, v_{i_2}, \ldots, v_{i_k}).
\end{align*}
Now running the function sum approximation protocol on the vectors $w(1), \ldots, w(s)$ with the function $f$, we can compute a $1 \pm \varepsilon$ approximation for $M(f, g)$ using a total of 
\begin{align*}
    O_{\theta, \theta', \theta''}\left(\frac{c_f[s]}{\varepsilon^2} \cdot \polylog(r)\right) = O_{\theta, \theta', \theta''}\left(\frac{c_f[s]}{\varepsilon^2} \cdot \poly(k, \log n)\right)
\end{align*}
words of communication. The main issue in implementing this algorithm is that all the servers have to realize the $r = \binom{n}{k}k!$ dimensional vectors $w(j)$ which end up occupying $O(n^k)$ space which is prohibitive when $n$ is large. Using a simple trick, we can show that to execute the protocol in Theorem~\ref{thm:main-estimation} can be implemented without using $O(n^k)$ space. 

We first solve for the issue of sharing $O(n^k)$ exponential random variables across all the servers. In Appendix~\ref{sec:nisan}, we show that the exponential random variables used in the protocol need not be independent but can be generated using Nisan's Pseudorandom Generator (PRG) \cite{nisan1992pseudorandom}. The seed for Nisan's PRG needs to be only of length $O(k^2 \log^2 (n/\varepsilon))$ and hence the shared randomness across the servers is only of this size.

The main reason we need the vectors $w(j)$ is so that the server $j$ can sample independent copies of the random variable $(\bi_1, \ldots, \bi_k)$ with probability distribution
\begin{align*}
    \Pr[(\bi_1, \ldots, \bi_k) = (i_1, \ldots, i_k)] = \frac{(\be_{(i_1, \ldots, i_k)})^{-1} [w(j)]_{(i_1, \ldots, i_k)}}{\sum_{i_1', \ldots, i_k' }(\be_{(i_1', \ldots, i_k')})^{-1} [w(j)]_{(i_1', \ldots, i_k')}}
\end{align*}
where $\be_{(i_1, \ldots, i_k)}$ is an independent standard exponential random variable. But we note that to sample from this distribution, the protocol does not need $O(n^k)$ space. Consider the lexicographic ordering of the tuples $(i_1, \ldots, i_k)$ with all $i_1, \ldots, i_k$ distinct. The algorithm goes over the tuples in the lexicographic orders, computes the value of $(\be_{(i_1, \ldots, i_k)})^{-1} [w(j)]_{(i_1, \ldots, i_k)}$ by generating the random variable $\be_{(i_1, \ldots, i_k)}$ using Nisan's PRG and then uses a \emph{reservoir sampling} algorithm to sample a tuple form the above defined distribution. This entire process can be accomplished using a constant amount of space and hence implementing the function sum approximation protocol on the vectors $w(1), \ldots, w(s)$ can be accomplished without using $\Theta(n^k)$ space at each of the servers. Hence we have the following theorem:
\begin{theorem}
Let there be $s$ servers each holding an arbitrary set of $n$-dimensional non-negative vectors $W_1, \ldots, W_s \subseteq$ respectively. Given a function $f:\R_{\ge 0} \rightarrow \R_{\ge 0}$ that satisfies the ``approximate invertibility'' property with parameters $\theta, \theta', \theta'' > 1$ and a function $g : \R^{k}_{\ge 0} \rightarrow \R^{k}_{\ge 0}$, there is a randomized two round protocol that approximates $M(f, g, W_1, \ldots, W_s)$ up to a $1 \pm \varepsilon$ factor with probability $\ge 9/10$. The protocol uses a total of 
\begin{align*}
        O\left(\frac{c_f[s]}{\varepsilon^2} \cdot {c_f[A \cdot B] \cdot (A \cdot B) \cdot \sqrt{\theta} \cdot (\theta'')^3 \cdot k^4\log^4 n} + \frac{s}{\varepsilon^2} \cdot {k^2\log^2 n \cdot (\theta'')^4}\right)
\end{align*}
words of communication, where $A = O(\log(s \cdot \theta'')/\log \theta)$ and $B = O(\log(s^2 \cdot (\theta'')^2)/\log \theta)$
\label{thm:higher-order}
\end{theorem}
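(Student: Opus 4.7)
The plan is to reduce directly to Theorem~\ref{thm:main-estimation} by flattening the higher-order correlation into an ordinary entrywise sum. First I would define, for each server $j$, an $r$-dimensional non-negative vector $w(j)$ indexed by ordered tuples $(i_1,\ldots,i_k)$ of distinct coordinates in $[n]$, with entries $[w(j)]_{(i_1,\ldots,i_k)} = \sum_{v \in W_j} g(v_{i_1}, \ldots, v_{i_k})$, so that $M(f, g, W_1, \ldots, W_s) = \sum_{(i_1,\ldots,i_k)} f\bigl(\sum_j [w(j)]_{(i_1,\ldots,i_k)}\bigr)$. The number of tuples is $r = n!/(n-k)! \le n^k$, so $\log r = O(k\log n)$. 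Plugging the $w(j)$ into Theorem~\ref{thm:main-estimation} with ambient dimension $r$ in place of $n$ immediately yields a $(1 \pm \varepsilon)$-approximation to $M(f,g,W_1,\ldots,W_s)$ with probability $\ge 9/10$; after substituting $\log r$ for $\log n$ in that theorem's bound, every $\log n$ factor becomes $O(k \log n)$, which is precisely the $k^4 \log^4 n$ and $k^2 \log^2 n$ dependence advertised in the statement.

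The difficulty with the direct reduction is space, not communication: each $w(j)$ has $\Theta(n^k)$ entries, and the inner protocol nominally wants one standard exponential random variable per coordinate, so both the shared randomness and the local storage at each server look prohibitive. To handle the randomness, I would appeal to the argument in Appendix~\ref{sec:nisan} that the exponentials used in the proof of Theorem~\ref{thm:main-estimation} can be generated from a seed of length $O(k^2 \log^2(n/\varepsilon))$ via Nisan's pseudorandom generator; this short seed is shared across all servers through public randomness at negligible communication cost.

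To handle the local storage, I would show that $w(j)$ never needs to be materialized. Inside the protocol of Theorem~\ref{thm:main-estimation}, server $j$ only interacts with $w(j)$ in two ways: computing the normalizer $\sum_{(i_1,\ldots,i_k)} \be_{(i_1,\ldots,i_k)}^{-1}[w(j)]_{(i_1,\ldots,i_k)}$, and drawing $N = \tildeO(c_f[s]/s)$ independent samples from the distribution on tuples proportional to $\be_{(i_1,\ldots,i_k)}^{-1}[w(j)]_{(i_1,\ldots,i_k)}$. Both tasks are implementable in a single streaming pass through the tuples in lexicographic order: at each tuple I generate the needed exponential on demand from the PRG seed, update the running sum, and run $N$ independent reservoir samplers in parallel. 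The bookkeeping fits in $\poly(k, \log n, 1/\varepsilon)$ space per server, and nothing in the second round of the protocol (which communicates the $O(s \cdot |\PL|)$ values $[w(j)]_{(i_1,\ldots,i_k)}$ only for the tuples in the selected set $\PL$) requires storing the full vector either.

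The main obstacle is verifying that substituting Nisan-PRG exponentials for fully independent exponentials does not break the analysis underlying Theorem~\ref{thm:main-estimation}. This amounts to checking that the protocol's success predicate -- concentration of $\bX_i^{(a,b)}$, the event $i^* \in \SC$, the containment $i^* \in \PL$, and so on -- can be evaluated by a small-space machine reading the random seed in a fixed order, so that Nisan's indistinguishability theorem gives the same $1 - 1/\poly(n)$ success probabilities as under true independence. Once this is in place, the communication bound follows by plugging $\log r = O(k\log n)$ into the bound of Theorem~\ref{thm:main-estimation}, yielding exactly the expression in the theorem statement.
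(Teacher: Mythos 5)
Your proposal matches the paper's proof essentially step for step: the same flattening into $r$-dimensional vectors $w(j)$, the same reduction to Theorem~\ref{thm:main-estimation} with $\log r = O(k\log n)$, the same appeal to Nisan's PRG from Appendix~\ref{sec:nisan} for the shared exponentials, and the same streaming/reservoir-sampling trick to avoid materializing $w(j)$. One small refinement: the PRG-indistinguishability check only needs to cover events determined by the shared exponentials (the heavy-hitter bound and the median concentration), not events like the $\bX_i^{(a,b)}$ concentration, which use fresh per-server sampling randomness and so are unaffected by the derandomization.
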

\section{Lower Bounds}
\subsection{Lower Bound for Sum Approximation}
For $F_k$ approximation problem, \cite{WZ12} show a $\Omega(s^{k-1}/\varepsilon^2)$ lower bound on the total communication of any protocol that $1+\varepsilon$ approximates the $F_k$ value of a vector that is distributed among $s$ servers. They show the lower bound by reducing from a communication problem called the $k$-BTX. Their proof can be adapted in a straightforward way to obtain the following result for general function approximation for some class of functions $f$.
\begin{theorem}
    Let $f$ be a non-negative, super-additive function and $c_f[s]$ be the parameter such that for all $y_1, \ldots, y_s \ge 0$, then
    \begin{align*}
        f(y_1 + \cdots + y_s) \le \frac{c_f[s]}{s}(\sqrt{f(y_1)} + \cdots + \sqrt{f(y_s)})^2.
    \end{align*}
    Assume that there exists $y^*$ such that the above inequality is tight when $y_1 = \cdots = y_s = y^*$ i.e., $f(sy^*) = s \cdot c_f[s] \cdot f(y^*)$ and let $\beta = f(sy^*)/(2 \cdot f(sy^*/2)) \ge 1$. If $s \ge \Omega(\beta)$, then any protocol that approximates $\sum_i f(x_i)$ in the coordinator model up to a $1 \pm (1/72\beta - 1/72\beta^2)\varepsilon$ factor must use a total communication of $\Omega(c_f[s]/\varepsilon^2)$ bits.
\end{theorem}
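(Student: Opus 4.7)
The plan is to follow the Woodruff--Zhang (WZ12) $F_k$ lower bound template and replace the numerical quantities specific to $x^k$ by the corresponding ones for general $f$: tracking $c_f[s]$ in place of $s^{k-1}$ and $\beta$ in place of $2^{k-1}$, which is precisely the value of $f(sy^*)/(2f(sy^*/2))$ when $f(x)=x^k$. The reduction will be from a promise $s$-party set-disjointness--style problem in the coordinator model (the ``Block-HH'' variant used in WZ12). In a single instance, each server $j \in [s]$ holds a subset $S_j$ of a universe of size $m = \Theta(c_f[s]/s)$ with $|S_j| \approx m/2$, under the promise of either a unique coordinate $i^* \in \bigcap_j S_j$ (YES) or no coordinate belonging to more than a constant fraction of the $S_j$'s (NO); such a single-instance problem is known to have communication complexity $\Omega(ms) = \Omega(c_f[s])$ in the coordinator model.

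Next I would encode the instance as a function-sum problem: server $j$ builds $x(j) \in \mathbb{R}^{Tm}$ (where $T$ copies are instantiated below) by placing $y^*$ in the coordinates of $S_j$ in each block and $0$ elsewhere, so that the summed coordinate $x_i = k_i \cdot y^*$, where $k_i \in \{0,1,\ldots,s\}$ counts how many servers hold $i$. The gap analysis hinges on the spikiness captured by $\beta$: in a YES block, the heavy coordinate satisfies $f(x_{i^*}) = f(sy^*) = s \cdot c_f[s] \cdot f(y^*)$ by the tightness assumption, while in a NO block every coordinate has $k_i$ bounded away from $s$ and so contributes at most $f(sy^*/2) = f(sy^*)/(2\beta)$. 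The ``light'' background from coordinates with small $k_i$ is uniformly controlled via super-additivity and the definition of $c_f[s]$ by $\sum_j |S_j| f(y^*) = \Theta(m s\, f(y^*)) = \Theta(c_f[s] f(y^*))$, which under the assumption $s = \Omega(\beta)$ is a $\Theta(1/s)$ fraction of the heavy YES contribution $s \cdot c_f[s] \cdot f(y^*)$ and thus cannot swamp the $1/\beta$ YES-vs-NO signal.

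To upgrade this constant-factor gap into the $1/\varepsilon^2$ factor in the lower bound, I would instantiate $T = \Theta(1/\varepsilon^2)$ independent copies of the hard problem, sampled uniformly YES or NO. Letting $Y$ be the number of YES copies, binomial anti-concentration gives that $Y$ fluctuates by $\Theta(\sqrt{T}) = \Theta(1/\varepsilon)$ across the distribution, and $\sum_i f(x_i) = Y \cdot s \cdot c_f[s] \cdot f(y^*) + (\text{block-independent background}) \pm (\text{slack})$. A multiplicative $(1 \pm \varepsilon')$-approximation to $\sum_i f(x_i)$ with $\varepsilon' = (1/72\beta - 1/72\beta^2)\varepsilon$ therefore recovers $Y$ up to additive $o(\sqrt{T})$ error, enough by the standard anti-concentration-plus-averaging argument to resolve a constant fraction of the $T$ individual instances correctly. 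A direct-sum theorem in the coordinator model then forces $\Omega(T \cdot c_f[s]) = \Omega(c_f[s]/\varepsilon^2)$ total bits of communication.

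The main technical obstacle will be the precise gap calculation that delivers the exact approximation tolerance $(1/72\beta - 1/72\beta^2)\varepsilon$. The leading $1/72\beta$ should come from the YES-vs-NO per-block heavy contrast $f(sy^*) - f(sy^*/2) \ge f(sy^*)(1 - 1/(2\beta))$ normalised by the total $\sum_i f(x_i) \sim T \cdot s \cdot c_f[s] \cdot f(y^*)$, multiplied by the $\Theta(\varepsilon)$ binomial gap; the $-1/72\beta^2$ correction absorbs the second-order slack from the NO-case partial intersections (where some coordinates have $k_i$ close to but below $s/2$) and from the light background interacting with the heavy term. Verifying that the background-control argument still goes through for arbitrary super-additive $f$ (where, unlike for $x^k$, one lacks homogeneity and must bound $f(k_i y^*)$ only through $c_f[\cdot]$ and $\beta$) is the place where the hypothesis $s = \Omega(\beta)$ will be used most carefully.
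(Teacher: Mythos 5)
Your plan is structurally quite different from the paper's proof, and it has gaps that go beyond the constants. The paper reduces directly from the $s$-BTX problem of Woodruff--Zhang, a gap-counting problem built from $1/\varepsilon^2$ copies of $s$-XOR for which a conditional-information-cost lower bound of $\Omega(n/s\varepsilon^2)$ is already available, so no separate direct-sum theorem or anti-concentration argument is needed. In the $s$-BTX hard distribution each non-special coordinate is held by \emph{at most one} server (so $k_i\in\{0,1\}$), while the special coordinate in each block takes a value in $\{0,s/2,s\}$ and, crucially, is shared by exactly the first half or exactly the second half of the servers. This produces both a variable $U$ (count of $s/2$-coordinates, the quantity the reduction must estimate) \emph{and} a nuisance variable $V$ (count of $s$-coordinates). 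Since the protocol only returns an approximation to $\sum_i f(x_i)$, which mixes $U$ and $V$ additively, the paper runs the protocol three times --- on $x$, on $x^{\text{left}}=\sum_{j\le s/2}x(j)$, and on $x^{\text{right}}=\sum_{j>s/2}x(j)$ --- and forms $\beta(\widetilde{W}^{\text{left}}+\widetilde{W}^{\text{right}})-\widetilde{W}$, which cancels $V$ because a value-$s$ coordinate contributes $f(sy^*)=2\beta f(sy^*/2)$ to $W$ but $2f(sy^*/2)$ to $W^{\text{left}}+W^{\text{right}}$. This cancellation degrades as $\beta\to 1$, and dividing through by $\beta-1$ is exactly what produces the tolerance $(\beta-1)/(72\beta^2)\,\varepsilon$. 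Your proposal has no counterpart to this split-and-recombine step; your heuristic tolerance $\varepsilon(1-1/(2\beta))$ does not vanish as $\beta\to 1$, whereas the theorem must become vacuous for $\beta=1$ (e.g.\ $f(x)=x$, where $\sum_i f(x_i)$ is computable in $O(s)$ bits), so that formula cannot be right.

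Two further concrete problems. First, your per-block hard distribution ``$|S_j|\approx m/2$'' with universe $m=\Theta(c_f[s]/s)$ makes \emph{every} coordinate have $k_i\approx s/2$, so the light background per block is $\Theta(m\cdot f((s/2)y^*)) = \Theta\bigl((c_f[s]/s)\cdot f(sy^*)/(2\beta)\bigr)$, which already exceeds the heavy YES contribution $f(sy^*)$ by a factor $c_f[s]/(2\beta s)$; for $f(x)=x^k$ and $k>2$ this diverges with $s$. The WZ12 distribution places each background coordinate with a single random server, so $|S_j|\approx m/(2s)$ and $k_i\in\{0,1\}$ off the special coordinate; your $m/2$ appears to be a slip, but as written the signal drowns. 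Second, your claim that the background is ``uniformly controlled via super-additivity by $\sum_j|S_j|f(y^*)$'' uses super-additivity in the wrong direction: $\sum_i f(k_i y^*)\ge\sum_i k_i f(y^*)=\sum_j|S_j|f(y^*)$ is a \emph{lower} bound, and for super-additive $f$ the true background can be much larger. Finally, even granting a clean YES/NO disjointness block, the per-block contrast you would actually get is $f(sy^*)-s\cdot f(y^*) = f(sy^*)\bigl(1-1/c_f[s]\bigr)$, a $c_f[s]$-dependent quantity rather than a $\beta$-dependent one, so the structure of your argument cannot reproduce the $(1/72\beta-1/72\beta^2)$ tolerance the theorem states; it would at best yield a different (and harder to calibrate) statement, and the missing direct-sum theorem for the gap-count of $T$ disjointness instances would still need to be established, which is precisely what the $s$-BTX machinery already provides.
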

While the requirements in the above theorem may seem circular, as in, $\beta = f(sy^*)/2f(sy^*/2)$ and $s \ge \Omega(\beta)$, note that $\beta$ is upper bounded by $\max_x f(x)/(2f(x/2))$ which is independent of the number of servers $s$.
\begin{proof}
We prove the communication lower bound by showing that any protocol which can approximate $\sum_i f(x_i)$ where $f$ is a function that satisfies the properties in the theorem statement can be used to construct a protocol for solving the so-called $s$-BTX (Block-Threshold-XOR) problem on a specific hard input distribution $\nu$.

To define the $s$-BTX communication problem and a hard distribution $\nu$, we first define the $s$-XOR problem and a hard distribution $\psi_n$ for this problem. There are $s$ sites $S_1, \ldots, S_s$. Each site $S_j$ holds a block $b(j) = (b_1(j), \ldots, b_n(j))$ of $n$ bits. The $s$ sites want to compute the following function:
\begin{align*}
    s\text{-XOR}(b(1), \ldots, b(s)) = \begin{cases}
        1, & \text{if there is an index $i\in [n]$ such that}\\
        &\quad \text{ $b_{i}(j) = 1$ for exactly $s/2$ values of $j$},\\
        0, &\text{otherwise.}
    \end{cases}
\end{align*}
Woodruff and Zhang \cite{WZ12} define an input distribution $\varphi_n$ to the $s$-XOR problem as follows. For each coordinate $i \in [n]$, a variable $D_i$ is chosen uniformly at random from the set $\{1, \ldots, s\}$. Conditioned on the value $D_{i}$, all but the $D_i$-th site sets their input to $0$ in the $i$-th coordinate, whereas the $D_i$-th site sets its input in the $i$-th coordinate to $0$ or $1$ with equal probability. Let $\varphi_1$ be this distribution on one coordinate.

Next, a special coordinate $M$ is chosen uniformly at random from $[n]$ and the inputs in the $M$-th coordinate at all $s$ sites  are modified as follows: for the first $s/2$ sites, the inputs in the $M$-th coordinate are replaced with all $0$s with probability $1/2$ and all $1$s with probability $1/2$. Similarly for the last $s/2$ sites, the inputs in the $M$-th coordinate are replaced with all $0$s with probability $1/2$ and all $1$s with probability $1/2$. Let $\psi_1$ denote the input distribution on the special coordinate and $\psi_n$ denote the input distribution that on special coordinate follows $\psi_1$ and follows $\varphi_1$ on the remaining $n-1$ coordinates.

We will now define the $s$-BTX problem and a hard input distribution $\nu$. Again, there are $s$ sites $S_1, \ldots, S_s$. Each site $S_j$ holds an input consisting of $1/\varepsilon^2$ blocks and each block is an input for that site in a corresponding $s$-XOR problem. Concretely, each site $S_j$ holds a length $n/\varepsilon^2$ vector $b(j) = (b^{1}(j), \ldots, b^{1/\varepsilon^2}(j))$ divided into $1/\varepsilon^2$ blocks of $n$ bits each. There are $1/\varepsilon^2$ instances of the $s$-XOR problem with the $\ell$-th instance having the inputs $b^{\ell}(1), \ldots, b^{\ell}(s)$. In the $s$-BTX problem, the sites want to compute the following:
\begin{align*}
    s\text{-BTX}(b(1), \ldots, b(s)) = \begin{cases}
        1, &\text{if}\ |\sum_{\ell \in [1/\varepsilon^2]}s\text{-XOR}(b^{\ell}(1), \ldots, b^{\ell}{(s)}) - 1/2\varepsilon^2|        \ge 2/\varepsilon\\
        0, &\text{if}\ |\sum_{\ell \in [1/\varepsilon^2]}s\text{-XOR}(b^{\ell}(1), \ldots, b^{\ell}{(s)}) - 1/2\varepsilon^2|        \le 1/\varepsilon\\
        *, &\text{otherwise.}
    \end{cases}
\end{align*}
A hard input distribution $\nu$ for this problem is defined as follows: The input of the $s$ sites in each block is independently chosen according to the input distribution $\psi_n$ defined above for the $s$-XOR problem. Let $B$ be the random variable denoting the inputs $(b(1), \ldots, b(s))$ when drawn from input distribution $\nu$ and $M = (M^1, \ldots, M^{1/\varepsilon^2})$ denote the random variable where $M^{\ell}$ denotes the special coordinate in the $\ell$-th block of the inputs and $D$ denotes the special sites for all the coordinates in all $1/\varepsilon^2$ instances of the $s$-XOR problem. \cite{WZ12} prove the following theorem:
\begin{theorem}[{\cite[Theorem~7]{WZ12}}]
    Let $\Pi$ be the transcript of any randomized protocol for the $s$-BTX problem on input distribution $\nu$ with error probability $\delta$ for a sufficiently small constant $\delta$. We have $I(B;\Pi \mid M, D) \ge \Omega(n/s\varepsilon^2)$, where information is measured with respect to the input distribution $\nu$.
\end{theorem}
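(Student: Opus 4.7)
The plan is a two-level information-theoretic direct-sum argument: first decompose $s$-BTX into its $1/\varepsilon^2$ independent $s$-XOR sub-instances, then decompose each $s$-XOR instance into its $n$ single-coordinate sub-problems, and finally prove an $\Omega(1/s)$ information lower bound for the one-coordinate primitive. Multiplying the two decompositions by the per-coordinate base case yields $\Omega((1/\varepsilon^2)\cdot n\cdot (1/s))=\Omega(n/(s\varepsilon^2))$, which is the claimed bound.

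For the outer direct sum, I use the chain rule together with the fact that under $\nu$ the blocks $B^1,\dots,B^{1/\varepsilon^2}$ are i.i.d.\ $\psi_n$ and independent of $(M,D)$ in the product sense. Conditioning on $(M,D)$,
\[
I(B;\Pi\mid M,D)=\sum_{\ell=1}^{1/\varepsilon^2}I(B^\ell;\Pi\mid M,D,B^{<\ell}),
\]
and since the blocks are independent each summand lower bounds the information cost, on $\psi_n$, of a single-block protocol $\Pi_\ell$ obtained by having the sites simulate the remaining $1/\varepsilon^2-1$ blocks from public coins. The nontrivial step is to verify that $\Pi_\ell$ solves a single $s$-XOR with constant advantage: since each block's XOR value is Bernoulli$(1/2)$ under $\psi_n$, the sum of the $1/\varepsilon^2$ XOR outputs is approximately Binomial$(1/\varepsilon^2,1/2)$ with standard deviation $\Theta(1/\varepsilon)$, so a constant-error BTX decider distinguishing the ``$1/\varepsilon$'' and ``$2/\varepsilon$'' gap regimes must on average retain $\Omega(1)$ advantage per block, by an anti-concentration argument.

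For the inner direct sum across the $n$ coordinates of a single block, I exploit that under $\psi_n$, conditional on the special coordinate index $M=m$, the non-special coordinates are i.i.d.\ $\varphi_1$ and independent of the XOR output, while coordinate $m$ itself follows $\psi_1$. Writing the per-block information as a sum over coordinates via the chain rule, and using that $M$ is uniform on $[n]$ so the protocol cannot concentrate its information on a single coordinate, each coordinate must carry $\Omega(1/s)$ information on the appropriate mixture of $\varphi_1$ and $\psi_1$. The base-case $\Omega(1/s)$ bound reduces to a multi-party AND/disjointness-style primitive: under $\varphi_1$ the random site $D$ holds a uniform bit and the other $s-1$ sites hold $0$, while under $\psi_1$ the first and last $s/2$ sites each hold a common random bit (making the XOR $1$ with probability $1/2$); a standard Hellinger-distance or cut-and-paste argument on each site's transcript marginal, conditioned on $D$, gives the desired $\Omega(1/s)$.

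The main technical obstacles are threefold. First, carrying the outer direct sum faithfully through the asymmetric BTX threshold promise requires the anti-concentration/variance step on the Binomial$(1/\varepsilon^2,1/2)$ sum, so that a bounded-error BTX decoder genuinely contributes $\Omega(1)$ per-block advantage on average. Second, the coordinate-level decomposition of a single $s$-XOR instance must be executed in the standard ``hide the hard coordinate'' style of Bar-Yossef, Jayram, Kumar, and Sivakumar, with the conditioning on $(M,D)$ chosen so that the induced single-coordinate marginal exactly matches the hard primitive; keeping the $(M,D)$ information on the conditioning side of $I(\cdot;\cdot\mid\cdot)$ is what makes the direct sum free. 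Third, the per-coordinate base case is a multi-party Hellinger argument that must be adapted to the coordinator/message-passing model; since the present theorem is quoted as \cite[Theorem~7]{WZ12}, I would follow their reduction to a two-party primitive embedded into the $s$-site coordinator protocol, and invoke their calculation for the final $\Omega(1/s)$ bound.
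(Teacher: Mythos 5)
First, a point of context: the paper does not prove this statement at all --- it is imported verbatim as Theorem~7 of \cite{WZ12}, so there is no in-paper argument to compare against; your proposal has to be judged against the actual WZ12-style proof, whose general shape (block-level direct sum, coordinate-level direct sum, Hellinger-type base case of order $1/s$) you have correctly identified.

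There is, however, a genuine gap in your outer direct-sum step, and it sits exactly at the place you flag as ``the nontrivial step.'' You construct a single-block protocol $\Pi_\ell$ by publicly simulating the other $1/\varepsilon^2-1$ blocks and then claim $\Pi_\ell$ solves one $s$-XOR instance with constant advantage. The anti-concentration you invoke works against this reduction rather than for it: since the count of XOR-$1$ blocks has standard deviation $\Theta(1/\varepsilon)$, flipping the XOR value of a single embedded block changes the correct BTX classification only with probability $O(\varepsilon)$, so the BTX output has only $O(\varepsilon)$ correlation with the embedded block's XOR bit. Used black-box, this gives per-block information bounded below only by the cost of solving $s$-XOR with advantage $O(\varepsilon)$, and the final bound degrades to roughly $\Omega(n/s)$ rather than $\Omega(n/(s\varepsilon^2))$. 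The correct route (and what WZ12 actually does) cannot treat the BTX answer as a per-block solver; it must argue globally about the transcript, e.g.\ that constant overall success in distinguishing deviation $\le 1/\varepsilon$ from $\ge 2/\varepsilon$ forces the posterior uncertainty of the block-XOR count to shrink by a constant factor, which for independent blocks forces a constant fraction of the $1/\varepsilon^2$ blocks to have constant transcript-advantage on their XOR bit, and only then applies the $\Omega(n/s)$ single-block bound to those blocks. Your write-up asserts the conclusion (``$\Omega(1)$ advantage per block on average'') but supplies a mechanism that provably yields only $O(\varepsilon)$ advantage, so as written the derivation of the $1/\varepsilon^2$ factor does not go through. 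The inner decomposition and the $\Omega(1/s)$ per-coordinate base case are also only asserted, but those are standard BJKS-style steps; the block-level step is the one that needs a genuinely different argument from the one you propose.
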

The theorem essentially states that the transcript of any protocol that solves the $s$-BTX problem on the input distribution $\nu$ with a large probability must have a large amount of ``information'' about the input vectors when conditioned on the random variables $M, D$. Since the randomized communication complexity is always at least the conditional information cost, the above theorem implies that any randomized protocol that solves the $s$-BTX problem on input distribution $\nu$ with error probability $\delta$ has a communication complexity of $\Omega(n/s\varepsilon^2)$.

We show a lower bound on the communication complexity of the function sum estimation problem for $f$ in the theorem statement by reducing the $s$-BTX problem to approximating $\sum_i f(x_i)$ for appropriately chosen vectors $x(1), \ldots, x(s)$ at each of the sites.

Let $n = s \cdot c_f[s]$ so that the communication complexity of a randomized protocol for $s$-BTX on input distribution $\nu$ is $\Omega(c_f[s]/\varepsilon^2)$. Let $(b(1), \ldots, b(s))$ be inputs to the $s$-BTX problem drawn from the distribution $\nu$. Notice that each $b(j)$ is a binary vector with $s \cdot c_f[s]/\varepsilon^2$ coordinates. Now define $b = b(1) + \cdots + b(s)$.

Since the input $(b(1), \ldots, b(s))$ is drawn from the distribution $\nu$, we note the following about vector $b$:
\begin{enumerate}
    \item Each block of $1/\varepsilon^2$ coordinates has exactly one coordinate $i$ in which $b_i = s$ with probability $1/4$, $b_i = s/2$ with probability $1/2$ and $b_i = 0$ with probability $1/4$.
    \item In each block, all other coordinates apart from the one singled out above have a value $0$ with probability $1/2$ and $1$ with probability $1/2$.
\end{enumerate}
Therefore the vector $b$ when $(b(1), \ldots, b(s))$ is sampled from $\nu$ has, in expectation, $\frac{s \cdot c_f - 1}{2\varepsilon^2}$ coordinates with value $1$, $\frac{1}{2\varepsilon^2}$ coordinates with value $s/2$ and $\frac{1}{4\varepsilon^2}$ coordinates with value $s$.

For each $j \in [s]$, define $x(j) = y^* \cdot b(j)$ where $y^*$ is as in the theorem statement and let $x = \sum_j x(j) = y^* \cdot b$. From the above properties of the vector $b$, the vector $x$ has coordinates only with values $0, y^*, sy^*/2, sy^*$ and in expectation it has $\frac{s\cdot c_f[s] - 1}{2\varepsilon^2}$ coordinates with value $y^*$, $1/2\varepsilon^2$ coordinates with value $sy^*/2$ and $1/4\varepsilon^2$ coordinates with value $sy^*$. So, we write
\begin{align*}
    W \coloneqq \sum_i f(x_i) = \left(\frac{s \cdot c_f[s] - 1}{2\varepsilon^2} + Q\right) \cdot f(y^*) + \left(\frac{1}{2\varepsilon^2} + U\right) \cdot f(sy^*/2) + \left(\frac{1}{4\varepsilon^2} + V\right) \cdot f(sy^*)
\end{align*}
where $Q, U, V$ denote the deviations from the means for each type of coordinate. Note that we have $f(0) = 0$ and hence no contribution from such random variables. Now, the $s$-BTX problem is exactly to determine if $|U| \ge 2/\varepsilon$ or $|U| \le 1/\varepsilon$ and we want to show that a protocol to approximate $\sum_i f(x_i)$ can be used to distinguish between the cases.

We now define $x^{\text{left}} = \sum_{j=1}^{s/2}x(j)$ and $x^{\text{right}} = \sum_{j=s/2+1}^s x(j)$. Let $W^{\text{left}} \coloneqq \sum_i f(x^{\text{left}}_i)$ and $W^{\text{right}} \coloneqq \sum_i f(x^{\text{right}}_i)$. We now note that
\begin{align*}
    W^{\text{left}} + W^{\text{right}} = \left(\frac{s \cdot c_f[s] - 1}{2\varepsilon^2} + Q\right) \cdot f(y^*)+ \left(\frac{1}{2\varepsilon^2} + U\right) \cdot f(sy^*/2) + \left(\frac{1}{4\varepsilon^2} + V\right) \cdot 2 \cdot f(sy^*/2).
\end{align*}
Note that for the function $f$, we have $f(sy^*) = \beta \cdot 2 \cdot f(sy^*/2)$ for some $\beta > 1$. Hence,
\begin{align*}
    \beta(W^{\text{left}} + W^{\text{right}}) - W = (\beta-1)\left(\left(\frac{s \cdot c_f[s] - 1}{2\varepsilon^2} + Q\right) \cdot f(y^*)+ \left(\frac{1}{2\varepsilon^2} + U\right) \cdot f(sy^*/2)\right).
\end{align*}
Let $\calP$ be a protocol that can approximate $\sum_i f(x_i)$, up to a $1 \pm \alpha\varepsilon$ factor, when the vector $x$ is distributed across $s$ servers. Let $\widetilde{W}$, $\widetilde{W}^{\text{left}}$ and $\widetilde{W}^{\text{right}}$ be the $1 \pm \alpha\varepsilon$ approximations for $W$, $W^{\text{left}}$ and $W^{\text{right}}$ computed by running the protocol $\calP$ on three different instances of the function sum approximation problem. We first note that for the vector $x$ constructed using the inputs $(b(1), \ldots, b(s))$, we have $\sum_i f(x_i) \le \frac{s \cdot c_f[s]}{\varepsilon^2}f(y^*) + \frac{1}{\varepsilon^2}f(sy^*) \le \frac{2 \cdot f(sy^*)}{\varepsilon^2}$ with probability $1$ where we used the fact that $s \cdot c_f[s] \cdot f(y^*) = f(sy^*)$. Hence,
\begin{align*}
    \widetilde{W} = W \pm \frac{\alpha \cdot 2 \cdot f(sy^*)}{\varepsilon},\quad \widetilde{W}^{\text{left}} = W^{\text{left}} \pm \frac{\alpha \cdot 2 \cdot f(sy^*)}{\varepsilon},\quad \text{and}\quad \widetilde{W}^{\text{right}} = W^{\text{right}} \pm \frac{\alpha \cdot 2 \cdot f(sy^*)}{\varepsilon}
\end{align*}
which then implies
\begin{align*}
    &\beta(\widetilde{W}^{\text{left}} + \widetilde{W}^{\text{right}}) - \widetilde{W}\\
    &= \beta(W^{\text{left}} + W^{\text{right}}) - W \pm \frac{6\alpha\beta}{\varepsilon}f(sy^*)\\
    &= (\beta-1)\left(\left(\frac{s \cdot c_f[s] - 1}{2\varepsilon^2} + Q\right) \cdot f(y^*)+ \left(\frac{1}{2\varepsilon^2} + U\right) \cdot f(sy^*/2)\right) \pm \frac{6\alpha\beta}{\varepsilon}f(sy^*).
\end{align*}
Now, we note that with a large constant probability over the distribution $\nu$, the random variable $Q$ satisfies
\begin{align*}
    |Q| \le \frac{C\sqrt{s \cdot c_f[s]}}{\varepsilon}
\end{align*}
for a large enough constant $C$ by a simple application of a Chernoff bound. Hence, with a union bound on the above event and the correctness of the protocol on inputs $x$,  $x^{\text{left}}$ and $x^{\text{right}}$, we get
\begin{align*}
     \beta(\widetilde{W}^{\text{left}} + \widetilde{W}^{\text{right}}) - \widetilde{W} &= (\beta-1) \cdot \left(\frac{1}{2\varepsilon^2} + U\right) \cdot f(sy^*/2) + (\beta - 1) \cdot \left(\frac{s \cdot c_f[s]-1}{2\varepsilon^2}\right) \cdot f(y^*)\\
     &\quad \pm (\beta-1)\frac{C\sqrt{s \cdot c_f[s]}}{\varepsilon}f(y^*) \pm \frac{6\alpha\beta}{\varepsilon}f(sy^*).
\end{align*}
Dividing the expression by $(\beta-1)$, we get
\begin{align*}
    \frac{ \beta(\widetilde{W}^{\text{left}} + \widetilde{W}^{\text{right}}) - \widetilde{W}}{\beta-1} &= \left(\frac{1}{2\varepsilon^2} + U\right) \cdot f(sy^*/2) +  \left(\frac{s \cdot c_f[s]-1}{2\varepsilon^2}\right)f(y^*)\\
     &\quad \pm \frac{C\sqrt{s \cdot c_f[s]}}{\varepsilon}f(y^*) \pm \frac{6\alpha\beta}{\varepsilon(\beta-1)}f(sy^*) 
\end{align*}
We now use $f(sy^*) = s \cdot c_f[s] \cdot f(y^*)$, $f(sy^*/2) = s \cdot c_f[s] \cdot f(y^*)/2\beta$ to obtain that
\begin{align*}
    \frac{ \beta(\widetilde{W}^{\text{left}} + \widetilde{W}^{\text{right}}) - \widetilde{W}}{\beta-1} = \frac{s \cdot c_f[s] \cdot f(y^*)}{2\beta} \cdot \left(\frac{1}{2\varepsilon^2} + U + \frac{(s \cdot c_f[s] - 1) \cdot 2\beta}{2\varepsilon^2 \cdot s \cdot c_f[s]} \pm \frac{2C \cdot \beta }{\varepsilon\sqrt{s \cdot c_f[s]}} \pm \frac{12\alpha\beta^2}{\varepsilon(\beta-1)}\right).
\end{align*}
If $s \ge C' \cdot \beta$, then $\sqrt{s \cdot c_f[s]} \ge s \ge C' \cdot \beta$ as well. If $C' \ge 8C$, and $\alpha \le (\beta-1)/72\beta^2$,  then
\begin{align*}
    \frac{ \beta(\widetilde{W}^{\text{left}} + \widetilde{W}^{\text{right}}) - \widetilde{W}}{\beta-1} = \frac{s \cdot c_f[s] \cdot f(y^*)}{2\beta} \cdot \left(\frac{1}{2\varepsilon^2} + U + \frac{(s \cdot c_f[s] - 1) \cdot 2\beta}{2\varepsilon^2 \cdot s \cdot c_f[s]} \pm \frac{5}{12\varepsilon}\right).
\end{align*}
Hence, we can distinguish between the case when $|U| \le 1/\varepsilon$ or $|U| \ge 2/\varepsilon$ using the expression on the LHS of the above equality. As, $n = s \cdot c_f[s]$, the lower bound for the $s$-BTX problem implies that any randomized protocol that approximates $\sum_i f(x_i)$ in the coordinator model when the vector $x$ is split between $s$ servers, up to a $1 \pm \left(\frac{1}{72\beta} - \frac{1}{72\beta^2}\right)\varepsilon$ factor, with probability $\ge 1 - \delta$ for a small enough constant $\delta$, must use a total communication of $\Omega(c_f[s]/\varepsilon^2)$ bits.
\end{proof}
\subsection{\texorpdfstring{$F_k$}{Fk} Estimation Lower Bound for 1-round Algorithms}
We use the multi-player set disjointness problem to show that a one round protocol for $F_k$ estimation using shared randomness requires a total of $\tilde{\Omega}(s^{k-1}/\varepsilon^k)$ bits of communication. In the one-way \emph{blackboard} private-coin communication model, it is known that the $t$-player \emph{promise} set disjointness problem, with sets drawn from $[n]$, has a communication lower bound of $\Omega(n/t)$. In this problem, each of the $t$ servers receives a subset of $[n]$ with the promise that the sets received by all the servers are either mutually disjoint or that there is exactly one element that is present in all the subsets.

The one round algorithms in our paper can be \emph{implemented} in the standard $1$-way bloackboard model: In this model, all the servers in a deterministic order \emph{write} the information on a publicly viewable blackboard. The total communication in this model is then the total number of bits written on the blackboard. The one round algorithms in the coordinator model are strictly weaker as each server sends its information to the coordinator without even looking at others bits. The lower bound of $\Omega(n/t)$ in the 1-way blackboard model was shown in \cite{chakrabarti2003near} and later \cite{Gro09} extended the $\Omega(n/t)$ lower bound to an arbitrary number of rounds.
\begin{theorem}
    Given $s \ge 3$ servers each having an $n$ dimensional vector $x(1), \ldots, x(s)$ respectively, any 1-round $F_k$ estimation algorithm, in which the servers send a single message to the coordinator, that approximates $F_k(x)$ up to $1 \pm \varepsilon$ factor with probability $\ge 9/10$ over the randomness in the protocol, must use $\Omega_k(s^{k-1}/\varepsilon^{k}\log(s/\varepsilon))$ bits of total communication.
    \label{thm:Fk-one-round-lowerbound}
\end{theorem}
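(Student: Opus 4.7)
The plan is to reduce from the $t$-player promise set disjointness problem on a universe of size $n$ with set sizes $n/t$, for which the $\Omega(n/t)$-bit lower bound of \cite{chakrabarti2003near}, extended by \cite{Gro09}, holds even in the stronger 1-way private-coin blackboard model. Any 1-round coordinator-model protocol is a special case of a 1-way blackboard protocol (the coordinator can simulate the blackboard by scheduling each server to speak once), so this bound transfers. I would instantiate $t = s$ and pick $n$ so that the resulting $F_k$ gap between the two disjointness cases is tight against the $(1 \pm \varepsilon)$ approximation.

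The first step is the basic embedding: each server $j$'s $F_k$ input is $x(j) = \mathbb{1}_{S_j} \in \{0,1\}^n$, so that $x = \sum_j x(j)$ is the multiplicity vector. In the No (pairwise disjoint) case $F_k(x) = n$, and in the Yes (unique intersection) case $F_k(x) = s^k + (n-1)$. A $(1 \pm \varepsilon)$-approximation of $F_k$ distinguishes these two cases whenever $s^k = \Omega(\varepsilon n)$, so taking $n = \Theta(s^k/\varepsilon)$ and invoking $\Omega(n/s)$ already gives $\Omega(s^{k-1}/\varepsilon)$ bits, short of the target by a factor of $\varepsilon^{-(k-1)}$.

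The second step is a parallel-repetition amplification tailored to the 1-round model. I would stack $R = \tilde{\Theta}(\varepsilon^{-(k-1)})$ independent promise-disjointness instances as disjoint coordinate blocks of the combined $F_k$ input (total universe $Rn$), scaling the $r$-th block by a weight $w_r$ drawn from a carefully chosen geometric progression so that each block's Yes/No answer is individually decodable from a single $(1 \pm \varepsilon)$-approximation of the aggregate $F_k$. Because in one round each player commits to its message before the coordinator can focus on a particular block, a direct-sum information-cost argument in the spirit of \cite{WZ12} forces the transcript to simultaneously support the disjointness query on every one of the $R$ blocks; formally, the conditional information cost $I(B^{(r)}; \Pi \mid M, D)$ is at least $\Omega(n/s)$ per block under the product hard distribution, and these bounds sum. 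The total lower bound is then $R \cdot \Omega(n/s) = \tilde{\Omega}_k(s^{k-1}/\varepsilon^k)$, with the $\log(s/\varepsilon)$ slack in the denominator coming from encoding the block index and a union bound over blocks.

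The main obstacle will be calibrating the weights $\{w_r\}$ and the hard distribution so that (i) an aggregate $(1 \pm \varepsilon)$-approximation of $F_k$ still pins down each individual block's answer rather than allowing the $\varepsilon$-scale additive errors across blocks to conspire and mask a single block's intersection contribution, and (ii) the information-cost decomposition is additive over the $R$ blocks under this distribution. A natural attempt is to choose $w_r^k$ in geometric progression and align block sizes so that the $r$-th block's intersection gap $w_r^k(s^k - 1)$ is $\gtrsim \varepsilon$ times the aggregate baseline $n \sum_{r'} w_{r'}^k$; getting this to hold uniformly over all $r$ while keeping the per-block hard distribution unchanged is the delicate part. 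Once both properties are established, Yao's minimax principle converts the distributional lower bound into the stated randomized total-communication lower bound.
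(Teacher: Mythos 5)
Your first step is fine as far as it goes, but the second step has a real gap, and the paper's actual argument is quite different.

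The paper does not use parallel repetition or a direct-sum information cost argument. Instead it reduces from $(s/2)$-player promise set disjointness on a universe of size $n = s^k/\varepsilon^k$, with only the first $s/2$ servers carrying real disjointness inputs; the remaining $s/2$ servers are \emph{virtualized} by the coordinator, who assigns them a large spike ($2/\varepsilon$) at a candidate intersection coordinate $i$. This amplifies the Yes/No gap at coordinate $i$ from the raw $s^k$ to roughly $(s/\varepsilon + s/2)^k - (s/\varepsilon + 1)^k = \Theta_k(\varepsilon) \cdot (s/\varepsilon)^k$, which is a $\Theta_k(\varepsilon)$-relative signal against the baseline $T \le n + (s/2)^k \approx (s/\varepsilon)^k$. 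The one-round property is used crucially: the first $s/2$ servers commit their messages once, and the coordinator can replay those fixed messages against all $n+1$ virtual instances (one per candidate $i$ plus the estimate $T'$). Paying for a union bound over these $n+1$ tests forces the protocol down to error $O(1/n)$, which costs the $\log n = O(k\log(s/\varepsilon))$ factor, and the resulting bound is $\Omega(n/(s/2)) = \Omega_k(s^{k-1}/\varepsilon^k)$ for the low-error protocol, hence $\Omega_k(s^{k-1}/(\varepsilon^k \log(s/\varepsilon)))$ for a constant-error one. The $\varepsilon^{-(k-1)}$ improvement over your first step thus comes from a larger universe $n$ made affordable by the virtualized spike, not from stacking instances.

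The gap in your plan is exactly the part you flag as delicate: a single real number known to relative accuracy $\varepsilon$ cannot encode $R \gg 1$ independent bits, so ``each block's Yes/No answer is individually decodable from a single $(1\pm\varepsilon)$-approximation of the aggregate $F_k$'' cannot hold for any choice of geometric weights. If $w_1 \gg w_2 \gg \cdots$, block~$1$'s baseline contribution dominates the aggregate and its own signal is barely detectable; block~$2$'s signal $w_2^k s^k$ is then buried under the $\varepsilon \cdot w_1^k n_1$ additive error, and you cannot subtract block~$1$'s contribution because you only know it to relative error $\varepsilon$. Without an auxiliary mechanism for isolating one block at a time the direct-sum step does not bite, and the needed mechanism is precisely the one the paper uses: reserve some servers as virtual and let the coordinator re-target them per query, which is legal in one round because the real servers' messages are already committed. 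If you incorporate that idea, you no longer need the block decomposition at all.
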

\begin{proof}
The lower bounds in \cite{chakrabarti2003near,Gro09} hold even with shared randomness, but are not stated that way, so one can also argue as follows to handle shared randomness: suppose there is a public coin algorithm in the $1$-way blackboard communication model using  a total of $c$ bits of communication. Then by Newman's equivalence \cite{Newman} of private-coin vs public-coin protocols up to an additive logarithmic increase in the communication, there is a private coin algorithm in the one-way blackboard communication model using a total of $c + O(\log (nt))$ bits. The first player samples one of the strings pre-shared among all the servers and announces the index of the string on the blackboard and then the remaining servers proceed with the computation using this string as the shared random bits. Hence, $c = \Omega(n/t) - O(\log nt)$ and when $t \le n^\alpha$ for a constant $\alpha < 1$, we obtain that $\Omega(n/t)$ bits is a lower bound on the communication complexity of 1-way public coin protocols in the blackboard model that solve the $s$-player set disjointness problem. 

In our model, the $F_k$ estimation algorithm is even weaker than the $1$-way public coin protocol in the blackboard model as all the servers send their bits to the coordinator without looking at others bits. Hence, the lower bound of $\Omega(n/t)$ bits can be used to lower bound the communication complexity.

Let $n = s^k/\varepsilon^k$ and $t=s/2$. Consider the instance of a $t$ player set-disjointness problem. We will encode the problem as approximating the $F_k$ moment of an $n$ dimensional vector distributed over $s$ servers.

For $j=1,\ldots,s/2$, the player $j$ encodes the subset $S_j \subseteq [n]$ they receive as an $n$ dimensional vector $x(j)$ by putting $1$ in the coordinates corresponding to the items in the set and $0$ otherwise.

Now each of the $s/2$ servers runs a $(1/Cn)$-error protocol in the coordinator model (as in the protocol fails with probability at most $1/Cn$) to approximate  $\|\sum_{j=1}^s x(j)\|_k^k$ and sends the transcript to the central coordinator. The central coordinator chooses appropriate vectors $x(s/2+1), \ldots, x(s)$ and using the transcripts from the $s/2$ servers finds a $1+\varepsilon$ approximation to $\|\sum_{j=1}^s x(j)\|_k^k$ by running the $(1/Cn)$-error protocol for estimating $F_k$ moments. 

Let $\|\sum_{j=1}^{s/2} x(j)\|_k^k = T$. Fix an index $i \in [n]$. The central coordinator creates the vectors $x{(s/2 + 1)}, \ldots, x{(s/2 + s/2)}$ to be all be equal and have a value of $2/\varepsilon$ in coordinate $i$ and remaining positions have value $0$. Now consider a NO instance for the set disjointness problem. Then $\|\sum_{j=1}^s x(j)\|_k^k \le (T-1) + (s/\varepsilon + 1)^k$.

Let $T'$ be such that $(1-\varepsilon')T \le T' \le (1+\varepsilon')T$. Then,
\begin{align*}
    (1+\varepsilon')\|\sum_{j=1}^s x(j)\|_k^k \le \frac{1+\varepsilon'}{1-\varepsilon'}T' - (1+\varepsilon') + (1+\varepsilon')(s/\varepsilon + 1)^k.
\end{align*}
Now consider a YES instance. If all the sets intersect in $i$, then $\|\sum_{j=1}^s x(j)\|_k^k = T - (s/2)^k + (s/\varepsilon + s/2)^k$. Now,
\begin{align*}
    (1-\varepsilon')\|\sum_{i=1}^s x(j)\|_k^k \ge \frac{1-\varepsilon'}{1+\varepsilon'}T' - (1-\varepsilon')(s/2)^k + (1-\varepsilon')(s/\varepsilon + s/2)^k.
\end{align*}
If 
\begin{align*}
    \frac{1-\varepsilon'}{1+\varepsilon'}T' - (1-\varepsilon')(s/2)^k + (1-\varepsilon')(s/\varepsilon + s/2)^k > \frac{1+\varepsilon'}{1-\varepsilon'}T' - (1+\varepsilon') + (1+\varepsilon')(s/\varepsilon + 1)^k, 
\end{align*}
we have a test for set disjointness. The above is implied by
\begin{align*}
    (1-\varepsilon')(s/\varepsilon + s/2)^k - (1+\varepsilon')(s/\varepsilon + 1)^k - (1-\varepsilon')(s/2)^k \ge \frac{4\varepsilon'}{1-(\varepsilon')^2}T'
\end{align*}
which is further implied by 
$
    (1-\varepsilon')(s/\varepsilon + s/2)^k - (1+\varepsilon')(s/\varepsilon + 1)^k - (1-\varepsilon')(s/2)^k \ge {8\varepsilon'}T.
$
As $T \le (s/\varepsilon)^k + (s/2)^k$, we obtain that the above is implied by
\begin{align*}
    (1-\varepsilon')(1/\varepsilon + 1/2)^k - (1+\varepsilon')(1/\varepsilon + 1/s)^k - (1-\varepsilon')(1/2^k) \ge 8\varepsilon'(1/\varepsilon^k + 1/2^k).
\end{align*}
For $s \ge 3$,
\begin{align*}
    \left(\frac{1/\varepsilon + 1/2}{1/\varepsilon + 1/s}\right)^k \ge \left(\frac{1/\varepsilon + 1/2}{1/\varepsilon + 1/3}\right)^k \ge (1+\varepsilon/8).
\end{align*}
Hence, setting $\varepsilon' = \varepsilon/C$ for a large enough constant implies that 
\begin{align*}
    (1-\varepsilon')(1/\varepsilon + 1/2)^k - (1+\varepsilon')(1/\varepsilon + 1/s)^k \ge \frac{\varepsilon}{16}(1/\varepsilon + 1/2)^k
\end{align*}
For $k \ge 2$, we further get 
\begin{align*}
    (1-\varepsilon')(1/\varepsilon + 1/2)^k - (1+\varepsilon')(1/\varepsilon + 1/s)^k - (1-\varepsilon')(1/2^k) \ge \frac{\varepsilon}{32}(1/\varepsilon + 1/2)^k.   
\end{align*}
By picking $C$ large enough, we obtain $(\varepsilon/32)(1/\varepsilon + 1/2)^k \ge 8\varepsilon'(1/\varepsilon^k + 1/2^k)$. Thus, if a $1\pm \varepsilon'$ approximation of $\|\sum_{j=1}^s x(j)\|_k^k$ for any $i \in [n]$ (note that the vectors $x(s/2+1), \ldots, x(s)$ depend on which $i$ we are using) exceeds $(1+\varepsilon')T'/(1-\varepsilon') - (1+\varepsilon') + (1+\varepsilon')(s/\varepsilon+1)^k$, then we can output YES to the set disjointness instance and otherwise output NO. Note that we needed to union bound over the $n+1$ instances of the problem, i.e., that we compute $T'$ such that $(1-\varepsilon')T \le T' \le (1+\varepsilon')T$ and later for each $i \in [n]$, we want a $1\pm \varepsilon'$ approximation to the appropriately defined $\|\sum_{j=1}^s x(j)\|_k^k$ and hence we use a $1/Cn$ error protocol.

Thus, any distributed protocol which outputs a $1+\varepsilon/C$ approximation to the $F_k$ approximation problem with probability $\ge 1 - \varepsilon^k/Cs^k$ must use a total communication of $\Omega_k(s^{k-1}/\varepsilon^k)$ bits. Consequently, an algorithm which succeeds with a probability $\ge 9/10$ must use $\Omega_k(s^{k-1}/\log(s/\varepsilon)\varepsilon^k)$ bits of total communication since the success probability of such an algorithm can be boosted to a failure probability $O(\varepsilon^k/s^k)$ by simultaneous independent copies of the protocol.
\end{proof}
\section{Neighborhood Propagation via Composable Sketches}
We define composable sketches and show how using a neighborhood propagation algorithm, composable sketches can be used so that all nodes in  a graph with arbitrary topology can simultaneously compute statistics of the data in a distance $\Delta$ neighborhood of the node. Typically the distance parameter $\Delta$ is taken to be a small constant but can be as large as the diameter of the underlying graph.

We use $\calA$ to denote a dataset. Each item in the dataset is of the form $(\key, \val)$ where the keys are drawn from an arbitrary set $T$ and the values are $d$-dimensional vectors. We use the notation $\calA.\vals$ to denote the matrix with rows given by the values in the dataset. We say two datasets $\calA$ and $\calB$ are \emph{conforming} if for all $\key$s present in both the datasets, the corresponding $\val$s in both the datasets are the same. We use the notation $\calA \cup \calB$ to denote the union of both the datasets. In the following, we assume that all the relevant datasets are conforming. A composable sketch $\sk(\mathcal{A})$ is a summary of the data items $\mathcal{A}$.
 The sketch $\sk(\cdot)$ must support the following three operations:
 \begin{enumerate}
     \item \textsc{Create}$(\mathcal{A})$: given data items $\mathcal{A}$, generate a sketch $\sk(\mathcal{A})$.
     \item
     \textsc{Merge}$(\sk(\mathcal{A}_1), \sk(\mathcal{A}_2), \cdots, \sk(\mathcal{A}_k))$:
     given the sketches $\sk(\mathcal{A}_1), \cdots, \sk(\mathcal{A}_k)$ for sets of data items  $\mathcal{A}_1,\cdots,\mathcal{A}_k$ which may have overlaps,
     generate a composable sketch $\sk(\mathcal{A}_1 \cup \cdots \cup \mathcal{A}_k)$  for the union of data items $\mathcal{A}_1 \cup \cdots \cup \mathcal{A}_k$.
     \item \textsc{Solve}$(\sk(\mathcal{A}))$: given a sketch $\sk(\mathcal{A})$ of data items $\mathcal{A}$, compute a solution with for a pre-specified problem with respect to $\mathcal{A}$.
\end{enumerate}
Note that $\sk(\calA)$ need not be unique and randomization is allowed during the construction of the sketch and merging. We assume that $\textsc{Create}$, $\textsc{Merge}$ and $\textsc{Solve}$ procedures have access to a shared uniform random bit string.

A core property of the above composable sketch definition is that it handles  duplicates. Consider the following problem over a graph $G = (V, E)$. Each vertex of the graph represents a user/server. For a node $u$, we represent their dataset with $\calS_u$, a set of $(\key,\val)$ pairs. Given a parameter $\Delta$, \emph{each} node in the graph wants to compute statistics or solve an optimization problem over the data of all the nodes within a distance $\Delta$ from the node. For example, with $\Delta = 1$, each node $u$ may want to solve a regression problem defined by the data at node $u$ and all the neighbors $u$.

As composable sketches handle duplicates, the following simple algorithm can be employed to solve the problems over the $\Delta$ neighborhood of each node $u$.
\begin{enumerate}
\item Each node $u$ computes $\sk(\mathcal{S}_u)$ and communicates to all its neighbors. 
\item Repeat $\Delta$ rounds: in round $i$, each node $u$ computes 
\begin{align*}
\sk(\mathcal{S}^i_u)&=\sk(\bigcup_{v:\set{v, u} \in E}\mathcal{S}^{i-1}_v) =\textsc{Merge}(\sk(\mathcal{S}_{v_1}^{i-1}),\cdots, \sk(\mathcal{S}_{v_k}^{i-1}))
\end{align*}
and sends the sketch to all its neighbors.
Here we use $\sk(\calS_u^0)$ to denote $\sk(\calS_u)$.
\item Each node $u$ in the graph outputs a solution over its $\Delta$ neighborhood via first computing $\textsc{Merge}(\sk(\calS_u^{0}), \sk(\calS_u^{1}), \sk(\calS_u^{2}), \ldots, \sk(\calS_u^{\Delta}))$ and then using the $\textsc{Solve}(\cdot)$ procedure.
\end{enumerate}
Notice that the capability of handling duplicates is crucial for the above neighborhood propagation algorithm to work. For example, a node $v$ at a distance $2$ from $u$ maybe connected to $u$ through two disjoint paths and hence $u$ receives the sketch of $u$'s data from two different sources. So it is necessary for the sketch to be duplicate agnostic to not overweight data of vertices that are connected through many neighbors. Another nice property afforded by composable sketches is that a node sends the same ``information'' to all its neighbors meaning that a node does not perform different computations determining what information is to be sent to each of its neighbors. 

In the following section, we give a composable sketch for computing an $\ell_p$ subspace embedding and show that it can be used to solve $\ell_p$ regression problems as well as the low rank approximation problem.
\section{Composable Sketches for Sensitivity Sampling}\label{sec:sensitivity-sampling}
We assume $\calA_1,\ldots,\calA_s$ are conforming datasets. Let $\calA \coloneqq \calA_1 \cup \cdots \cup \calA_s$. We give a composable sketch construction such that using $\sk(\calA)$, we can compute an $\ell_p$ subspace embedding for the matrix $\calA.\vals$. Another important objective is to make the size of the sketch $\sk(\calA)$ as small as possible so that sketches can be efficiently communicated to neighbors in the neighborhood propagation algorithm.

Given a matrix $A \in \R^{n \times d}$, we say that a matrix $M \in \R^{m \times d}$ is an $\varepsilon$ $\ell_p$-subspace embedding for $A$ if for all $x \in \R^d$,
\begin{align*}
    \lp{Mx}^p = (1 \pm \varepsilon) \lp{Ax}^p.
\end{align*}
$\ell_p$ subspace embeddings have numerous applications and are heavily studied in the numerical linear algebra literature. We will now define the so-called $\ell_p$ sensitivities and how they can be used to compute subspace embeddings.
\subsection{\texorpdfstring{$\ell_p$}{lp} Sensitivity Sampling}
The $\ell_p$ sensitivities are a straightforward generalization of the leverage scores. Given a matrix $A$ and a row $a$ of the matrix, the $\ell_p$ sensitivity of $a$ w.r.t. the matrix $A$ is defined as
\begin{align*}
    \tau^{\ell_p}_A(a) \coloneqq \max_{x : Ax \ne 0}\frac{|\la a, x\ra|^p}{\lp{Ax}^p}.
\end{align*}
The $\ell_p$ sensitivities measure the importance of a row to be able to estimate $\|Ax\|_p^p$ given any vector $x$. Suppose that a particular row $a$ is orthogonal to all the other rows of the matrix $A$, we can see that $a$ is very important to be able to approximate $\|Ax\|_p^p$ up to a multiplicative factor. It can be shown that if the matrix $A$ has $d$ columns, then the sum of $\ell_p$ sensitivities $\sum_{a \in A}\tau_A^{\ell_p}(a) \le d^{\max(p/2, 1)}$ \cite{MMWY22}. Now we state the following sampling result which shows that sampling rows of the matrix $A$ with probabilities depending on the sensitivities and appropriately rescaling the sampled rows gives an $\ell_p$ subspace embedding.
\begin{theorem}
	Given a matrix $A$ and a vector $v \in [0,1]^n$ such that for all $i \in [n]$, $v_i \ge \beta \tau_A^{\ell_p}(a_i)$ for some $\beta \le 1$, let a random diagonal matrix $\bS$ be generated as follows: for each $i \in [n]$ independently, set $\bS_{ii} = (1/p_i)^{1/p}$ with probability $p_i$ and $0$ otherwise. If $p_i \ge \min(1, C_1\beta^{-1}v_i(C_2d\log(d/\varepsilon) + \log(1/\delta))/\varepsilon^2)$ for large enough constants $C_1$ and $C_2$, then with probability $\ge 1 - \delta$, for all $x \in \R^d$,
	\begin{align*}
		\lp{\bS Ax} = (1 \pm \varepsilon)\lp{Ax}.
	\end{align*}
	\label{thm:sensitivity-sampling}
\end{theorem}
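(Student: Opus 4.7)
The plan is to prove this standard sensitivity sampling result by combining a Bernstein-type concentration bound for fixed vectors with a net argument over the column space of $A$.

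First, I would fix any $x \in \R^d$ with $\|Ax\|_p = 1$ and define the random variables $Y_i := \bS_{ii}^p |\langle a_i, x\rangle|^p$, so $\sum_i Y_i = \|\bS A x\|_p^p$. Each $Y_i$ takes value $|\langle a_i, x\rangle|^p / p_i$ with probability $p_i$ and $0$ otherwise; by linearity $\E[\sum_i Y_i] = \|Ax\|_p^p = 1$. The crucial deterministic bound is the sensitivity inequality $|\langle a_i, x\rangle|^p \le \tau^{\ell_p}_A(a_i) \cdot \|Ax\|_p^p \le \tau^{\ell_p}_A(a_i)$. Combining this with the assumption $p_i \ge C_1 \beta^{-1} v_i L/\varepsilon^2$ where $L = C_2 d \log(d/\varepsilon) + \log(1/\delta)$, and $v_i \ge \beta\tau^{\ell_p}_A(a_i)$, yields $|\langle a_i, x\rangle|^p / p_i \le \varepsilon^2 /(C_1 L)$ for every $i$ with $p_i < 1$.

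Next I would apply Bernstein's inequality to $\sum_i Y_i$. The per-term range is $M := \max_i |\langle a_i, x\rangle|^p/p_i \le \varepsilon^2/(C_1 L)$, and the variance proxy satisfies
\begin{align*}
\sum_i \Var(Y_i) \;\le\; \sum_i \frac{|\langle a_i,x\rangle|^{2p}}{p_i} \;\le\; M \cdot \sum_i |\langle a_i,x\rangle|^p \;\le\; \varepsilon^2/(C_1 L).
\end{align*}
Bernstein then gives $\Pr[\,|\sum_i Y_i - 1| > \varepsilon\,] \le 2\exp(-\Omega(C_1 L))$, which for $C_1$ large enough is at most $\delta \cdot (\varepsilon/Cd)^{C d}$ — small enough to union bound over a net of size $(Cd/\varepsilon)^{O(d)}$.

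For the net, I would take a $\gamma$-net $\mathcal{N}$ (for a sufficiently small $\gamma = \poly(\varepsilon/d)$) of the unit $\ell_p$-ball $\{Ax : \|Ax\|_p \le 1\}$ in the column space of $A$; standard volume arguments give $|\mathcal{N}| \le (O(1/\gamma))^d$. Applying the fixed-$x$ concentration and a union bound gives $\|\bS Ay\|_p^p = (1 \pm \varepsilon/2)\|Ay\|_p^p$ for all $y \in \mathcal{N}$, simultaneously, with probability $1-\delta$. The last step is to extend from $\mathcal{N}$ to all $x$. For $p=2$ one would bound the operator norm $\|\bS A\|_{\rm op}$ and use $\|\bS A(x-y)\|_2 \le \|\bS A\|_{\rm op}\|x-y\|_2$; for general $p$ I would instead use an iterative (chaining) approximation — writing any unit-ball point as $x = y_0 + \gamma y_1 + \gamma^2 y_2 + \cdots$ with each $y_k$ drawn from a rescaled net — and sum the approximation errors as a geometric series, using the $1 \pm \varepsilon/2$ guarantee at every scale.

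The main obstacle is this final net-to-continuous extension in the regime $p \neq 2$, since one cannot rely on Euclidean operator norm bounds; the iterative chaining requires some care to ensure that the $\gamma$-scale approximations accumulate to at most an $\varepsilon/2$ additional error and that a single high-probability event over a polynomially sized union of nets is enough. Everything else — the Bernstein step and the sensitivity bound $\sum_i \tau^{\ell_p}_A(a_i) \le d^{\max(p/2,1)}$ that controls the expected sketch size — is routine.
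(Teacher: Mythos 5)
Your proposal correctly implements the standard Bernstein-plus-net argument that the paper itself sketches in a single sentence right after the theorem statement, and each step (the sensitivity bound $|\langle a_i,x\rangle|^p\le\tau_A^{\ell_p}(a_i)\|Ax\|_p^p$, the derived range and variance bounds $M,\,\sum_i\Var(Y_i)\le\varepsilon^2/(C_1L)$, the $(O(1/\gamma))^d$ net count against the $2\exp(-\Omega(C_1 L))$ per-point failure probability) checks out. You also correctly flag the one genuinely delicate step --- extending from the net to all of $\R^d$ for $p\ne 2$ via a geometric telescoping/chaining decomposition rather than an operator-norm bound --- which is indeed the part that requires care and is handled exactly as you describe.
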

Given constant factor approximations for the $\ell_p$ sensitivities we can define the probabilities $p_i$ such that the matrix $\bS$ has at most $O(d^{\max(p/2,1)}(d\log (d/\varepsilon)+ \log 1/\delta)/\varepsilon^2)$ non-zero entries with a large probability. The proof of the above theorem proceeds by showing that for a fixed vector $x$, the event $\lp{\bS A x} = (1 \pm \varepsilon)\lp{Ax}$ holds with a high probability and then using an $\varepsilon$-net argument to extend the high probability guarantee for a single vector $x$ to a guarantee for all the vectors $x$. For $p = 2$, we can show that in the above theorem $p_i \ge \min(1, C_1\beta^{-1}v_i(C_2\log(d/\varepsilon) + \log 1/\delta)\varepsilon^{-2})$ suffices to construct a subspace embedding. So, in all our results for the special case of $2$, only $\tilde{O}(d)$ rows need to be sampled.

We will now show a construction of a composable sketch $\sk(\calA)$ given a dataset $\calA$. The composable sketch $\sk(\calA)$ can be used to construct an $\ell_p$ subspace embedding for the matrix $\calA.\vals$. Importantly, we note that given composable sketches $\sk(\calA)$ and $\sk(\calB)$, the sketches can be merged only when $\calA$ and $\calB$ are conforming and the sketches $\sk(\calA)$ and $\sk(\calB)$ are constructed using the same randomness in a way which will become clear after we give the sketch construction.

We parameterize our sketch construction with an integer parameter $t$ that defines the number of times a sketch can be merged with other sketches. We denote the sketch by $\sk_t(\calA)$ if it is ``mergeable'' $t$ times. Merging $\sk_t(\calA)$ and $\sk_{t'}(\calA)$ gives $\sk_{\min(t,t')-1}(\calA \cup \calB)$. Naturally, the size of the sketch increases with the parameter $t$. We will first show how the sketch $\sk_{t}(\calA)$ is created.
\subsection{Sketch Creation}
Given a dataset $\calA$ and a parameter $t$, we pick $t$ independent \emph{fully random} hash functions $h_1,\ldots,h_t$ mapping keys to uniform random variables in the interval $[0,1]$. Thus for each $\key$, the value $h_i(\key)$ is an independent uniform random variable in the interval $[0,1]$. Given such hash functions $h_1,h_2,\ldots,h_t$, first for each $(\key, \val) \in \calA$, we compute $\tilde{\tau}_{\key}$ that satisfies 
    \begin{align*}
         (1+\varepsilon)^t \tau^{\ell_p}_{\calA.\vals}(\val) \le \tilde{\tau}_{\key} \le (1+\varepsilon)^{t+1}\tau_{\calA.\vals}^{\ell_p}(\val).
    \end{align*}
Note that we are free to choose $\tilde{\tau}_{\key}$ to be any value in the above interval. To allow randomness in computing the values of $\tilde{\tau}_{\key}$, we introduce another parameter $\gamma$. We assume that with probability $1 - \gamma$, for all $\key \in \calA.\text{keys}$, $\tilde{\tau}_{\key}$ satisfies the above relation. When creating the sketch from scratch, as we can compute exact $\ell_p$ sensitivities, we can take $\gamma$ to be $0$. The only requirement is that the value of $\tilde{\tau}_{\key}$ must be computed independently of the hash functions $h_1,\ldots,h_t$. 

For each $\key \in \calA.\text{keys}$, let $p_{\key} = C\tilde{\tau}_{\key}(d\log d/\varepsilon + \log 1/\delta)\varepsilon^{-2}$ and now for each $h_i$, define
    \begin{align*}
        \levSample(\calA, h_i, t) \coloneqq \setbuilder{(\key, \val, \min(p_{\key}, 1))}{(\key, \val)\in \calA, h_i(\key) \le p_{\key}}.
    \end{align*}
The sketch $\sk_{t,0}(\calA)$ is now defined to be the collection $(\levSample(\calA, h_1, t), \ldots, \levSample(\calA, h_t, t))$. The procedure is described in Algorithm~\ref{alg:sketch-creation}. Note that for each $(\key, \val) \in \calA$, 
\begin{align*}
    \Pr_{h_i}[(\key, \val, *) \in \levSample(\calA, h_i, t)] &= \min(p_{\key}, 1)\\
    &\ge \min(C\tau_{\calA.\vals}^{\ell_p}(\val)(d\log (d/\varepsilon) + \log 1/\delta)\varepsilon^{-2}, 1).
\end{align*}
Hence, the construction of the set $\levSample(\calA, h_i, t)$ is essentially performing $\ell_p$ sensitivity sampling as in Theorem~\ref{thm:sensitivity-sampling} and for sampled rows it also stores the probability with which they were sampled. Thus, a matrix constructed appropriately using $\levSample(\calA, h_i, t)$ will be a subspace embedding for the matrix $\calA.\vals$ with probability $\ge 1 - \delta$.

Throughout the construction, we ensure that the sketch $\sk_{t, \gamma}(\calA) = (\levSample(\calA, h_1, t)$, \ldots, $\levSample(\calA, h_t, t))$ satisfies the following definition.
\begin{definition}
A sketch $(\levSample(\calA, h_1, t), \ldots, \levSample(\calA, h_t, t))$ is denoted $\sk_{t,\gamma}(\calA)$ if with probability $\ge 1 - \gamma$ (over randomness independent of $h_1,\ldots,h_t$), for each $(\key, \val) \in \calA$, there exist values $\tilde{\tau}_{\key}$ (computed independently of the hash functions $h_1,\ldots,h_t$) such that
\begin{align}
     (1+\varepsilon)^t\tau^{\ell_p}_{\calA.\vals}(\val) \le \tilde{\tau}_{\key} \le (1+\varepsilon)^{t+1} \tau^{\ell_p}_{\calA.\vals}(\val)
     \label{eqn:tau-between-powers}
\end{align}
and for $p_{\key} = C\tilde{\tau}_{\key}(d\log d/\varepsilon + \log 1/\delta)\varepsilon^{-2}$,
\begin{align}
        \levSample(\calA, h_i, t) = \setbuilder{(\key, \val, \min(p_{\key}, 1))}{(\key, \val)\in \calA, h_i(\key) \le p_{\key}}.
        \label{eqn:sensample-as-function-of-tau}
\end{align}
\end{definition}
Note that using the bounds on the sum of $\ell_p$ sensitivities, we obtain that with probability $\ge 1 - \gamma - \exp(-d)$, the size of the sketch $\sk_{t,\gamma}(\calA)$ is $O(t(1+\varepsilon)^{t+1}d^{\max(p/2,1)}(d\log d/\varepsilon + \log 1/\delta)\varepsilon^{-2})$. By Theorem~\ref{thm:sensitivity-sampling}, we obtain that given a sketch $\sk_{t,\gamma}(\calA)$, Algorithm~\ref{alg:compute-subspace-embedding} computes a subspace embedding for the matrix $\calA.\vals$. Thus, we have the following theorem.
\begin{theorem}
    Given $\sk_{t,\gamma}(\calA)$ constructed with parameters $\varepsilon, \delta$, Algorithm~\ref{alg:compute-subspace-embedding} returns a matrix that with probability $\ge 1 - \gamma - \delta$ satisfies, for all $x$,
    \begin{align*}
        \|Mx\|_p^p = (1 \pm \varepsilon)\|\calA.\vals \cdot x\|_p^p.
    \end{align*}
\end{theorem}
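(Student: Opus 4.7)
The plan is to observe that the algorithm is essentially implementing the sensitivity sampling procedure of Theorem~\ref{thm:sensitivity-sampling}, so the proof reduces to verifying that the sampling probabilities in $\levSample(\calA, h_i, t)$ satisfy the hypothesis of that theorem, and then invoking it directly. I would assume without loss of generality that Algorithm~\ref{alg:compute-subspace-embedding} picks one of the $t$ lists $\levSample(\calA, h_i, t)$ and forms the matrix $M$ whose rows are the rescaled values $(1/\min(p_{\key}, 1))^{1/p} \cdot \val$ for each triple $(\key, \val, \min(p_{\key}, 1))$ in the chosen list.

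First I would condition on the event $\calE$, of probability $\ge 1 - \gamma$, that for every $(\key, \val) \in \calA$ there exists $\tilde{\tau}_{\key}$ satisfying \eqref{eqn:tau-between-powers} and \eqref{eqn:sensample-as-function-of-tau}. Crucially, $\calE$ is determined by randomness independent of $h_1, \ldots, h_t$, so conditioning on $\calE$ does not affect the distribution of the hash functions. On this event, the lower bound in \eqref{eqn:tau-between-powers} gives $\tilde{\tau}_{\key} \ge (1+\varepsilon)^t \tau^{\ell_p}_{\calA.\vals}(\val) \ge \tau^{\ell_p}_{\calA.\vals}(\val)$, so defining $v_{\key} \coloneqq \min(\tilde{\tau}_{\key}, 1) \in [0,1]$ we still have $v_{\key} \ge \min(1, \tau^{\ell_p}_{\calA.\vals}(\val)) = \tau^{\ell_p}_{\calA.\vals}(\val)$, i.e., the hypothesis of Theorem~\ref{thm:sensitivity-sampling} is met with $\beta = 1$.

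Next I would match the sampling distribution of $\levSample(\calA, h_i, t)$ to the random diagonal matrix $\bS$ of Theorem~\ref{thm:sensitivity-sampling}. Since $h_i$ is a fully random hash mapping keys to uniform $[0,1]$ random variables, the events $\{h_i(\key) \le p_{\key}\}$ are mutually independent across distinct keys and each occurs with probability $\min(p_{\key}, 1)$. Thus the matrix $M$ described above has the same distribution as $\bS \cdot \calA.\vals$ where $\bS$ is the diagonal sampling matrix with inclusion probabilities $\min(p_{\key}, 1) = \min(1, C \tilde{\tau}_{\key}(d\log(d/\varepsilon) + \log(1/\delta))/\varepsilon^2)$, which by the previous paragraph is at least $\min(1, C\, \tau^{\ell_p}_{\calA.\vals}(\val)(d\log(d/\varepsilon) + \log(1/\delta))/\varepsilon^2)$. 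Theorem~\ref{thm:sensitivity-sampling} then yields that with conditional probability at least $1 - \delta$, $\|Mx\|_p^p = (1\pm\varepsilon)\|\calA.\vals\cdot x\|_p^p$ for all $x$.

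Finally, a union bound combines the failure probability $\gamma$ of $\calE$ with the conditional failure probability $\delta$ of the sampling theorem to give the claimed $1 - \gamma - \delta$ success probability. I do not anticipate a real obstacle here since all the heavy lifting is done by Theorem~\ref{thm:sensitivity-sampling}; the only subtlety is ensuring that the randomness used to produce $\tilde{\tau}_{\key}$ (and hence the probabilities $p_{\key}$) is independent of the hash functions $h_1, \ldots, h_t$, which is precisely the independence property built into the definition of $\sk_{t,\gamma}(\calA)$.
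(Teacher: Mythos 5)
Your proof is correct and takes the same approach as the paper, which gives this theorem essentially without proof by directly invoking Theorem~\ref{thm:sensitivity-sampling}. You have fleshed out the implicit details (verifying the independence and probability lower-bound hypotheses, handling the $\min(\cdot,1)$ truncation, and using that $\tilde{\tau}_{\key}$ is computed independently of the hash functions) but the argument is the paper's.
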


We now show how to compute a sketch for $\calA_1 \cup \cdots \cup \calA_s$ given sketches $\sk_{t_1,\gamma_1}(\calA_1), \ldots, \sk_{t_s, \gamma_s}(\calA_s)$ for $s$ conforming datasets $\calA_1,\ldots,\calA_s$.

\subsection{Merging Sketches}
\begin{theorem}
Let $\calA_1,\ldots,\calA_s$ be conforming datasets. Given sketches $\sk_{t_1,\gamma_1}(\calA_1), \ldots,\sk_{t_s,\gamma_s}(\calA_s)$ constructed using the same hash functions $h_1,\ldots$ and parameters $\varepsilon, \delta > 0$, Algorithm~\ref{alg:merging-sketches} computes $\sk_{\min_i(t_i)-1, \delta+\gamma_1+\cdots+\gamma_s}(\calA_1 \cup \cdots \cup \calA_s)$. 
\end{theorem}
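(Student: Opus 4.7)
The plan is to spend one of the shared hash functions---say $h_{t+1}$, which is available since $t+1 = \min_j t_j \le t_j$ for every $j$---on building a single $(1\pm\varepsilon)$ $\ell_p$ subspace embedding $M$ of $\calA.\vals$, and then to refine per-key sensitivity estimates used by the remaining hash functions $h_1,\ldots,h_t$. Concretely, define $U_i \coloneqq \bigcup_{j=1}^{s}\levSample(\calA_j,h_i,t_j)$ for every $i$; feed $U_{t+1}$ (with each key's stored probability taken as the largest among its occurrences) to Algorithm~\ref{alg:compute-subspace-embedding} to produce $M$; for every $(\key,\val)$ that appears in some $U_i$ with $i\le t$, compute $\hat{\tau}_{\key} \coloneqq \max_x |\la \val, x\ra|^p / \lp{Mx}^p$, set $\tilde{\tau}_{\key} \coloneqq (1+\varepsilon)^{t+1/2}\hat{\tau}_{\key}$, and output $\levSample(\calA,h_i,t) \coloneqq \{(\key,\val,\min(p_{\key},1)) : (\key,\val)\in U_i,\ h_i(\key)\le p_{\key}\}$, with $p_{\key} = C\tilde{\tau}_{\key}(d\log(d/\varepsilon)+\log(1/\delta))/\varepsilon^{2}$.

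I will first verify $M$. Conditioning on each input sketch being valid (probability at least $1-\sum_j\gamma_j$), every stored estimate satisfies $\tilde{\tau}^{(j)}_{\key} \ge (1+\varepsilon)^{t_j}\tau^{\ell_p}_{\calA_j.\vals}(\val) \ge \tau^{\ell_p}_{\calA.\vals}(\val)$, where the second inequality uses the monotonicity of $\ell_p$ sensitivities under row-inclusion of $\calA_j.\vals$ into $\calA.\vals$. Thus the probability that $\key$ lands in $U_{t+1}$ is at least $\min(1,C\tau^{\ell_p}_{\calA.\vals}(\val)(d\log(d/\varepsilon)+\log(1/\delta))/\varepsilon^{2})$, and $h_{t+1}$ is independent of all the $\tilde{\tau}^{(j)}_{\key}$; Theorem~\ref{thm:sensitivity-sampling} then makes $M$ a $(1\pm\varepsilon)$ $\ell_p$ subspace embedding of $\calA.\vals$ with probability at least $1-\delta$.

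Given such an $M$, a standard ratio argument yields $\hat{\tau}_{\key} = (1\pm O(\varepsilon))\tau^{\ell_p}_{\calA.\vals}(\val)$, so after rescaling (and absorbing constants into $\varepsilon$) one has $(1+\varepsilon)^{t}\tau^{\ell_p}_{\calA.\vals}(\val) \le \tilde{\tau}_{\key} \le (1+\varepsilon)^{t+1}\tau^{\ell_p}_{\calA.\vals}(\val)$ as required by~\eqref{eqn:tau-between-powers}. Since $\tilde{\tau}_{\key}$ depends only on $M$, and $M$ only on $h_{t+1}$ and the inputs, it is independent of $h_1,\ldots,h_t$, satisfying the independence condition from the definition of $\sk_{t,\cdot}(\calA)$. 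For~\eqref{eqn:sensample-as-function-of-tau} the only non-trivial direction is showing that no $(\key,\val)\in\calA$ with $h_i(\key)\le p_{\key}$ is missing from $U_i$: by monotonicity once more, $\tilde{\tau}^{(j)}_{\key} \ge (1+\varepsilon)^{t_j}\tau^{\ell_p}_{\calA.\vals}(\val) \ge (1+\varepsilon)^{t+1}\tau^{\ell_p}_{\calA.\vals}(\val) \ge \tilde{\tau}_{\key}$, so $p^{(j)}_{\key}\ge p_{\key}$ and the condition $h_i(\key)\le p_{\key}$ already placed $\key$ in $\levSample(\calA_j,h_i,t_j)\subseteq U_i$.

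The main obstacle is this last coupling: filtering the generous union $U_i$ by the sharper threshold $p_{\key}$ must reproduce exactly the idealized $\ell_p$ sensitivity sample of $\calA.\vals$ at level $t$. The monotonicity of $\ell_p$ sensitivities under row-inclusion is the linchpin---it simultaneously delivers the missing-key bound above and the subspace-embedding sufficiency of $U_{t+1}$. A final union bound over the $\delta$ failure of Theorem~\ref{thm:sensitivity-sampling} and the $\gamma_j$ failures of the input sketches gives total failure probability $\delta + \gamma_1 + \cdots + \gamma_s$, making the output a valid $\sk_{\min_i(t_i)-1,\,\delta+\gamma_1+\cdots+\gamma_s}(\calA_1\cup\cdots\cup\calA_s)$.
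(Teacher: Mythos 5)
Your proposal takes essentially the same route as the paper's own proof: spend one shared hash function (your $h_{t+1}$, the paper's $h_{\min_i t_i}$) to build a $\ell_p$ subspace embedding $M$ of $\calA.\vals$ via the union of the corresponding $\levSample$ sets, use monotonicity of $\ell_p$ sensitivities under row-inclusion both to justify that this union oversamples relative to $\tau^{\ell_p}_{\calA.\vals}$ and to argue that the refined thresholds $p_{\key}$ never exceed the old ones $p^{(j)}_{\key}$, and then filter the remaining unions $U_i$ by $p_{\key}$ to produce the new $\levSample$ sets, which are independent of $h_1,\dots,h_t$ by construction. One small imprecision: with a $(1\pm\varepsilon)$-accurate $M$, no single multiplicative rescaling of $\hat{\tau}_{\key}$ can place $\tilde{\tau}_{\key}$ inside the required window $[(1+\varepsilon)^t\tau,(1+\varepsilon)^{t+1}\tau]$, since that window has width exactly one factor of $(1+\varepsilon)$ while $\hat{\tau}_{\key}$ already varies by a factor $\tfrac{1+\varepsilon}{1-\varepsilon}$; the paper sidesteps this by constructing $M$ to $(1\pm\varepsilon/4)$ accuracy (so $\hat{\tau}_{\key}$ varies by less than a $(1+\varepsilon)$ factor) and then scaling by $(1+\varepsilon)^{t-1}(1+\varepsilon/4)$, which is the "absorbing constants" step you allude to but should make explicit.
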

\subsubsection{Proof Outline}
In the original sketch creation procedure, we compute approximations to the $\ell_p$ sensitivities which we use to compute a value $p_{\key}$ and keep all the $(\key, \val)$ pairs satisfying $h(\key) \le p_{\key}$. We argued that the original sketch creation is essentially an implementation of the $\ell_{p}$ sensitivity sampling algorithm in Theorem~\ref{thm:sensitivity-sampling}. Now, given sketches of $\calA_1,\ldots,\calA_s$, we want to \emph{simulate} the $\ell_p$ sensitivity sampling of the rows in the matrix $(\calA_1 \cup \cdots \cup \calA_s).\vals$ to create the sketch $\sk(\calA_1 \cup \cdots \cup \calA_k)$. An important property of the $\ell_p$ sensitivities is the \emph{monotonicity} -- the $\ell_p$ sensitivity of a row only goes down with adding new rows to the matrix. Suppose we have a way to compute $\ell_p$ sensitivities of the rows of the matrix $(\calA_1 \cup \cdots \cup \calA_s).\vals$. Suppose a row $a \in \calA_{1}.\vals$. Then the probability that it has to be sampled when performing $\ell_p$ sensitivity sampling on the matrix $(\calA_1 \cup \cdots \cup \calA_s).\vals$ is smaller than the probability that the row has to be sampled when performing $\ell_p$ sensitivity sampling on the matrix $\calA_1.\vals$. Thus, the rows that we ignored when constructing $\sk(\calA_1)$ ``don't really matter'' as the $\ell_p$ sensitivity sampling of the rows of $(\calA_1 \cup \cdots \cup \calA_s).\vals$ when performing using the same hash function $h$ would also not have sampled that row since $h(\key)$ was already larger than the probability that $\calA_1$ assigned to the row $a$ which is in turn larger than the probability that $\calA_1 \cup \cdots \cup \calA_t$ assigned to the row $a$.

The above argument assumes that we have a way to approximate the $\ell_p$ sensitivity of a row with respect to the matrix $\calA_1 \cup \cdots \cup \calA_s$ and sensitivity sampling requires that these approximations be independent of the hash function $h$ we are using to simulate sensitivity sampling. We now recall that each $\sk_{t,\gamma}(\calA)$ has $t$ independent copies of the $\levSample$ data structure. We show that one of the copies can be used to compute approximate sensitivities and then perform the $\ell_p$ sensitivity sampling on the other copies. Thus, each time we merge a $\sk_{t,\gamma}(\cdot)$ data structure, we lose a copy of the $\levSample$ data structure in the sketch which is why the sketch $\sk_{t,\gamma}(\cdot)$ can be merged only $t$ times in the future.
\subsubsection{Formal Proof}
\begin{proof}
Let $\calA \coloneqq \calA_1 \cup \cdots \cup \calA_s$ and $t = \min(t_1,\ldots,t_s)$. Recall that each $\sk_{t_j, \gamma_j}(\calA_j)$ is a collection of the data structures $\levSample(\calA_j, h_1, t_j), \ldots, \levSample(\calA_j, h_{t_j}, t_j)$ and that by definition of $\sk_{t,\gamma}(\calA)$, for each $j=1,\ldots,s$, with probability $1 - \gamma_j$ (over independent randomness $h_1,\ldots,h_{t_j}$) for each $(\key, \val) \in \calA_j$, there exists $\tilde{\tau}^{(j)}_{\key}$ for which
\begin{align}    
(1+\varepsilon)^{t_j}\tau^{\ell_p}_{\calA_j.\vals}(\val) \le \tilde{\tau}^{(j)}_{\key} \le (1+\varepsilon)^{t_j+1}\tau^{\ell_p}_{\calA_j.\vals}(\val)
\label{eqn:satisfied-by-tau}
\end{align}
and for $p_{\key} = C\tilde{\tau}_{\key}^{(j)}(d\log (d/\varepsilon) + \log 1/\delta)\varepsilon^{-2}$ and $i=1,\ldots,t_j$,
\begin{align*}
        \levSample(\calA_j, h_i, t) = \setbuilder{(\key, \val, \min(p_{\key}, 1))}{(\key, \val)\in \calA, h_i(\key) \le p_{\key}}.
\end{align*}
By a union bound, with probability $\ge 1 - (\gamma_1 + \cdots + \gamma_s)$, we have $\tilde{\tau}^{(j)}_{\key}$ as in \eqref{eqn:satisfied-by-tau} for all $j = 1,\ldots,s$ and $\key \in \calA_j.\keys$. Condition on this event. 

We now show that the matrix $M$ constructed by the algorithm is a subspace embedding for $(\calA_1 \cup \cdots \cup \calA_s).\vals$. Note that, in constructing the matrix $M$, the algorithm uses $\levSample$ data structures all constructed using the same hash function $h_t$.

If $(\key, \val, *)$ is in \emph{any} of the sets $\levSample(\calA_1, h_t, t_1), \ldots, \levSample(\calA_s, h_t, t_s)$, let $p_{\key}^{\text{merge}}$ be the maximum ``probability value'' among all the tuples with $(\key, \val, *)$. Let $S$ be the set formed by all the tuples $(\key, \val, p_{\key}^{\text{merge}})$. For each $(\key, \val) \in \calA$, define
\begin{align*}
    \tilde{\tau}_{\key}^{\text{merge}} = \max_{(\key, \val) \in \calA_j}\tilde{\tau}^{(j)}_{\key}.
\end{align*}
Now, for each $(\key, \val) \in \calA_1 \cup \cdots \cup \calA_s$,
\begin{align*}
    \Pr[(\key, \val, *) \in S] &= \Pr[h_t(\key) \le \max_{j : (\key, \val) \in \calA_j}C\tilde{\tau}^{\text{merge}}_{\key}(d\log d/\varepsilon + \log 1/\delta)\varepsilon^{-2}]\\
    &=p_{\key}^{\text{merge}}.
\end{align*}
By monotonicity of $\ell_p$ sensitivities, if $(\key, \val) \in \calA_j$, then
\begin{align*}
    \tau^{\ell_p}_{(\calA_1 \cup \cdots \cup \calA_s).\vals}(\val) \le \tau^{\ell_p}_{\calA_j.\vals}(\val) \le \tilde{\tau}^{(j)}_{\key} \le \tilde{\tau}_{\key}^{\text{merge}}.
\end{align*}
Hence, with probability $\ge 1 - \delta$ the set $S$ is a leverage score sample of the rows of the matrix $\calA.\vals$. By a union bound, with probability $\ge 1 - (\delta + \gamma_1 + \cdots + \gamma_s)$, the matrix $M$ with rows given by $1/(p_{\key}^{\text{merge}})^{1/p} \cdot \val$ for $(\key, \val, p_{\key}^{\text{merge}}) \in S$ is an $\ell_p$ subspace embedding for the matrix $\calA.\vals$ and satisfies for all $x$,
\begin{align*}
    \lp{Mx}^p = (1 \pm \varepsilon/4)\lp{\calA.\vals \cdot x}^p.
\end{align*}
For each $(\key, \val) \in \calA$, we can compute
\begin{align*}
    \tilde{\tau}_{\key}^{\text{approx}} = (1+\varepsilon)^{t-1}(1+\varepsilon/4)\max_{x}\frac{|\la \val, x\ra|^p}{\lp{Mx}^p}.
\end{align*}
Conditioned on $M$ being a subspace embedding for $\calA$, we have that 
$(1+\varepsilon)^{t-1}\tau_{\calA.\vals}^{\ell_p}(\val) \le \tilde{\tau}_{\key}^{\text{approx}} \le (1+\varepsilon)^t \tau_{\calA.\vals}^{\ell_p}(\val)$. For each $(\key,\val) \in \calA_j$, we have
\begin{align}
    \tilde{\tau}_{\key}^{\text{approx}} \le (1+\varepsilon)^t \tau^{\ell_p}_{\calA.\vals}(\val) \le (1+\varepsilon)^t \tau^{\ell_p}_{\calA_j.\vals}(\val) \le \tilde{\tau}^{(j)}_{\key} \le \tilde{\tau}^{\text{merge}}_{\key}.
    \label{eqn:using-different-powers-of-t}
\end{align}
Thus, with probability $\ge 1 - (\delta + \gamma_1 + \cdots + \gamma_s)$, for all $(\key, \val) \in \calA$, 
\begin{align*}
    (1+\varepsilon)^{t-1}\tau^{\ell_p}_{\calA.\vals}(\val) \le \tilde{\tau}_{\key}^{\text{approx}} \le (1+\varepsilon)^t \tau^{\ell_p}_{\calA.\vals}(\val).
\end{align*}
Now we define $\tilde{p}_{\key} = C\tilde{\tau}^{\text{approx}}_{\key}(d\log d/\varepsilon + \log 1/\delta)\varepsilon^{-2}$ and
\begin{align*}
    \levSample(\calA, h_i, t-1) = \setbuilder{(\key, \val, \min(1, \tilde{p}_{\key}))}{(\key, \val) \in \calA, h_i(\key) \le \tilde{p}_{\key}}
\end{align*}
and have
\begin{align*}
    \Pr_{h_i}[(\key, \val, *) \in \levSample(\calA, h_i, t-1)] = \min(1, \tilde{p}_{\key}).
\end{align*}
Note that while the above definition says to construct the set by looking at each $(\key, \val) \in \calA$, as $\tilde{\tau}_{\key}^{\text{approx}} \le \max_{j:(\key, \val) \in \calA_j}\tilde{\tau}^{(j)}_{\key}$  by definition, we only have to look at the elements of the set $\levSample(\calA_1, h_i, t-1), \ldots, \levSample(\calA_s, h_i, t-1)$ as all other missing elements from $\calA$ would not have been included in the set anyway. Here the property that the $\tilde{\tau}$ values satisfy \eqref{eqn:satisfied-by-tau} becomes crucial. 

Thus, we have that the algorithm constructs $\sk_{t-1,\delta+\gamma_1+\cdots+\gamma_s}(\calA)$. 
\end{proof}
\subsection{Neighborhood Propagation}
As described in the previous section, the neighborhood propagation algorithm using the composable sketches lets each node compute a subspace embedding for the matrix formed by the data of the matrices in a neighborhood around the node. We will now analyze the setting of the $\delta$ parameter in the $\ell_p$ composable sketch construction. 

We have that merging the sketches $\sk_{t_1,\gamma_1}(\calA_1), \ldots, \sk_{t_s, \gamma_s}(\calA_s)$, we obtain $\sk_{\min_i t_i - 1, \delta + \gamma_1 + \cdots + \gamma_s}(\calA_1 \cup \cdots \cup \calA_s)$. Let $s$ be the total number of nodes in the graph. The sketches that each neighborhood obtains are merged at most $\Delta$ times. Hence setting $\delta = \delta'/(2s)^{\Delta}$, each node in the graph computes a sketch for the data in its neighborhood with the probability parameter $\delta'$. Further setting $\delta' = 1/10s$, we obtain by a union bound that with probability $\ge 9/10$, all the nodes in the graph compute an $\ell_p$ subspace embeddings for the data in their $\Delta$ neighborhoods. Thus we have the following theorem.
\begin{theorem}
    Suppose $G = (V,E)$ is an arbitrary graph with $|V| = s$. Each node in the graph knows and can communicate only with its neighbors. Given a distance parameter $\Delta$ and accuracy parameter $\varepsilon < 1/\Delta$, there is a neighborhood propagation algorithm that runs for $\Delta$ rounds such that at the end of the algorithm, with probability $\ge 9/10$, each vertex $u$ in the graph computes an $\varepsilon$ $\ell_p$ subspace embedding for the matrix formed by the data in the $\Delta$ neighborhood of $u$.

    In each of the $\Delta$ rounds, each node communicates at most $O(\Delta \cdot d^{\max(p/2,1)}(d\log d + \Delta \log s)\varepsilon^{-2})$ rows along with additional information for each row to all its neighbors. For $p=2$, each node communicates $O(\Delta \cdot d(\log d + \Delta \log s)\varepsilon^{-2})$ rows to each of its neighborts in each round.
    \label{thm:main-theorem-congest-embeddings}
\end{theorem}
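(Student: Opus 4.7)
My plan is to directly invoke the composable sketch framework of Section~\ref{sec:sensitivity-sampling} and plug it into the neighborhood propagation schema discussed in Section~\ref{sec:sensitivity-sampling}. Concretely, each node $v$ will, at round $0$, create $\sk_{\Delta, 0}(\calS_v)$ using Algorithm~\ref{alg:sketch-creation} with a shared common pool of hash functions $h_1, h_2, \ldots, h_{\Delta}$ (generated from public randomness so that every node uses the same hash functions). In round $i$, each node $v$ receives sketches $\sk_{\Delta-i+1, \gamma}(\calS_u^{i-1})$ from all its neighbors, calls the merging procedure (Algorithm~\ref{alg:merging-sketches}) to produce $\sk_{\Delta-i, \gamma'}(\calS_v^i)$, and transmits it to all of its neighbors. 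After $\Delta$ rounds node $v$ holds a sketch of $\calS_v^{\Delta}$, which is the union of all datasets within distance $\Delta$, and then applies Algorithm~\ref{alg:compute-subspace-embedding} to extract an $\varepsilon$ $\ell_p$-subspace embedding.

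The parameter choices I would make are the following. Set the ``mergeability budget'' of the initial sketches to $t = \Delta$, so that it never drops below $0$ over the $\Delta$ rounds of merging by the merge theorem. For the per-merge failure probability, since the merge tree that produces $\sk(\calS_v^{\Delta})$ has at most $s^{\Delta}$ leaves and hence at most $s^{\Delta}$ internal merge operations, I would set the per-merge $\delta$-parameter in Algorithm~\ref{alg:merging-sketches} to $\delta = 1/(100\, s^{\Delta+1})$. Then a single sketch at a single node fails with probability at most $s^{\Delta} \cdot \delta \le 1/(100 s)$, and a union bound over all $s$ nodes gives overall success probability at least $9/10$. This incurs an additive $\log(1/\delta) = O(\Delta \log s)$ term in the subsample sizes.

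The communication bound follows by tracking the size of each $\levSample$ list. By the discussion after the $\sk_{t,\gamma}$ definition, with high probability $|\levSample(\calA, h_i, t)| = O\bigl((1+\varepsilon)^{t+1} d^{\max(p/2,1)}(d\log(d/\varepsilon) + \log(1/\delta))\varepsilon^{-2}\bigr)$, and a sketch contains $t \le \Delta$ such lists. The key observation is that the slack factor $(1+\varepsilon)^t$ in the sensitivity overestimates (equation~\eqref{eqn:tau-between-powers}) would normally blow up geometrically, but the hypothesis $\varepsilon < 1/\Delta$ gives $(1+\varepsilon)^{\Delta+1} \le e \cdot (1+\varepsilon) = O(1)$. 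So each sketch has size $O\bigl(\Delta \cdot d^{\max(p/2,1)}(d\log d + \Delta\log s)/\varepsilon^2\bigr)$ rows, which is exactly the per-round per-neighbor communication bound stated in the theorem. For $p=2$, Theorem~\ref{thm:sensitivity-sampling} requires only an additive $\log(d/\varepsilon)$ factor (rather than multiplicative $d \log(d/\varepsilon)$) in the sample size, yielding the improved $\tilde{O}(\Delta d(\log d + \Delta\log s)/\varepsilon^2)$ bound.

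The main subtlety, and the part requiring the most care, is controlling how the accumulated multiplicative distortion in the sensitivity estimates interacts with the merge invariant~\eqref{eqn:satisfied-by-tau}: the guarantee of Algorithm~\ref{alg:merging-sketches} is that one round of merging increments the power of $(1+\varepsilon)$ in the sensitivity over/underestimates by one. After $\Delta$ such merges, the estimate $\tilde{\tau}$ used to finally call $\textsc{Solve}$ satisfies $\tau_{\calS_v^\Delta.\vals}^{\ell_p}(\val) \le \tilde{\tau}_{\key} \le (1+\varepsilon)^{\Delta+1}\tau_{\calS_v^\Delta.\vals}^{\ell_p}(\val)$, which by $\varepsilon < 1/\Delta$ is an $O(1)$ overestimate; this is exactly what is needed to invoke Theorem~\ref{thm:sensitivity-sampling} with constant $\beta$. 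Together with the fact that a fresh, independent copy of $\levSample$ is ``spent'' at each merge to compute the new sensitivity approximations from an embedding of the merged data (the role played by $M$ in Algorithm~\ref{alg:merging-sketches}), the invariant is preserved round by round, and after $\Delta$ rounds $\sk_{0,\gamma}(\calS_v^\Delta)$ suffices to produce the claimed embedding.
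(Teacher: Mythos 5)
Your proof follows the paper's approach and usefully fills in details the paper's terse proof paragraph leaves implicit, most notably the role of the hypothesis $\varepsilon < 1/\Delta$ in capping the $(1+\varepsilon)^{\Delta}$ growth of the sensitivity overestimation factors in the $\sk_{t,\gamma}$ invariant, which is exactly where the per-round communication bound would otherwise degrade. One small correction to flag: initializing the mergeability budget at $t = \Delta$ leaves each node with $\sk_{0,\gamma}(\calS_v^\Delta)$ after $\Delta$ merges, which by definition contains zero $\levSample$ lists and therefore cannot be passed to Algorithm~\ref{alg:compute-subspace-embedding}. Either initialize with $t = \Delta + 1$ so that one $\levSample$ list survives for the final $\textsc{Solve}$ call, or note that the matrix $M$ built internally by the $\Delta$-th invocation of Algorithm~\ref{alg:merging-sketches} is itself the desired subspace embedding and can be output directly. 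Neither change affects the asymptotic communication. Your failure-probability accounting with $\delta = 1/(100\, s^{\Delta+1})$ is interchangeable with the paper's choice $\delta = \delta'/(2s)^\Delta$ with $\delta' = 1/(10s)$; both yield $\log(1/\delta) = O(\Delta\log s)$ and an overall success probability of at least $9/10$ after union bounding over the $s$ nodes.
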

Since in many problems of interest, the parameter $\Delta$ is a small constant, the algorithm is communication efficient.

\begin{algorithm}
\caption{Creating the sketch $\sk_{t, 0}$ given $\calA$}\label{alg:sketch-creation}
\KwIn{A dataset $\calA$ of pairs $(\key, \val)$, an integer parameter $t \ge 1$, $\varepsilon$, $\delta$}
\KwOut{A sketch $\sk_{t,0}(\calA)$}
Let $h_1,\ldots,h_t$ be independent fully random hash functions with $h_i(\key)$ being a uniform random variable from $[0,1]$\;
For each $(\key, \val) \in \calA$, $\tau^{\ell_p}_{\calA.\vals}(\val) \gets \max_{x}|\la \val, x\ra|^p/\|\calA.\vals \cdot x\|_p^p$\;
For each $(\key, \val) \in \calA$, $p_{\key} \gets C\tau^{\ell_p}(d\log d + \log 1/\delta)\varepsilon^{-2}$\;
\For{$i=1,\ldots,t$}{
    $\levSample(\calA, h_i, t) \gets \emptyset$\;
    \For{$(\key, \val) \in \calA$}{
        \If{$h_i(\key) \le p_\key$}{
            $\levSample(\calA, h_i, t) \gets \levSample(\calA, h_i, t) \cup \set{(\key, \val, \min(1, p_{\key}))}$\;
        }
    }
}
$\sk_{t, 0}(\calA) \gets (\levSample(\calA, h_1, t), \ldots, \levSample(\calA, h_t, t))$\;
\end{algorithm}

\begin{algorithm}
    \caption{Computing a subspace embedding from a sketch}\label{alg:compute-subspace-embedding}
    \KwIn{Sketch $\sk_{t,\gamma}(\calA)$ constructed with parameters $\varepsilon, \delta$}
    \KwOut{A matrix $M$ that is an $\varepsilon$ $\ell_p$ subspace embedding}
    Note $\sk_{t,\gamma}(\calA) = (\levSample(\calA), h_1, t), \ldots, \levSample(\calA), h_t, t))$\;
    $M \gets $ matrix with rows given by $(1/p_{\key})^{1/p} \cdot \val$ for $(\key, \val, p_{\key}) \in \levSample(\calA, h_1, t)$\;
    \Return{$M$}
\end{algorithm}

\begin{algorithm}
    \caption{Merging Sketches}\label{alg:merging-sketches}
    \KwIn{Sketches $\sk_{t_1, \gamma_1}(\calA_1), \ldots, \sk_{t_s, \gamma_s}(\calA_s)$ constructed with the same parameters $\varepsilon, \delta$ and the same hash functions $h_1,\ldots,$}
    \KwOut{Sketch $\sk_{\min_it_i-1,\delta+\sum_i \gamma_i}(\calA_1 \cup \cdots \calA_k)$}
    Let $h_1, h_2, \ldots,$ be the hash functions used in the construction of the sketches\;
    $t \gets \min_i t_i$\;
    $\calA \gets \calA_1 \cup \cdots \cup \calA_s$\tcp*{Only notational}
    $\text{merge} \gets \setbuilder{(\key, \val)}{\exists j \in [s], (\key, \val, *) \in \levSample(\calA_j, h_t, t_j)}$\;
    For each $(\key, \val) \in \text{merge}$, $p_{\key}^{\text{merge}} \gets $ max $p$ with $(\key, \val, p) \in \cup_j \levSample(\calA_j, h_t, t_j)$\;
    $M \gets $ matrix with rows given by $(1/p_{\key}^{\text{merge}})^{1/p} \cdot \val$ for $(\val, \key) \in \text{merge}$\;
    \For{$i=1,\ldots,t-1$}{
        $\levSample(\calA, h_i, t-1) \gets \emptyset$\;
        $\text{merge}_i \gets \setbuilder{(\key, \val)}{\exists j \in [s], (\key, \val, *) \in \levSample(\calA_j, h_i, t_j)}$\;
        For each $(\key, \val) \in \text{merge}_i$, $p^{(i)}_{\key} \gets $ max $p$ with $(\key, \val, p) \in \cup_j \levSample(\calA_j, h_i, t_j)$\;
        \For{$(\key, \val) \in \textnormal{merge}_i$}{
            $\tilde{\tau}_{\key}^{\text{approx}} \gets (1+\varepsilon)^{t-1}(1+\varepsilon/4)\max_x \frac{|\la \val, x \ra|^p}{\|Mx\|_p^p}$\;
            $p_{\key} \gets C\tilde{\tau}^{\text{approx}}_{\key}(d\log d/\varepsilon + \log 1/\delta)\varepsilon^{-2}$\;
            \If{$\min(1,p_{\key}) > p_{\key}^{(i)}$}{
                Output FAIL\;
            }
            \If{$h_i(\key) \le p_{\key}$}{
            $\levSample(\calA, h_i, t-1) \gets \levSample(\calA, h_i, t-1) \cup \set{(\key, \val, \min(1, p_{\key}))}$\;
            }
        }
    }
    $\sk_{t-1, \delta+\gamma_1+\cdots+\gamma_s}(\calA) \gets (\levSample(\calA, h_1, t-1), \ldots, \levSample(\calA, h_{t-1}, t-1))$\;
\end{algorithm}
\subsection{Applications to \texorpdfstring{$\ell_p$}{lp} Regression}\label{subsec:congest-regression}
Let $\calA$ be a dataset. In a $(\key, \val)$ pair with $\val$ being a $d$ dimensional vector, we treat the first $d-1$ coordinates as the features and the last coordinate as the label. Then the $\ell_p$ linear regression problem on a dataset $\calA$ is
\begin{align*}
    \min_{x \in \R^{d-1}}\lp{\calA.\vals \begin{bmatrix}x \\ -1\end{bmatrix}}^p.
\end{align*}
Thus, if the matrix $M$ is an $\varepsilon$ subspace embedding for the matrix $\calA.\vals$, then
\begin{align*}
    \tilde{x} = \argmin_x \lp{M \begin{bmatrix}x \\ -1\end{bmatrix}}^p,
\end{align*}
then 
\begin{align*}
    \lp{\calA.\vals \begin{bmatrix}\tilde{x} \\ -1\end{bmatrix}}^p \le (1+O(\varepsilon))\min_x\lp{\calA.\vals \begin{bmatrix}x \\ -1\end{bmatrix}}.
\end{align*}
Thus, composable sketches for constructing $\ell_p$ subspace embeddings can be used to solve $\ell_p$ regression problems.
\subsection{Low Rank Approximation}\label{subsec:congest-lra}
We consider the Frobenius norm low rank approximation. Given a matrix $A$, a rank parameter $k$ we want to compute a rank $k$ matrix $B$ such that $\frnorm{A - B}^2$ is minimized. The optimal solution to this problem can be obtained by truncating the singular value decomposition of the matrix $A$ to its top $k$ singular values. As computing the exact singular value decomposition of a matrix $A$ is slow, the approximate version of low rank approximation has been heavily studied in the literature \cite{clarkson2017low}. In the approximate version, given a parameter $\varepsilon$, we want to compute a rank-$k$ matrix $B$ such that
\begin{align*}
    \frnorm{A - B}^2 \le (1+\varepsilon)\min_{\text{rank-}k\, B}\frnorm{A-B}^2.
\end{align*}
As the number of rows in $A$ is usually quite large, the version of the problem which asks to only output a $k$ dimensional subspace $V$ of $\R^d$ is also studied:
\begin{align*}
    \frnorm{A(I-\Proj_V)}^2 \le (1+\varepsilon)\min_{\text{rank-}k\, B}\frnorm{A-B}^2.
\end{align*}
Here $\Proj_V$ denotes the orthogonal projection matrix onto the subspace $V$.

We show that using composable sketches for $\ell_2$ sensitivity sampling, we can solve the low rank approximation problem. While the composable sketch for $\ell_2$ sensitivity sampling has $\tilde{O}(d)$ rows, we will show that for solving the low rank approximation problem, the composable sketch need only have $\tilde{O}(k)$ rows. We use the following result. 
\begin{theorem}[{\cite[Theorem~4.2]{clarkson2009numerical}}]
    If $A$ is an $n \times d$ matrix and $\bR$ is a $d \times m$ random sign matrix for $m = O(k\log(1/\delta)/\varepsilon)$, then with probability $\ge 1 - \delta$,
    \begin{align*}
        \min_{\text{rank-}k\, X}\frnorm{A\bR X - A}^2 \le (1+\varepsilon)\min_{\text{rank-}k\, B}\frnorm{A - B}^2.
    \end{align*}
\end{theorem}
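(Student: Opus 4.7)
\medskip

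The plan is to use the standard Sarlós-style proof: exhibit an explicit (sub-optimal) rank-$k$ matrix $X^*$ for which $\|A\bR X^* - A\|_F^2 \le (1+\varepsilon)\|A - A_k\|_F^2$, since then the minimum over all rank-$k$ $X$ can only be smaller. Write the SVD $A = A_k + A_{-k}$ with $A_k = U_k\Sigma_k V_k^T$, and choose $X^* = (V_k^T\bR)^{\dagger} V_k^T$. This has rank at most $k$ because it factors through $V_k^T \in \R^{k\times d}$.

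First I would verify algebraically that, on the event $\calE_1$ that $V_k^T\bR$ has full row rank, $(V_k^T\bR)(V_k^T\bR)^{\dagger} = I_k$, so that
\[
A\bR X^* \;=\; A_k \;+\; A_{-k}\bR(V_k^T\bR)^{\dagger}V_k^T .
\]
Thus $A - A\bR X^* = A_{-k} - A_{-k}\bR(V_k^T\bR)^{\dagger}V_k^T$. The row space of the first term is contained in $\mathrm{span}(V_{-k})$ while the row space of the second term is contained in $\mathrm{span}(V_k)$, so the two are row-orthogonal and Pythagoras gives
\[
\|A - A\bR X^*\|_F^2 \;=\; \|A_{-k}\|_F^2 \;+\; \bigl\|A_{-k}\bR(V_k^T\bR)^{\dagger}\bigr\|_F^2,
\]
using orthonormality of the rows of $V_k^T$ to drop it from the second term. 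So it suffices to show the second term is at most $\varepsilon\|A_{-k}\|_F^2$ with probability $\ge 1-\delta$.

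For the second term, I would use two standard properties of an $m = \Theta(k\log(1/\delta)/\varepsilon)$-column random sign matrix $\bR$. Property (i), a subspace embedding bound for the $k$-dimensional subspace $\mathrm{span}(V_k)$: with probability $\ge 1-\delta/2$, every singular value of $V_k^T\bR/\sqrt{m}$ lies in $[1/2,\,3/2]$, so $\|(V_k^T\bR)^{\dagger}\|_2^2 \le 4/m$. Property (ii), approximate matrix multiplication: writing the expression as $A_{-k}\bR\bR^T V_k \cdot (V_k^T\bR\bR^T V_k)^{-1}$ and using $A_{-k}V_k = 0$, the AMM second-moment bound gives $\E\|A_{-k}\bR\bR^T V_k\|_F^2 \lesssim m\|A_{-k}\|_F^2\|V_k\|_F^2 = mk\|A_{-k}\|_F^2$. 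Combining via submultiplicativity with the eigenvalue lower bound $\lambda_{\min}(V_k^T\bR\bR^T V_k) \gtrsim m$ from (i), an expectation-level estimate yields $\|A_{-k}\bR(V_k^T\bR)^{\dagger}\|_F^2 \lesssim k\|A_{-k}\|_F^2/m$. Plugging in $m = \Theta(k\log(1/\delta)/\varepsilon)$ and union-bounding with $\calE_1$ finishes the proof.

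The main obstacle will be upgrading the expectation-level AMM bound to a high-probability bound with only a logarithmic dependence on $1/\delta$: a crude Markov step forces $m = \Theta(k/(\varepsilon\delta))$ instead of the claimed $\Theta(k\log(1/\delta)/\varepsilon)$. The fix is to apply a sharper tail bound for the Rademacher quadratic form $\|A_{-k}\bR\bR^T V_k\|_F^2$, either via Hanson--Wright / decoupling for the chaos, or via an $\varepsilon$-net argument over the $k$-dimensional image of $V_k$. Once one of these refined tail bounds is in place, the rest of the argument is routine linear algebra and the two random events combine by a union bound.
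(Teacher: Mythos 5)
The statement you are asked to prove is not actually proved in the paper — it is quoted verbatim from \cite{clarkson2009numerical} (Theorem~4.2 there, modulo transposing $A$), and the paper uses it as a black box. So the relevant comparison is with the Clarkson–Woodruff proof, and your reconstruction does follow the same Sarl\'os-style route that they use: decompose $A = A_k + A_{-k}$, choose the explicit rank-$k$ candidate $X^* = (V_k^T\bR)^{\dagger}V_k^T$, use row-orthogonality of $A_{-k}$ and $A_{-k}\bR(V_k^T\bR)^{\dagger}V_k^T$ to reduce to bounding $\|A_{-k}\bR(V_k^T\bR)^{\dagger}\|_F^2$, then split into a subspace-embedding event for $V_k^T\bR$ and an approximate-matrix-multiplication bound on $\|A_{-k}\bR\bR^T V_k\|_F$. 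All of your algebra checks out (including dropping $V_k^T$ by orthonormality of its rows, and the identity $(V_k^T\bR)^{\dagger} = \bR^T V_k (V_k^T\bR\bR^T V_k)^{-1}$ on the full-row-rank event).

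You correctly flag the one non-routine step: the second-moment AMM bound plus Markov gives only $m = \Theta(k/(\varepsilon\delta))$, not $\Theta(k\log(1/\delta)/\varepsilon)$. This is precisely where the Clarkson–Woodruff proof departs from the baseline Sarl\'os argument: their Section~2 proves a higher-moment tail bound for AMM with sign matrices — essentially a Khintchine/moment-generating-function estimate showing $\Pr[\|A\bS\bS^T B - AB\|_F > \eta\|A\|_F\|B\|_F] \le \delta$ already with $m = O(\log(1/\delta)/\eta^2)$ — and applies it with $\eta = \Theta(\sqrt{\varepsilon/k})$ (since $\|V_k\|_F = \sqrt{k}$), which is exactly how the $k\log(1/\delta)/\varepsilon$ count arises. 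Your two proposed fixes (a Hanson–Wright bound on the Rademacher chaos, or an $\varepsilon$-net over the $k$-dimensional range of $V_k$) are both plausible alternatives, but note that either route still needs to be run jointly with the subspace-embedding event for the \emph{same} $\bR$, which you handle correctly by a union bound rather than conditioning; make sure whichever tail bound you use is stated unconditionally in $\bR$ so the union bound applies cleanly. Modulo writing out that tail bound, your proposal reconstructs the cited proof faithfully.
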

Using the affine embedding result of \cite{clarkson2017low}, if $\bL$ is now a leverage score sampling matrix, meaning that $\bL$ is a diagonal matrix with the entry $1/\sqrt{p_i}$ for the rows that are sampled by the leverage score sampling algorithm as in Theorem~\ref{thm:sensitivity-sampling}, then for all matrices $X$ 
\begin{align*}
    \frnorm{\bL A\bR X - \bL A}^2 = (1 \pm \varepsilon)\frnorm{A\bR X - A}^2.
\end{align*}
Hence, if 
\begin{align*}
    \tilde{X} = \argmin_{\text{rank-}k\, X}\frnorm{\bL A\bR X - \bL A }^2,
\end{align*}
then $\frnorm{A\bR\tilde{X} - A}^2 \le (1+O(\varepsilon))\min_{\text{rank-}k\, B}\frnorm{A-B}^2$ which implies that $\frnorm{A(I-\Proj_{\text{rowspace}(\bR\tilde{X})})}^2 \le (1+O(\varepsilon))\min_{\text{rank-k}\, B}\frnorm{A-B}^2$.

Thus, if $\bR$ is a random sign matrix with $O(k\log(1/\delta)/\varepsilon)$ rows, then $\sk_{t,\gamma}(\calA.\vals \cdot \bR)$, along with the corresponding rows in $\calA.\vals$ for the rows in $\sk_{t,\gamma}(\calA.\vals \cdot \bR)$, can be used to compute a $1+\varepsilon$ approximation to the low rank approximation problem. As the matrix $\calA.\vals \cdot \bR$ has only $\tilde{O}(k)$ rows, the composable sketch $\sk_{t, \gamma}(\calA. \vals \cdot \bR)$ has a number of rows that depends only on $k$ as well.
\\\\
{\bf Acknowledgments:} Praneeth Kacham and David P. Woodruff were supported in part by Google Research and a Simons Investigator Award. Part of this work was also done while David P. Woodruff was visiting the Simons Institute for the Theory of Computing.

\bibliographystyle{alpha}
\bibliography{ref}

\appendix
\section{Gap in the analysis of \texorpdfstring{\cite{kannan2014principal}}{KVW}}\label{app:bug}
In Theorem~1.6 of \cite{kannan2014principal}, the authors claim an $F_k$ estimation algorithm that uses $\tilde{O}(\varepsilon^{-3}(s^{k-1} + s^3)(\ln s)^3)$ bits of total communication. In the proof of Theorem~1.6, in the inequalities used to bound the quantity $\E(Y^2)/(\E(Y))^2$, the last inequality seems to use that $\rho_i \ge \beta/e^{\varepsilon}$ but the inequality holds only when $i \in S_{\beta}$ (in their notation). But the question of if $i \in S_{\beta}$ is exactly what they are trying to find out from the analysis and hence it cannot be assumed that $\rho_i \ge \beta/e^{\varepsilon}$.
\section{Huber Loss Function}
Given a parameter $\tau$, the Huber loss function $f$ is defined as $f(x) = x^2/(2\tau)$ if $|x| \le \tau$ and $f(x) = |x| - \tau/2$ if $|x| \ge \tau$. In this work we consider only the values of $x \ge 0$. We will now examine various properties of the Huber loss function.
\subsection{Super-additivity}
One can verify that the huber loss function is convex and $f(0) = 0$. Consider arbitrary $x, y \ge 0$. Since $f$ is convex in the interval $[0, x+y]$, we get
\begin{align*}
    f(x) \le \frac{x}{x+y}f(x+y)\quad
    \text{and}\quad
    f(y) \le \frac{y}{x+y}f(x+y).
\end{align*}
Adding both the inequalities, we get $f(x) + f(y) \le f(x+y)$. Thus we have the following lemma:
\begin{lemma}
    If $f$ is convex and $f(0) = 0$, then for any $x, y \ge 0$, we have $f(x+y) \ge f(x) + f(y)$.
\end{lemma}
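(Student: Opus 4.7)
The plan is to invoke convexity of $f$ on the interval $[0, x+y]$ twice, once for $x$ and once for $y$, leveraging crucially that the endpoint value $f(0) = 0$. First I would dispense with the degenerate case $x = y = 0$, in which the inequality reduces to $f(0) \ge 2f(0)$, which holds with equality because $f(0) = 0$. For the remaining case I may assume $x + y > 0$, so division by $x+y$ is legitimate.

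Next I would write $x$ and $y$ as explicit convex combinations of $0$ and $x+y$: namely $x = \tfrac{x}{x+y}(x+y) + \tfrac{y}{x+y}\cdot 0$ and symmetrically $y = \tfrac{y}{x+y}(x+y) + \tfrac{x}{x+y}\cdot 0$. Applying convexity of $f$ and using $f(0) = 0$ gives $f(x) \le \tfrac{x}{x+y}f(x+y)$ and $f(y) \le \tfrac{y}{x+y}f(x+y)$. Summing the two inequalities and noting $\tfrac{x}{x+y} + \tfrac{y}{x+y} = 1$ yields $f(x) + f(y) \le f(x+y)$, which is exactly the super-additivity claim.

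There is no real obstacle here — the entire content of the argument has already been sketched in the paragraph immediately preceding the lemma statement. The only care needed is the separation of the $x+y = 0$ case (to avoid a $0/0$) and the explicit appeal to $f(0) = 0$ when canceling the $f(0)$ terms in the convex combination inequalities. No further properties of $f$ beyond convexity on $[0,\infty)$ and $f(0) = 0$ are required; in particular the lemma applies to the Huber loss for every parameter $\tau$ (since it is convex on $\R_{\ge 0}$ and vanishes at $0$) and to $f(x) = x^k$ for $k \ge 1$, which is the generality invoked in the main estimation theorem.
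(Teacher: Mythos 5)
Your proof is correct and follows exactly the paper's argument: express $x$ and $y$ as convex combinations of $0$ and $x+y$, apply convexity together with $f(0)=0$ to get $f(x)\le \tfrac{x}{x+y}f(x+y)$ and $f(y)\le\tfrac{y}{x+y}f(x+y)$, and add. The explicit handling of the $x=y=0$ case is a minor (and welcome) addition not spelled out in the paper, but the substance is identical.
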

\subsection{Bounding \texorpdfstring{$c_{f}[s]$}{cfs}}
We note that when $f$ denotes the Huber loss function with parameter $\tau$, the function $\sqrt{f}$ is concave on the interval $[0, \infty)$. Now consider arbitrary $x_1, \ldots, x_s \ge 0$. By concavity of $\sqrt{f}$ in the interval $[0, x_1 + \cdots + x_s]$, we get
\begin{align*}
    \sqrt{f(x_j)} \ge \frac{x_j}{x_1 + \cdots + x_s}\sqrt{f(x_1 + \cdots + x_s)}
\end{align*}
for all $j = 1, \ldots, s$. By adding all the inequalities,
\begin{align*}
    \sqrt{f(x_1)} + \cdots + \sqrt{f(x_s)} \ge \sqrt{f(x_1 + \cdots + x_s)}
\end{align*}
which implies
\begin{align*}
    f(x_1 + \cdots + x_s) \le \left(\sqrt{f(x_1)} + \cdots + \sqrt{f(x_s)}\right)^2
\end{align*}
and therefore that $c_f[s] \le s$ when $f$ is the Huber loss function.
\section{Derandomizing Exponential Random Variables using Nisan's PRG}\label{sec:nisan}
Our algorithm for estimating higher-order correlations assumes that we have access to $O(n^k)$ independent exponential random variables, which raises the question how these are stored since the protocol later requires the values of these random variables. We now argue that Nisan's PRG \cite{nisan1992pseudorandom} can be used to derandomize the exponential random variables and that all the required exponential random variables can be generated using a short seed of length $O(k\log^2 (n/\varepsilon))$. 

Note that using $O(k\log(n/\varepsilon))$ bits of precision, we can sample from a discrete distribution that approximates the continuous exponential random variables up to a $1 \pm \varepsilon$ factor since by a simple union bound if we sample $O(n^k/\varepsilon^2)$ exponential random variables, then they all lie in the interval $[\poly(\varepsilon) \cdot n^{-O(k)}, O(k\log n/\varepsilon)]$ with a $1-1/\poly(n)$ probability. Let $b = O(k\log (n/\varepsilon))$. We can use $b$ uniform bits to sample from this discrete distribution and store the sampled discrete random variables using $b$ bits as well while ensuring that the discrete random variable has all the properties we use of the continuous random variable. Let $\be$ be the random variable drawn from this discrete distribution and let $\be_1, \ldots, \be_n$ independent copies of this discrete random variable. Since the distribution of $\be$ is obtained by discretizing the continuous exponential random variable into powers of $1 + \varepsilon/4$, we have that $\max(f_i/\be_i)$ has the same distribution of $(\sum_i f_i)/\be$ up to a $1 \pm \varepsilon/4$ factor and with probability $\ge 1-1/\poly(n)$, $\sum_i \be_i^{-1}f_i \le (C\log^2 n) \cdot \max_i f_i/\be_i$ still holds with a slightly larger value of $C$.

To run the protocol for approximating higher-order correlations, we need to generate the same $r' = O(\binom{n}{k} \cdot k! \cdot \varepsilon^{-2})$ exponential random variables at all the servers and the coordinator. Suppose that the exponential random variables are generated using a random string of length $b \cdot r'$ as follows: we use the first $b \cdot \binom{n}{k} \cdot k!$ bits to generate the first set of exponential random variables, one for each coordinate of the form $(i_1, \ldots, i_k)$ for distinct $i_1, \ldots, i_k \in [n]$. We use the second set of $b \cdot \binom{n}{k}k!$ random bits to generate second set of exponential random variables and so on we generate $m = O(1/\varepsilon^2)$ sets of exponential random variables necessary for implementing the protocol. Let $\be^{(t)}_{(i_1, i_2, \ldots, i_k)}$ be the discrete exponential random variable corresponding to the coordinate $(i_1, \ldots, i_k)$ in the $t$-th set of random variables.

Let $f$ be a vector with coordinates of the form $(i_1, \ldots, i_k)$ for distinct $i_1, \ldots, i_k \in [n]$. Now consider the following simple ``small-space'' algorithm \textsf{Alg}. It makes a pass on the length $b \cdot r$ string reading $b$ blocks at a time and maintains the following counts:
\begin{enumerate}
    \item \texttt{CountLess}: The number of values $t$ such that \[\max_{i_1, \ldots, i_k} (\be_{(i_1, \ldots, i_k)}^{(t)})^{-1}f_{(i_1, \ldots, i_k)} \ge (1 + \varepsilon)\frac{\sum_{i_1, \ldots, i_k} f_{(i_1,\ldots, i_k)}}{\ln 2},\]
    \item \texttt{CountMore}: The number of values $t$ such that \[\max_{i_1, \ldots, i_k} (\be_{(i_1, \ldots, i_k)}^{(t)})^{-1}f_{(i_1, \ldots, i_k)} \le (1 - \varepsilon)\frac{\sum_{i_1, \ldots, i_k} f_{(i_1,\ldots, i_k)}}{\ln 2},\]
    \item \texttt{CountHeavy}: The number of values $t$ such that 
    \[\sum_{(i_1, \ldots, i_k)}(\be^{(t)}_{(i_1, \ldots, i_k)})^{-1}f_{(i_1, \ldots, i_k)} \le (C\log^2 n)\max_{i_1, \ldots, i_k}(\be_{(i_1, \ldots, i_k)}^{(t)})^{-1}f_{i_1, \ldots, i_k}.\]
\end{enumerate}
Note that the algorithm can keep track of all these random variables only using $O(b)$ bits of space as follows: When processing the first set of discrete exponential random variables, the algorithm keeps track of cumulative sum and cumulative max corresponding to those set of random variables and at the end updates the counts appropriately. It discards the stored cumulative sum and cumulative max values and starts processing the second set of random variables and so on. 

We now note using the properties of continuous exponential random variables that when the discrete exponential random variables are sampled in the above defined manner using a fully random string, then with probability $\ge 99/100$, using the union bound over the properties of continuous exponential random variables, the following happen:
\begin{enumerate}
    \item $\texttt{CountLess} < m/2$,
    \item $\texttt{CountMore} < m/2$, and
    \item $\texttt{CountHeavy} = m$.
\end{enumerate}
The first two properties from the fact that the median of $O(1/\varepsilon^2)$ independent copies of the random variable $(\sum_{(i_1, \ldots, i_k)} f_{(i_1, \ldots, i_k)})/\be$ concentrates in the interval $[(1-\varepsilon)\sum_{(i_1, \ldots, i_k)} f_{(i_1, \ldots, i_k)}/\ln 2,(1 + \varepsilon)\sum_{(i_1, \ldots, i_k)} f_{(i_1, \ldots, i_k)}/\ln 2]$ and hence both the counts are at most $m/2$. The third property follows from using a union bound on the event in Lemma~\ref{lma:max-is-significant}.
Since the algorithm $\textsf{Alg}$ uses only a space of $O(b)$ bits and the number of required random bits is $\exp(b)$, if the discrete exponential random variables are constructed using a pseudorandom string drawn from Nisan's PRG with a seed length of $O(b^2)$ bits, the above properties continue to hold with probability $\ge 98/100$. Hence, the protocol run with discrete exponential random variables constructed using Nisan's PRG continues to succeed in outputting a $1 \pm \varepsilon$ approximation to the higher-order correlation defined by the functions $f$ and $g$ also succeeds with probability $\ge 98/100$.
\end{document}